\newcommand{\RR}{\mathbb{R}}
\newcommand{\NN}{\mathbb{N}}
\newcommand{\G}{\mathcal{G}}
\newcommand{\D}{\mathcal{D}} 
\newcommand{\R}{\mathcal{R}} 
\newcommand{\C}{\mathcal{C}} 
\newcommand{\E}{\mathcal{E}} 
\newcommand{\T}{\mathcal{T}} 
\newcommand{\NT}{\mathcal{NT}} 
\newcommand{\ICP}{\mathcal{ICP}}
\newcommand{\mS}{\mathcal{S}} 
\newcommand{\intv}[2]{\left \{ #1, \dots, #2 \right \}}
\DeclareMathOperator\ply{ply}
\DeclareMathOperator\diam{diam}
\DeclareMathOperator\grid{\boxplus}
\DeclareMathOperator\fvsp{\textsf{fvs}}
\DeclareMathOperator\tw{\textsf{tw}}
\DeclareMathOperator\CPi{\mathcal{P}}
\newcommand{\trh}{\textsc{Triangle Hitting}\xspace}
\newcommand{\wtrh}{\textsc{Weighted Triangle Hitting}\xspace}
\newcommand{\fvs}{\textsc{Feedback Vertex Set}\xspace}
\newcommand{\oct}{\textsc{Odd Cycle Transversal}\xspace}
\newcommand{\FVS}{FVS\xspace}
\newcommand{\PSEUDO}{\textsc{Pseudo Forest Del}\xspace}
\newcommand{\PtH}{P$_t$-\textsc{Hitting}\xspace}
\renewcommand{\TH}{TH\xspace}
\newcommand{\OCT}{OCT\xspace}
\newcommand{\THREESAT}{3-SAT\xspace}
\newcommand{\PTHREESAT}{\textsc{Planar-3-SAT}\xspace}
\newcommand{\ruleref}[1]{(\hyperref[#1]{R\ref*{#1}})}
\renewcommand{\O}{\mathcal{O}}
\renewcommand{\o}{o}
\newcommand{\Ostar}{\O^*}
\DeclareMathOperator\Nstar{N^{\star}}
\DeclareMathOperator\NSstar{N_\mS^{\star}}
\DeclareMathOperator{\ddstar}{\mu^{\Nstar}\!}
\DeclareMathOperator\lr{lr}
\newcommand{\NP}{NP\xspace}
\newcommand{\ETH}{ETH\xspace}
\newcommand{\SQGM}{SQGM\xspace}
\newcommand{\ASQGM}{ASQGM\xspace}
\newcommand{\dDIR}{$d$-DIR\xspace}
\newcommand{\DEUXDIR}{2-DIR\xspace}
\newcommand{\CONTACTSEG}{contact-segment\xspace}
\newcommand{\CONTACTDEUXDIR}{contact-\DEUXDIR{}}
\def\cqedsymbol{\ifmmode$\lrcorner$\else{\unskip\nobreak\hfil
\penalty50\hskip1em\null\nobreak\hfil$\lrcorner$
\parfillskip=0pt\finalhyphendemerits=0\endgraf}\fi} 
\newcommand{\cqed}{\renewcommand{\qed}{\cqedsymbol}}
\title{Subexponential algorithms in geometric graphs via the subquadratic grid minor property: the role of local radius}
\titlerunning{Subexp. algo. in geometric graphs via the SQGM property: the role of local radius}
\author{Gaétan Berthe}{LIRMM, Université de Montpellier, CNRS, Montpellier, France.}{}{https://orcid.org/0000-0003-0017-6922}{}
\author{Marin Bougeret}{LIRMM, Université de Montpellier, CNRS, Montpellier, France.}{}{https://orcid.org/0000-0002-9910-4656}{}
\author{Daniel Gonçalves}{LIRMM, Université de Montpellier, CNRS, Montpellier, France.}{}{https://orcid.org/0000-0003-3228-9622}{}
\author{Jean-Florent Raymond}{CNRS, LIP, Lyon, France.}{}{https://orcid.org/0000-0003-4646-7602}{Supported by the ANR project GRALMECO (ANR-21-CE48-0004).}
\authorrunning{G. Berthe, M. Bougeret, D. Gonçalves, J-.F. Raymond}
\keywords{geometric intersection graphs, subexponential FPT algorithms, cycle-hitting problems, bidimensionality} 
\begin{document}

\maketitle
\begin{abstract}
We investigate the existence in geometric graph classes of subexponential parameterized algorithms for cycle-hitting problems like \textsc{Triangle Hitting} (TH), \textsc{Feedback Vertex Set} (FVS) or \textsc{Odd Cycle Transversal} (OCT). These problems respectively ask for the existence in a graph $G$ of a set $X$ of at most $k$ vertices such that $G-X$ is triangle-free, acyclic, or bipartite. It is know that subexponential FPT algorithms of the form $2^{o(k)}n^{\O(1)}$ exist in planar and even $H$-minor free graphs from bidimensionality theory [Demaine et al. 2005], and there is a recent line of work lifting these results to geometric graph classes consisting of intersection of similarly sized ``fat'' objects ([Fomin et al. 2012]\nocite{Fomin12}, [Grigoriev et al. 2014], or disk graphs [Lokshtanov et al. 2022], [An et al. 2023]). 

In this paper we first identify sufficient conditions, for any graph class $\C$ included in string graphs, to admit subexponential FPT algorithms for any problem in $\CPi$, a family of bidimensional problems where one has to find a set of size at most $k$ hitting a fixed family of graphs, containing in particular \FVS. Informally, these conditions boil down to the fact that for any $G \in \C$, the \emph{local radius of $G$} (a new parameter introduced in [Lokshtanov et al. 2023]) is polynomial in the clique number of $G$ and in the maximum matching in the neighborhood of a vertex.
To demonstrate the applicability of this generic result, we bound the local radius for two special classes: 
intersection graphs of axis-parallel squares and of contact graphs of segments in the plane. This implies that any problem $\Pi \in \CPi$ (in particular, \FVS) can be solved in:
\begin{itemize}
  \item $2^{\O(k^{3/4}\log k)}n^{\O(1)}$-time in contact segment graphs,
    \item $2^{\O(k^{9/10}\log k)}n^{\O(1)}$ in intersection graphs of axis-parallel squares
\end{itemize}
On the positive side, we also provide positive results for \TH by solving it in: 
\begin{itemize}
    \item $2^{\O(k^{3/4}\log k)}n^{\O(1)}$-time in contact segment graphs,
    \item $2^{\O(\sqrt d t^2 (\log t)  k^{2/3}\log k)} n^{\O(1)}$-time in $K_{t,t}$-free \dDIR graphs (intersection of segments with at most $d$ possible slopes)
  \end{itemize}

On the negative side, assuming the \ETH we rule out the existence of algorithms solving:
\begin{itemize}
    \item \TH and \OCT in time $2^{o(n)}$ in \DEUXDIR{} graphs and more generally in time $2^{o(\sqrt{\Delta n})}$ in \DEUXDIR{} graphs with maximum degree $\Delta$, and
    \item \TH, \FVS, and \OCT in time $2^{o(\sqrt{n})}$ in $K_{2,2}$-free contact-\DEUXDIR{} graphs of maximum degree~6.
\end{itemize}
Observe that together, these results show that the absence of large $K_{t,t}$ is a necessary and sufficient condition for the existence of subexponential FPT algorithms for \TH in \DEUXDIR{}.
\end{abstract}

\section{Introduction}
\label{sec:intro}

In this paper we consider fundamental \NP-hard cycle-hitting problems like \trh{} (\TH), \fvs{} (\FVS), and \oct{} (\OCT) where, given a graph $G$ and an integer $k$, the goal is to decide whether $G$ has a set of at most $k$ vertices hitting all its triangles (resp. cycles for \FVS, and odd cycles for \OCT).
We consider these problems from the perspective of parameterized complexity,
where the objective is to answer in time $f(k)n^{\O(1)}$ for some computable function $f$, and with $n$ denoting the order of~$G$. It is known (see for instance~\cite{Cygan2015Book}) that these three problems can be solved on general graphs in time $c^{\O(k)}n^{\O(1)}$ (for some constant $c$) and that, under the Exponential Time Hypothesis (\ETH), the contribution of $k$ cannot be improved to a subexponential function (i.e., there are no algorithms with running times of the form $c^{o(k)}n^{\O(1)}$ for these problems).
However, it was discovered that some problems admit subexponential time algorithms in certain classes of graphs, and there is now  a well established set of techniques to design such algorithms.
Let us now review these techniques and explain why they do not apply on the problems we consider here.

\subparagraph*{Subexponential FPT algorithms in sparse graphs.}
Let us start with the bidimensionality theory, which gives an explanation on the so-called \emph{square root phenomenon} arising for planar and $H$-minor free graphs~\cite{demaine2005subexponential} for bidimensional\footnote{Informally: yes-instances are minor-closed and a solution on the $(r,r)$-grid has size $\Omega(r^2)$.} problems, where a lot of graph problems admit \ETH-tight $2^{\O(\sqrt{k})}n^{\O(1)}$ algorithms.
What we call a \emph{graph parameter} here is a function $p$ mapping any (simple) graph to a natural number and that is invariant under isomorphism.
The classical win-win strategy to decide if $p(G)\le k$ for a minor-bidimensional\footnote{See definition in~\cite{grigoriev2014bidimensionality}.} parameter (like $p=\fvsp$, the size of a minimum feedback vertex set of $G$) is to first reduce to the case where $\grid(G)=\O(\sqrt{k})$ (where $\grid(G)$ denotes the maximum $k$ such that the $(k,k)$-grid is contained as a minor in $G$), 
and then use an inequality of the form $\tw(G) \le f(\grid(G))$ to bound the treewidth obtained through the following property.

\begin{definition}[\cite{baste2022contraction}]
Given $c < 2$, a graph class $\G$ has the \emph{subquadratic grid minor property} for $c$ (\emph{SQGM} for short), denoted $\G \in SQGM(c)$, if $\tw(G)=\O(\grid(G)^c)$ for all $G\in \G$.
We write $\G \in SQGM$ if there exists $c<2$ such that $\G \in \SQGM(c)$.
\end{definition}
While in general every graph $G$ satisfies the inequality $\tw(G) \leq \grid(G)^c$ for some $c<10$~\cite{chuzhoy2021towards}, the SQGM property additionally require that $c<2$.
Thus, for any $\G \in SQGM(c)$ and $G\in \G$ such that $\grid(G)=\O(\sqrt{k})$, we get $\tw(G) \le \grid(G)^c = \O\left (k^{c/2}\right )=o(k)$.
For instance planar graphs and more generally $H$-minor free graph~\cite{demaine2008linearity} are known to have a treewidth linearly bounded from above by the size of their largest grid minor. In other words, these classes belong to $SQGM(1)$.
The conclusion is that the SQGM property allows subexponential parameterized algorithms for minor-bidimensional problems (if the considered problem has a $2^{\O(\tw(G))}n^{\O(1)}$-time algorithm)
on sparse graph classes.
Notice that these techniques have been extended to contraction-bidimensional problems~\cite{baste2022contraction}.

\subparagraph*{Extension to geometric graphs.}
\label{subsec:extending}
Consider now a geometric graph class $\G$, meaning that any $G\in \G$ represents the interactions of some specified geometric objects. We consider here (Unit) Disk Graphs which correspond to intersection of (unit) disks in the plane, \dDIR{} graphs (where the vertices correspond to segments with $d$ possible slopes in $\mathbb{R}^2$), and \CONTACTSEG{} graphs (where each vertex corresponds to a segment in $\mathbb{R}^2$, and any intersection point between two segments must be an endpoint of one of them). We refer to \autoref{sec:graphCl} for formal definitions. 
Classes of geometric graphs represented in the plane form an appealing source of candidates to obtain subexponential parameterized algorithms as there is an underlying planarity in the representation. However these graphs are no longer sparse as they may contain large cliques, and thus cannot have the \SQGM property. Indeed, if $G$ is a clique of size $a$, then $\tw(G)= a-1$ but $\grid(G) \le \sqrt{|G|} = \sqrt{a}$.
To overcome this, let us introduce the following notion where the bound on treewidth is allowed to depend on an additional parameter besides $\grid(G)$.

\begin{definition}
Given a graph parameter $p$ and a real $c < 2$, a graph class $\G$ has the \emph{almost subquadratic grid minor property} (\emph{ASQGM} for short) for $p$ and $c$ if there exists a function $f$ such that $\tw(G)=\O(f(p(G))\grid(G)^c)$.
The class $\G$ has $\ASQGM(p)$ if there exists $c<2$ such that $\G$ has the \ASQGM property for $p$ and $c$. The notation is naturally extended to more than one parameter.
\end{definition}

This notion was used implicitly in earlier work (e.g.,\ \cite{grigoriev2014bidimensionality}) but we chose to define it explicitly in order to highlight the contribution $f$ of the parameter $p$ to the treewidth, which is particularly relevant when it can be shown to be small (typically, polynomial).
Let us now explain how \ASQGM can be used to obtain subexponential parameterized algorithms on geometric graphs.

It was shown in \cite{fomin2018excluded} that \FVS can be solved in time $2^{\O(k^{3/4}\log k)}n^{\O(1)}$ in map graphs, a superclass of planar graphs where arbitrary large cliques may exist, as follows. Let $\omega(G)$ denote the order of the largest clique in a graph $G$.
The first ingredient is to prove that map graphs have $\ASQGM(\omega)$, and more precisely that $\tw(G) = \O(\omega(G)\grid(G))$.
Then, if $\omega(G)\ge k^\epsilon$ for some $\epsilon$, the presence of such large clique allows to have subexponential branchings (as a solution of \FVS must take almost all vertices of a clique).
When $\omega(G) < k^\epsilon$, then the \ASQGM property gives that $\tw(G) \le k^{\epsilon} \grid(G) \le k^{\frac{1}{2}+\epsilon}$ (as before we can immediately answer no if $\grid(G) > \O(\sqrt{k})$). 
By appropriately choosing $\epsilon$ the authors of \cite{fomin2018excluded} obtain the mentioned running time.
The same approach also applies to unit disk graphs and has since been improved to $2^{\sqrt{k}\log k}n^{\O(1)}$ in~\cite{fomin2019finding} using a different technique, and finally improved to an optimal $2^{\sqrt{k}}(n+m)$ in~\cite{an2021feedback} for similarly sized fat objets (which typically includes unit squares, but not disks, squares, nor segments).

There is also a line of work aiming at establishing \ASQGM property for different classes of graphs and parameters, with for example~\cite{grigoriev2014bidimensionality} proving that (1) string graphs have \ASQGM when the parameter $p$ is the number of times a string is intersected (assuming at most two strings intersect at the same point),  and that (2) intersection graphs of ``fat'' and convex objects have \ASQGM when the parameter $p(G)$ is the minimal order of a graph $H$ not subgraph of $G$ (generalizing the degree when $H$ is a star).

\subparagraph*{When \texorpdfstring{$ASQGM(\omega)$}{ASQGM(ω)} does not hold.}

A natural next step for \FVS and \TH is to consider classes that are not $ASQGM(\omega)$. Observe (see \autoref{fig:not_asqgm_intro}) that neither disk graphs, nor \CONTACTDEUXDIR{} graphs are in $ASQGM(\omega)$, and thus constitute natural candidates.
\begin{figure}[!ht]
    \centering
    \includegraphics[width=0.8\textwidth]{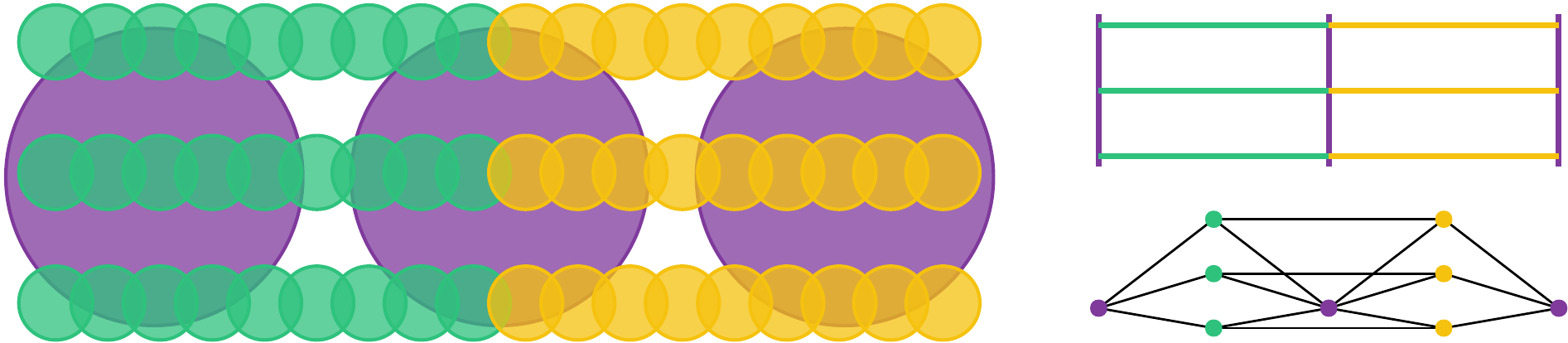}
    \caption{Left: a representation of a disk graph. Right: a contact \DEUXDIR{} graph and the corresponding graph. In these graphs (where the left one is from~\cite{fomin2018excluded}), $\omega(G)$ is constant, $\tw(G) \ge t$ (where $t=3$ here) as it contains $K_{t,t}$ as a minor, and $\grid(G) = \O(\sqrt{t})$  as they have a feedback vertex set of size at most $t$.}
    \label{fig:not_asqgm_intro}
\end{figure}

New ideas allowed the authors of~\cite{lokSODA22} to obtain subexponential parameterized algorithms on disk graphs, in particular for \TH and \FVS.
The first idea is a preliminary branching step (working on general graphs) which given an input $(G,k)$ first reduces to the case where we are given a set $M$ of size $\O(k^{1+\epsilon})$ such that $G - M$ is a forest and, for any $v \in M$, $N(v) \setminus M$ is an independent set
(corresponding to \autoref{cor:bothbranchings}, but where we consider a generic problem instead of \FVS).
The second idea is related to neighborhood complexity which, informally, measures the number of ways the vertices of $G-X$ connect to the vertices of $X$ for every vertex set $X$ (see \autoref{def:neighc} for a formal definition).
The following theorem was originally formulated using ply (the maximum number of disks containing a fixed point) instead of clique number, but it is known~\cite{bonamy2018eptas} that these two values are linearly related in disk graphs.

\begin{theorem}[Theorem 1.1 in \cite{lokSODA22}]
\label{thm:DGncomplexity}
Disk graphs with bounded clique number have linear neighborhood complexity.
\end{theorem}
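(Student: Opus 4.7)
Fix a disk realization of $G$ with centers $c_v$ and radii $r_v$, and let $X\subseteq V(G)$; the goal is to bound the quantity $|\{N(v)\cap X \colon v \in V(G)\setminus X\}|$ by $f(\omega)\cdot |X|$ for some function $f$. For each $v\notin X$ I would split its trace into a \emph{large part} $L_v = \{u\in N(v)\cap X \colon r_u \ge r_v\}$ and a \emph{small part} $S_v = \{u\in N(v)\cap X \colon r_u < r_v\}$, and bound the numbers of distinct $L_v$'s and $S_v$'s separately.

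For the large part I would use a geometric lift: $u\in L_v$ iff $c_v \in B(c_u, r_u+r_v) \subseteq B(c_u, 2r_u)$, so $L_v$ is determined by the cell of the point $(c_v, r_v)$ in the arrangement in $\mathbb{R}^3$ of the truncated cones $\{(x,y,r)\colon |(x,y)-c_u| \le r_u+r,\ 0\le r\le r_u\}$ for $u\in X$. Bounded clique number of $G$ translates into depth at most $\omega$ in this arrangement, and a Clarkson--Shor-type shallow-cell-complexity bound for such pseudodisk-like families would then yield $O_\omega(|X|)$ distinct large traces.

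For the small part a uniform geometric encoding is harder because the ball around $c_v$ deciding membership depends on each individual $r_u$. I would therefore bucket the disks of $X$ by radius into dyadic bands $[2^i, 2^{i+1})$: inside one band the test ``$c_u$ lies in a ball of radius $\Theta(r_v)$ around $c_v$'' is essentially uniform, and bounded clique number ensures that the centers of one band form an $O(\omega)$-sparse point set at the relevant scale, which limits how the trace on that band can vary.

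The main obstacle, and the step I would spend the most time on, is aggregating the per-band bounds across the $\Theta(\log(r_{\max}/r_{\min}))$ buckets without losing a logarithmic factor. My plan here would be a continuous charging argument: as $c_v$ is moved in the plane, the trace $N(v)\cap X$ changes only when $c_v$ crosses the boundary of some expanded disk, and each such crossing can be charged to a face of a (possibly lifted) shallow pseudodisk arrangement whose total complexity is $O_\omega(|X|)$. Summing these events across all buckets, rather than bucket-by-bucket, would give the desired linear bound and absorb the logarithmic factor into the $O_\omega(1)$ constant.
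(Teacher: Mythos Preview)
This statement is not proved in the present paper at all: it is quoted verbatim as Theorem~1.1 of \cite{lokSODA22} and used as a black box in the introduction. There is therefore no ``paper's own proof'' to compare your proposal against; the proof lives entirely in the cited reference.

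As for the proposal itself, the overall architecture (orienting by radius, encoding the larger-neighbor trace by a cell in a lifted arrangement, handling smaller neighbors separately) is a natural line of attack, but two steps are genuine gaps rather than routine details.

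First, the depth bound for the truncated-cone arrangement is not what you claim. A point $(x,y,r)$ lying in many cones means that a \emph{hypothetical} disk of radius $r$ at $(x,y)$ intersects many larger disks of $X$; those larger disks are not forced to pairwise intersect, so they need not form a clique in $G$, and the depth is not bounded by~$\omega$. What is true is that a disk has at most a constant number of pairwise non-adjacent larger neighbors; combined with a $\chi$-boundedness argument for disk graphs this can be turned into an $O(\omega)$ bound on $|L_v|$ itself (so the arrangement is unnecessary for the large part), but you have to say this, and your Clarkson--Shor sentence as written does not go through.

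Second, the small-part analysis is where essentially all the content of the theorem lies, and your sketch does not give a mechanism that avoids the logarithmic loss. Dyadic bucketing plus a per-band $O_\omega(|X_i|)$ bound sums to $O_\omega(|X|)$ only if you charge to $|X_i|$ rather than to $|X|$; your text switches between the two. The ``continuous charging'' paragraph is a statement of hope rather than an argument: you have not identified what combinatorial object absorbs the charge across scales, and the boundary-crossing events you describe are exactly the ones that naively give a $\log$ factor. In the original proof of \cite{lokSODA22} this is handled by a careful structural argument specific to disks (relating ply, local arrangement complexity, and a layering of $X$), and your outline does not yet contain a substitute for that step.
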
 
For \TH, these two ideas are sufficient to obtain a subexponential parameterized algorithm.
For \FVS, \cite{lokSODA22} provides the following corollary.


\begin{corollary}[Corollary 1.1 in~\cite{lokSODA22} restricted to \FVS]
\label{cor:DGsmalltw}
Let $G$ be a disk graph with a (non-necessarily minimal) feedback vertex set $M \subseteq V(G)$ such that for
all $v \in M$, $N(v) \setminus M$ is an independent set, and such that for all  $v\in V(G) \setminus M$, $N(v)\setminus M$ is non-empty. Then, the treewidth of $G$ is $\O(\sqrt{|M|}\omega(G)^{2.5})$.
\end{corollary}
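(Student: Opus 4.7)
The plan is to combine three ingredients: the forest structure of $G - M$, the bounded neighborhood complexity of disk graphs (\autoref{thm:DGncomplexity}), and a sublinear separator argument for disk graphs of bounded clique number.

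First, I would classify the vertices of $V(G) \setminus M$ by their neighborhoods in $M$. By \autoref{thm:DGncomplexity} applied with $X = M$, the number of distinct sets of the form $N_G(v) \cap M$ as $v$ ranges over $V(G) \setminus M$ is $\O(|M| \cdot \omega(G))$, yielding a partition $S_1, \dots, S_p$ with $p = \O(|M| \cdot \omega(G))$. Within each class $S_i$, the two structural hypotheses are decisive: since $G - M$ is a forest and each $v \in V(G) \setminus M$ has a neighbor in $V(G) \setminus M$, the adjacencies inside $S_i$ form at most a forest; and since $N(v) \setminus M$ is independent for every $v \in M$, no $M$-vertex is adjacent to several mutually adjacent $S_i$-vertices. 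A packing argument in the disk representation should then cap $|S_i|$ by a polynomial in $\omega(G)$, typically $\O(\omega(G)^2)$.

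Next, I would build a reduced auxiliary graph $H$ on $\O(|M| \cdot \omega(G))$ vertices consisting of $M$ together with a single representative per class $S_i$, and argue that $H$ admits a balanced separator of size $\O(\sqrt{|M| \cdot \omega(G)})$. Such a bound is plausible because disk graphs of bounded clique number behave, up to polynomial factors in $\omega(G)$, as planar graphs in terms of separation — a manifestation of their linear neighborhood complexity. Lifting the separator from $H$ back to $G$ replaces each representative by its class, blowing up the separator size by at most the maximum cluster size; the lifted separator has size $\O(\sqrt{|M| \cdot \omega(G)} \cdot \omega(G)^2) = \O(\sqrt{|M|} \cdot \omega(G)^{2.5})$. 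A standard recursive application then yields the stated treewidth bound.

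The main obstacle is the separator theorem for $H$: translating the geometric disk representation into a separator of the claimed size while losing only a polynomial factor in $\omega(G)$ requires careful accounting. The final exponent $2.5$ reflects a bookkeeping of one $\omega(G)^{1/2}$ factor from the count of neighborhood types, two powers of $\omega(G)$ from the maximum cluster size, and the $\sqrt{|M|}$ factor from the planar-like separator estimate applied to $H$. If this naive accounting does not match the exponent, one should tighten either the cluster size bound or the separator estimate via a layered decomposition of the disk representation, leveraging the fact that $G - M$ being a forest forces a very restricted ``spread'' of cliques in the plane.
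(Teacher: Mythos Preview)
First, note that the paper does not prove this corollary: it is quoted from~\cite{lokSODA22} as prior work in the introduction, with no argument supplied here. So there is no in-paper proof to compare against. That said, your sketch contains a genuine gap.

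The failure point is the assertion that each neighbourhood class $S_i$ has size $\O(\omega(G)^2)$. This is false. Take $M=\{m_1,m_2\}$ realised as two large disjoint disks (approximating the half-planes $x\le -1$ and $x\ge 1$), and for $i=1,\dots,n$ let $u_i$ be a unit disk centred at $(0,3i)$ with a tiny pendant disk $v_i$ attached to it inside the strip $|x|<1$. Then $G-M$ is the matching $\{u_iv_i : 1\le i\le n\}$, the sets $N(m_j)\setminus M=\{u_1,\dots,u_n\}$ are independent, every $u_i$ and $v_i$ has a neighbour outside $M$, and $\omega(G)=2$; yet all the $u_i$ share the $M$-neighbourhood $\{m_1,m_2\}$, so their class has size~$n$. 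Your lifting step, which multiplies the separator by the maximum class size, would therefore output a separator of order~$n$ and give no bound at all---even though $\tw(G)=2$ here and the corollary's conclusion holds trivially.

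So the route ``bound class sizes, then multiply'' cannot succeed. Passing to a small auxiliary graph via neighbourhood complexity is the right first move, but the transfer back to $\tw(G)$ has to use the forest structure of $G-M$ directly: the deleted $M$-twins must be threaded back into a tree decomposition of the reduced graph along the forest, at additive rather than multiplicative cost. The $\omega(G)^{2.5}$ factor then arises entirely from a separator bound on the reduced disk graph of order $\O(|M|\cdot\mathrm{poly}(\omega(G)))$ and clique number~$\omega(G)$; that geometric step is where the substantive work in~\cite{lokSODA22} lies, and your sketch leaves it as an unproved plausibility.
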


As they use this corollary after a branching process reducing the clique number to $k^{\epsilon}$ and as their (approximated) feedback vertex set $M$ has size $|M|=k^{1+\epsilon'}$, they obtain a sublinear treewidth and thus a subexponential parameterized algorithm for \FVS (and several variants of \FVS) running in time $2^{\O(k^{13/14}\log k)}n^{\O(1)}$.
Recently this running time has been improved to $2^{\O(k^{7/8}\log k)}n^{\O(1)}$ when the representation is given and $2^{\O(k^{9/10}\log k)}n^{\O(1)}$ otherwise~\cite{Faster2023Shinwoo}.

\subparagraph*{Subexponential FPT algorithms via kernels.}

Another approach to obtain $2^{o(k)}n^{\O(1)}$ algorithms is to obtain small kernels (meaning computing in polynomial time an equivalent instance $(G',k')$
with $|G'|$ typically in $\O(k)$), and then use a $2^{o(n)}$ time algorithm. For \FVS such a $2^{o(n)}$-time algorithm is known in string graphs from~\cite{bonnet2019optimality} or \cite{novotna2021subexponential}, and was recently generalized to induced-minor-free graph classes~\cite{korhonen2023induced}.
However, as far as we are aware, the existence of a subquadratic kernel in this graph class is currently open.

\subsection{Our contribution}

Our objective is to study the existence of subexponential parameterized algorithms for hitting problems like \FVS and \TH in different types of intersection graphs. Our algorithmic results are summarized in \autoref{fig:results}.

\begin{table}
\begin{center}
\renewcommand{\arraystretch}{0.8}

\begin{tabular}{|c|c|c|c|c|}
\hline

    \multicolumn{5}{|c|}{Upper bounds} \\
    \hline
     Restriction & of class & Problem & Time complexity &  Section\\
     \hline

     none& square graphs & \multirow{2}*{$\Pi \in \CPi$} & $2^{\O(k^{9/10})}n^{\O(1)} $&  \autoref{sec:asqgm}\\
     \cline{1-2}\cline{4-5}

     \multirow{2}*{contact} &\multirow{2}*{segment graphs}&   & $2^{\O(k^{7/8} \log k)} n^{\O(1)}$ & \autoref{sec:asqgm-contact-seg}\\
    \cline{3-5}
      &   &\multirow{3}*{\TH}& \multirow{1}*{$2^{\O(k^{3/4} \log k)}n^{\O(1)}$}&\multirow{3}*{\autoref{sec:sublin}} \\
    \cline{1-2}\cline{4-4}
     \multirow{2}*{$K_{t,t}$-free} & \dDIR graphs && $2^{\O(k^{2/3}(\log k)  \sqrt d t^2 \log t )} n^{\O(1)}$&\\
     \cline{2-2}\cline{4-4}
      & string graphs & & $2^{\O_t(k^{2/3} \log k)} n^{\O(1)}$&\\

     \hline
\end{tabular}

\vspace{1em}
\begin{tabular}{|c|c|c|c|c|c|}
\hline
    \multicolumn{5}{|c|}{Lower bounds (under \ETH)} \\
    \hline
     Restriction & of class & Problem & Lower bound & Section\\
     \hline
     none& \multirow{3}*{\DEUXDIR} & \multirow{2}*{\TH, \OCT} & $2^{\o(n)}$ & \multirow{1}*{\autoref{sec:negative}} \\
     \cline{1-1}\cline{4-5}
     Maximum degree $\Delta$, for $\Delta\geq 6$&  &  &$2^{\o\left(\sqrt {\Delta n}\right)}$ &\multirow{2}*{\autoref{sec:negative}}  \\
     \cline{1-1}\cline{3-4}
     $K_{2,2}$-free contact, maximum degree $6$ &  & \TH, \FVS, \OCT & $2^{\o\left(\sqrt n\right)}$ &\\
     \hline
\end{tabular}
\renewcommand{\arraystretch}{1}
\end{center}

\caption{Summary of our results. All algorithms are robust, i.e., they do not need a representation.}
\label{fig:results}
\end{table}

\subparagraph*{Positive results via \ASQGM.}
In \autoref{sec:asqgm} we explain how the local radius (hereafter denoted $\lr$), introduced recently in \cite{Lokshtanov23Approx} in the context of approximation, can be used to get subexponential FPT algorithms for any problem in $\CPi$, a family of bidimensional problems where one has to find a set of size at most $k$ hitting a fixed family of graphs.
This class contains in particular \FVS, and \PSEUDO (resp. \PtH) where given a graph $G$, the goal is to remove a set $S$ of at most $k$ vertices of $G$ such that each connected component of $G-S$ contains at most one cycle (resp. does not contain a path on $t$ vertices as a subgraph).
We point out that these three problems are also in the list of problems mentioned in~\cite{Lokshtanov23Approx} that admit EPTAS in disk graphs.
We first provide sufficient conditions for graph class to admit subexponential FPT algorithms for any problem in $\CPi$, after the preprocessing step of \autoref{cor:bothbranchings} (introduced for disk graphs in~\cite{lokSODA22}) has been performed.
These conditions mainly boil down to having $\ASQGM(\omega,\ddstar)$, where $\ddstar$ is, informally, the maximum size of matching in the neighborhood of a vertex.
Then, we use the framework of \cite{baste2022contraction} to show that string graphs have $\ASQGM(\omega,\lr)$.
Thus, the message of \autoref{sec:asqgm} is that in order to obtain a subexponential FPT algorithm for a problem $\Pi \in \CPi$ in a given subclass of string graphs,  the only challenge is to bound the local radius by a polynomial of $\omega$ and $\ddstar$.
Finally, we provide such bounds for square graphs (intersection of axis-parallel squares) and \CONTACTSEG graphs.

We point out that in our companion paper \cite{gbgr2023fvspseudo} we prove that \FVS admits an algorithm running in time $2^{\O(k^{10/11} \log k)} n^{\O(1)}$ for pseudo-disk graphs. As square and segment graphs are in particular pseudo-disk graphs, this generalizes the graph class where subexponential parameterized algorithms exist, but to the price of a worst running time.
Moreover, our result in~\cite{gbgr2023fvspseudo} is obtained via kernelization techniques which require a representation of the input graph (i.e., this algorithm is not robust), and the reduction rules behind the kernel are tailored for \FVS and not applicable for any problem $\Pi \in \CPi$.

\subparagraph*{Negative results.}
An interesting difference between disk graphs and $d$-DIR graphs is that \autoref{thm:DGncomplexity} (about the linear neighborhood complexity) no longer holds for $d$-DIR graphs, because of the presence of large bicliques. 
Thus, it seems that $K_{t,t}$ is an important subgraph differentiating
the two settings and this fact is confirmed by the two following
results. First we show in \autoref{sec:negative} that assuming the \ETH, there is no algorithm solving \TH and \OCT in time $2^{o(n)}$ on $n$-vertex $2$-DIR graphs and more generally in time $2^{o(\sqrt{\Delta n})}$ in \DEUXDIR{} graphs with maximum degree $\Delta$. 
We note that the result for \OCT was already proved in 
\cite{okrasa2020subexponential} as a consequence of algorithmic lower bounds for homomorphisms problems in string graphs.
In our second negative result, we prove that assuming the \ETH, the problems \TH, \OCT, and \FVS cannot be solved in time $2^{o\left (\sqrt{n}\right )}$ on $n$-vertex $K_{2,2}$-free 
   \CONTACTDEUXDIR{} graphs. 
Notice that that our $2^{o\left (\sqrt{n}\right )}$ lower-bounds match those known for the same problems in planar graphs~\cite{cai2003eth}.

\subparagraph*{Positive results for \TH.}
In \autoref{sec:sublin} we observe that, for any hereditary graph class with sublinear separators, the preliminary branching step in \autoref{cor:bothbranchings} of~\cite{lokSODA22} directly leads to a subexponential parameterized algorithm for \TH. This implies the $2^{c_t k^{2/3} \log k} n^{\O(1)}$ algorithm for $K_{t,t}$-free string graphs.
Recall that according to our negative result in \autoref{sec:negative},  the $K_{t,t}$-free assumption is necessary.
To improve the constant $c_t$ in special cases, we provide in \autoref{ssec:neighcplx}  bounds on the neighborhood complexity of two subclasses that may be of independent interest: $K_{t,t}$-free
\dDIR{} graphs have linear neighborhood complexity with ratio $\O(d t^3 \log t)$, and \CONTACTSEG graphs have linear neighborhood complexity.
These bounds lead to improved running times for \TH in the corresponding graph classes (see \autoref{fig:results}).
\medskip

\section{Preliminaries}
\label{sec:prelim}

\subsection{Basics}
In this paper logarithms are binary and all graphs are simple, loopless and undirected.
Unless otherwise specified we use standard graph theory terminology, as in \cite{diestel2005graph} for instance.
Given a graph $G$, we denote by $\omega(G)$ the maximum order of a clique in $G$.
We denote by $d_G(v)$ the degree of $v \in V(G)$, or simply $d(v)$ when $G$ is clear from the context.
    The \emph{distance} between two vertices of a graph is the minimum length (in number of edges) of a path linking them, and the \emph{diameter} of a graph is the maximum distance between two of its vertices.
The \emph{radius} of a graph is the smallest integer $r \ge 0$ such that there exists a vertex $v$ such that every vertex in the graph is at a distance at most $r$ from $v$. A \emph{$t$-bundle}~\cite{Lokshtanov23Approx} is a matching of size $t$ plus a vertex connected to the $2t$ vertices of the matching. We say that $B$ is a $t$-bundle of a graph $G$ if $G[B]$ is a $t$-bundle plus possibly some extra edges.
A set $S \subseteq V(G)$ is a $t$-bundle hitting set of $G$ if $S \cap B \neq \emptyset$ for any $t$-bundle $B$ of $G$.
We denote by $\grid(G)$ the maximum $k$ such that the $(k,k)$-grid is contained as a minor in~$G$.
We denote by $\tw(G)$ the treewidth of $G$, and $\mu(G)$ the size of a maximum matching of $G$.

In \autoref{sec:asqgm} we provide subexponential parameterized algorithms for a class of problems $\CPi$ that we will now define.
We restrict our attention to \emph{hitting problems}, where for a fixed graph family $\mathcal{F}$, the input is a graph $G$ and an integer $k$, and the goal is to decide if there exists $S \subseteq V(G)$ with $|S| \le k$ such that $G-S \in \mathcal{F}$. A general setting where our results hold is described by the class $\CPi$ defined below and inspired by the problems tackled in \cite{Lokshtanov23Approx}
.

\begin{definition}
We denote by $\CPi$ the class of all hitting problems $\Pi$ such that:
\begin{enumerate}
    \item $\Pi$ is bidimensional ;
    \item there is an integer $c_\Pi > 0$ such that for any solution $S$ in a graph $G$, and any $c_\Pi$-bundle $B$ of $G$, $S \cap B \neq \emptyset$; and
    \item $\Pi$ can be solved on a graph $G$ in time $\tw(G)^{\O(\tw(G))}$.
    \end{enumerate}
\end{definition}

\begin{claim}\label{claim:pbincpi}
\FVS, \PSEUDO and \PtH for $t \le 5$ belong to $\CPi$.        
\end{claim}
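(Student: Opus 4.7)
The plan is to verify the three defining properties of $\CPi$ (bidimensionality, the $c_\Pi$-bundle-hitting condition, and a $\tw^{\O(\tw)}$-time algorithm) for each of the three problems, treating the bundle-hitting property first because it is the only piece that depends on the fine structure of a bundle.

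For the bundle-hitting condition, fix a $t$-bundle $B$ with centre $v$ and matching $x_1y_1,\dots,x_ty_t$. For \FVS, a $1$-bundle already induces a triangle, hence a cycle, so any solution must intersect $B$; thus $c_\Pi = 1$. For \PSEUDO, a $2$-bundle induces two triangles $vx_1y_1$ and $vx_2y_2$ sharing only the vertex $v$, so if $S \cap B = \emptyset$ the component of $v$ in $G - S$ contains at least two independent cycles and cannot be a pseudo-tree; hence $c_\Pi = 2$. For \PtH{} with $t\le 5$, the sequence $x_1\!-\!y_1\!-\!v\!-\!x_2\!-\!y_2$ is a simple path in any $2$-bundle and is a $P_5$, which contains $P_t$ as a subgraph for every $t\le 5$; so $c_\Pi = 2$ works uniformly.

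For bidimensionality I would first show that yes-instances are minor-closed. Given a minor $H$ of $G$ with branch sets $(B_u)_{u\in V(H)}$ and a solution $S$ of $G$, define $S' = \{u\in V(H) : B_u \cap S \neq \emptyset\}$, which has size at most $|S|$; then $H - S'$ is itself a minor of $G - S$, using only the branch sets contained in $V(G)\setminus S$ and the edges between them. For \FVS this gives a forest; for \PSEUDO it gives a pseudo-forest (contracting an edge of a pseudo-forest yields a pseudo-forest, and deleting vertices or edges preserves the property); for \PtH{} any $P_t$ in $H - S'$ lifts to a simple path on at least $t$ vertices in $G - S$, contradicting the solution. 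Second, on the $(r,r)$-grid, every solution has size $\Omega(r^2)$: the grid contains $\Omega(r^2)$ vertex-disjoint copies of $C_3$-free graphs rich enough to force the bound by a standard counting argument for \FVS; for \PSEUDO the grid has $\approx 2r^2$ edges and $r^2$ vertices, and since every pseudo-forest on $n$ vertices has at most $n$ edges, deleting $s$ vertices leaves at least $2r^2 - 4s$ edges on $r^2 - s$ vertices, forcing $s = \Omega(r^2)$; and for \PtH{} with $t\le 5$ one exhibits $\Omega(r^2)$ vertex-disjoint $P_t$'s directly.

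For the $\tw^{\O(\tw)}$-time algorithm, each problem admits a standard bag-wise dynamic programme tracking, for each vertex in the bag, whether it is in $S$, together with a partition of the remaining vertices into connected components of $G - S$ decorated with the structural information needed: a cycle count (capped at $1$) for \PSEUDO, nothing beyond acyclicity for \FVS, and the length of the longest path ending at the bag vertex for \PtH{}. In each case there are $\tw^{\O(\tw)}$ states per bag, as required. The main obstacle I anticipate is less a single difficult step than a careful verification across the three problems of the minor-closure argument in the bidimensionality property, since \PSEUDO and \PtH{} are not as classical as \FVS; once this is in place the rest of the claim is essentially bookkeeping.
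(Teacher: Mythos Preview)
Your approach mirrors the paper's: verify bidimensionality, the bundle-hitting constant, and the treewidth DP for each problem, with the same values $c_\Pi=1$ for \FVS and $c_\Pi=2$ for the other two. The paper is terser---it declares bidimensionality ``well known'' and outsources the treewidth algorithms to citations (\cite{baste2019hitting} for \FVS and \PSEUDO via minor-hitting, \cite{CYGAN201762} for \PtH)---while you sketch these yourself.

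One genuine technical gap: your DP state for \PtH is insufficient. Recording only the component partition together with the length of the longest path ending at each bag vertex loses the information needed at join nodes. Two bag vertices $u,v$ in the same component may each have a long forgotten path ending at them, but your state cannot tell whether these two paths are internally vertex-disjoint (so that a future $u$--$v$ connection concatenates them into a long path) or overlap (so that no such concatenation is possible). A correct state must additionally record, for each pair of bag vertices, the lengths of internally-forgotten paths between them, and more generally a family of vertex-disjoint partial paths anchored in the bag. For constant $t$ this still fits in $\tw^{\O(\tw)}$ states, so your conclusion survives, but the DP is less routine than you suggest---which is exactly why the paper delegates it to a reference.

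A cosmetic point: your \FVS-on-grid sentence is garbled. The clean argument is that the $(r,r)$-grid contains $\Omega(r^2)$ pairwise vertex-disjoint $4$-cycles (one per $2\times 2$ sub-square on an even sublattice), each of which must be hit.
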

\begin{proof}
It is well known that these three problems are bidimensional.
For the second condition, one can check that $c_\Pi$ is equal to $1$ for \FVS (as a $1$-bundle is a triangle) and equal to $2$ for \PSEUDO and \PtH when $t \le 5$.
For the last condition, as \FVS corresponds to hit all $K_3$ as minor and \PSEUDO correspond to hit all $\{H_0,H_1,H_2\}$ as a minor (with $H_i$ is formed by two triangles sharing $i$ vertices), these two problems can be solved in $\tw(G)^{\O(\tw(G))}$ by~\cite{baste2019hitting}.
For \PtH the result holds by~\cite{CYGAN201762}.
\end{proof}

\subsection{Graph classes}
\label{sec:graphCl}

A summary of graph classes considered in this article is presented in \autoref{fig:graphclasses}.

\begin{figure}[!ht]
\centering           
    \includegraphics[scale=1]{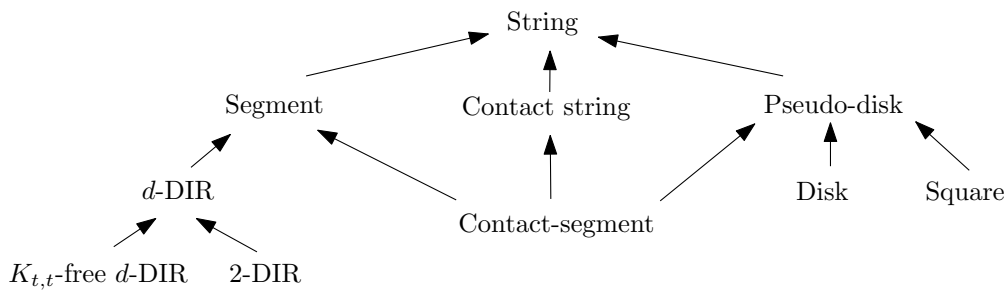}
    \caption{Inclusion between graph classes.}
    \label{fig:graphclasses}
\end{figure}

 \begin{figure}[!ht]
\centering           
    \includegraphics[scale=0.6]{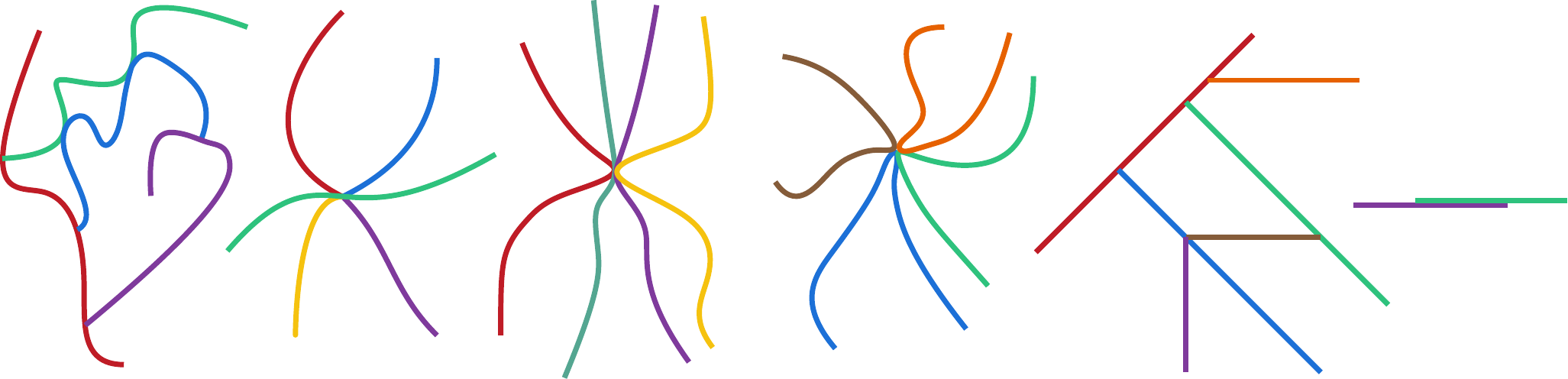}
    \caption{From left to right, four representations of contact string graphs, then a representation of 3-DIR \CONTACTSEG graph, and finally on the right an example of an intersection between segments not allowed in a representation of a \CONTACTSEG graph. }
    \label{fig:examples}
\end{figure}

 In this article,  we are mainly concerned with geometric graphs described by the intersection or contact of objects in the Euclidean plane.
 The most general class we consider are \emph{string graphs}, which are intersection graphs of strings (a.k.a. Jordan arcs).
Intersection graphs of segments in $\RR^2$ are called \emph{segment graphs}.
If a segment graph can be represented with at most $d$ different slopes, we call it a \emph{\dDIR{} graph}.\footnote{In general two \dDIR{} graphs may require different sets of slopes in their representation but in the case $d=2$ it is known that the segments can be assumed to be axis-parallel, which we will do.}
These classes of intersection graphs admit \emph{contact} subclasses, where the representations should not contain \emph{crossings}. That is, two strings either intersect tangentially, or they intersect at an endpoint of one of them. 
In a segment contact representation, any point belonging to two segments must be an endpoint of at least one of these segments.  
If a point belongs to several strings or segments, the above property must hold for any pair of them. 
This defines \emph{contact string graphs}, \emph{\CONTACTSEG graphs} and \emph{contact \dDIR graphs}.

\section{Preliminary branching steps}
\label{sec:branching}

Our algorithms make use of two preprocessing branchings: the first one is a folkore branching that allows to reduce cliques larger than a chosen size $p$ (where typically $p=k^{\epsilon}$), and the second allows us to branch on large bundles. These steps are described in \cite{lokSODA22} for \FVS in disk graphs. Here we extend them to any problem in $\CPi$ and any graph class where the maximum clique can be approximated in polynomial time. Their proofs, which we include for completeness, can be obtained by closely following the proofs in~\cite{lokSODA22}.

\begin{remark}
  In the rest of the section we refer to the full version \cite{lok2021complete} (availlable online) of the extended abstract \cite{lokSODA22}.
\end{remark}

\begin{lemma}[Adapted from {\cite[Lemma 6.2]{lok2021complete}}]\label{lem:branchingClique}
Let $\Pi \in \CPi$ and $\G$ be a hereditary graph class where the maximum clique can be $\alpha$-approximated within a constant factor $\alpha\geq 1$ in polynomial time.
There exists a $2^{\O\left (\frac{k}{p}\log p\right )}n^{\O(1)}$-time algorithm that, given a 
graph $G \in \G$ and $p,k \in \mathbb{N}$
with $p \ge 6\alpha c_{{\Pi}}$, returns a collection $\mathcal{Y} \subseteq \{(D, U) : D,U \subseteq V(G), D \cap U = \emptyset \}$ of size $2^{\O\left (\frac{k}{p}\log p \right)}$ such that:
\begin{enumerate}
\item For every $(D,U) \in \mathcal{Y}$, $G - D$ does not have a clique of size larger than $p$.
\item For every solution $S \subseteq V(G)$ of size at most $k$, there exists a unique
$(D,U) \in \mathcal{Y}$ such that $D \subseteq S$ and $S \cap U = \emptyset$.
\end{enumerate}
\end{lemma}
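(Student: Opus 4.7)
The plan is an iterative branching on large cliques returned by the approximation oracle, closely following the \FVS{} version of the argument in~\cite{lok2021complete} but using the bundle-hitting property of $\CPi$. The key observation is that any clique on at least $2c_\Pi+1$ vertices contains a $c_\Pi$-bundle as subgraph: pick an arbitrary vertex as the centre and any perfect matching on $2c_\Pi$ of the remaining (pairwise adjacent) vertices. Hence if $S$ is a solution for $\Pi$ and $C$ a clique of $G$, the subgraph $G[C\setminus S]$ cannot contain a $c_\Pi$-bundle, so $|S\cap C|\ge |C|-2c_\Pi$.

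Starting from $D:=U:=\emptyset$, the algorithm would repeatedly call the $\alpha$-approximation of $\omega$ on $G-D$ (well-defined since $\G$ is hereditary). If the returned clique has size at most $p/\alpha$, then $\omega(G-D)\le p$ and the current pair $(D,U)$ is added to $\mathcal{Y}$. Otherwise, extract an arbitrary sub-clique $C'\subseteq \tilde C$ of size $q:=\lceil p/\alpha \rceil$; the hypothesis $p\ge 6\alpha c_\Pi$ gives $q\ge 6c_\Pi$. For each subset $T\subseteq C'$ with $|T|\le 2c_\Pi$ (interpreted as a guess for $C'\setminus S$), branch with $D:=D\cup (C'\setminus T)$, $U:=U\cup T$, and $k:=k-|C'\setminus T|$, then recurse. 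For any fixed solution $S$ of size at most $k$, the previous paragraph yields $|S\cap C'|\ge q-2c_\Pi$, and the guess $T=C'\setminus S$ is the unique branch consistent with $S$; iterating this gives the uniqueness clause of the lemma.

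For the complexity, each recursive step decreases $k$ by at least $q-2c_\Pi\ge 2q/3=\Omega(p)$, so the recursion tree has depth $\O(k/p)$. The branching factor at each node is $\sum_{i\le 2c_\Pi}\binom{q}{i}\le q^{\O(c_\Pi)}=p^{\O(1)}$, yielding $2^{\O((k/p)\log p)}$ leaves and thus $|\mathcal{Y}|=2^{\O((k/p)\log p)}$; the per-node work (one approximation call and bookkeeping on $D,U$) is polynomial, so the overall running time matches the claimed bound.

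The main pitfall to handle is the approximation gap $\alpha$: triggering the branching only when a returned clique exceeds $p$ would guarantee merely $\omega(G-D)\le \alpha p$ rather than the $\le p$ promised by the lemma. Lowering the threshold to $p/\alpha$ repairs correctness but shrinks each guessed clique, and this is precisely why the hypothesis must be $p\ge 6\alpha c_\Pi$ instead of only $p\ge 6 c_\Pi$: it ensures $q-2c_\Pi=\Theta(q)$, so the per-step decrease of $k$ stays proportional to $p/\alpha$ and both the recursion depth and the branching factor remain under control.
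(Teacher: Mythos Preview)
Your proposal is correct and follows essentially the same approach as the paper: the same key observation that $|S\cap C|\ge |C|-2c_\Pi$ for any clique $C$, the same branching on the guess of $C\setminus S$, the same threshold $p/\alpha$ to cope with the approximation gap, and the same recurrence solved via $p\ge 6\alpha c_\Pi \Rightarrow q-2c_\Pi\ge 2q/3$. The only cosmetic difference is that you truncate the returned clique to exactly $q=\lceil p/\alpha\rceil$ vertices before branching, whereas the paper branches on the whole returned clique of size $x>p/\alpha$ and then takes a $\max$ over $x$ in the recurrence; your variant makes the analysis marginally cleaner but is otherwise the same argument.
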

\begin{proof}
    The algorithm is the same than the one in \cite[Lemma 6.2]{lok2021complete} with minor modifications in the branching steps. In this proof we use the same notation. We only consider the differences in the analysis of the branching. Observe that for every clique $C$ of $G$, a solution $S$ of the problem $\Pi$ will contain at least $|C|-2c_\Pi$ vertices of $C$ as otherwise a $c_\Pi$-bundle would not be hit.
    For some $x>p/\alpha$, the algorithm makes at most $\sum_{i=0}^{2c_{\Pi}}\binom{x}{i}\leq x^{2c_\Pi}$ recursive calls, each with the parameter $k'$ decreased by at least $x-2c_{\Pi}$. We denote $T(k')$ the time complexity of a call to the algorithm with parameter $k'$. If $k'\leq p/\alpha - 2c_{\Pi}$, a recursive call would make the new parameter strictly below $0$, so there is no such call and $T(k')=n^{\O(1)}$. Otherwise for $k\geq p/\alpha - 2c_{\Pi}$ we have 
    \begin{align*}
        T(k')&=\max_{x>p/\alpha\text{ s.t. }k'-(x-2c_{\Pi})\geq 0}x^{2c_{\Pi}}\cdot T(k'-(x-2c_{\Pi}))+n^{\O(1)} \\
        &= \left(\frac p \alpha\right)^{2c_\Pi}\cdot T\left (k'-\left (\frac p\alpha-2c_{\Pi}\right )\right )+n^{\O(1)}.
    \end{align*}

    As $p\geq 6\alpha c_\Pi$, we have $\frac p\alpha-2c_{\Pi}\geq \frac {2p}{3\alpha}$ and so $T(k')=\left(\frac p \alpha\right)^{2c_\Pi}\cdot T(k'-\frac {2p}{3\alpha})+n^{\O(1)}.$

    This solves to $T(k)=\left(\left(\frac p \alpha\right)^{2c_\Pi}\right)^{\O\left (\frac{3\alpha k}{2p}\right )}\cdot n^{\O(1)}=2^{\O\left (\frac kp \cdot \log p \right )}\cdot n^{\O(1)}$.

    The same arguments give the bound on the size of the obtained collection.
\end{proof}

\begin{lemma}[Adapted from {\cite[Lemma 6.10]{lok2021complete}}]\label{lem:branching}
Let $\Pi \in \CPi$.
    There is a $2^{ \O\left (\frac{k}{p} \log k\right)} n^{\O(1)}$-time algorithm that, given a $n$-vertex graph $G$, $p,k \in \NN$ with $p\geq 2c_{\Pi}$, and a $c_\Pi$-bundle hitting set $X\subseteq V(G)$, returns a collection
    \[
        \mathcal{Z} \subseteq \{(D,U,Z) : D,U\subseteq X,\ D\cap U = \emptyset,\ \text{and}\ Z \subseteq V(G)\setminus X\}
    \]
    of size $2^{O\left (\frac{k}{p} \log k\right )}$ and such that:
    \begin{enumerate}
        \item \label{e:sizez} for every $(D,U,Z) \in \mathcal{Z}$, $|Z| \leq 4(k + p |X\setminus D|)$ and for every $v \in  X \setminus D$, $\mu(G[N(v) \setminus (X\cup Z)]) < c_{\Pi}$;
        \item for every solution $S \subseteq V(G)$ of size at most $k$, there exists a unique $(D,U,Z) \subseteq \mathcal{Z}$ such that $D\subseteq S$ and $S\cap U = \emptyset$.
    \end{enumerate}
\end{lemma}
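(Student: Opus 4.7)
The statement generalises \cite[Lemma 6.10]{lok2021complete}, which handles the special case $\Pi = \FVS$ (where $c_\Pi = 1$ and a $c_\Pi$-bundle is just a triangle). My plan is to adapt that proof by substituting ``triangle'' with ``$c_\Pi$-bundle'' throughout, invoking property~(2) of $\CPi$ wherever the \FVS version uses the triangle-hitting property.

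The algorithm maintains a triple $(D, U, Z)$ and a residual budget $k'$, branching until no $v \in X \setminus D$ satisfies $\mu(G[N(v) \setminus (X \cup Z)]) \geq c_\Pi$. The key observation is that, for such a $v$ and any matching $M$ of size $\min(p, \mu_v)$ in $N(v) \setminus (X \cup Z)$, the set $\{v\} \cup V(M)$ contains a $c_\Pi$-bundle, so property~(2) forces $S$ to intersect it. Moreover, if $v \notin S$ then $|V(M) \cap S| \geq |M| - c_\Pi + 1$, for otherwise a $c_\Pi$-submatching of $V(M) \setminus S$ would form a $c_\Pi$-bundle at $v$ disjoint from $S$.

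We branch into $1 + \binom{|M|}{c_\Pi - 1}$ children: one places $v$ in $D$ and decrements $k'$ by $1$; and, for each choice of at most $c_\Pi - 1$ edges of $M$ declared ``uncovered'', one places $v$ in $U$ and the $\leq 2|M|$ endpoints of the covered edges into $Z$, decrementing $k'$ by $|M| - c_\Pi + 1$, which is $\geq p/2$ whenever $|M| \geq p$ (using $p \geq 2c_\Pi$). Once no bad $v$ has matching of size $\geq p$, a deterministic phase adds, for each remaining bad $v \in X \setminus D$, a vertex cover of $\mu_v - c_\Pi + 1 < p$ edges of a maximum matching in $N(v) \setminus (X \cup Z)$ to $Z$, forcing the matching below $c_\Pi$ without further branching.

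The bound $|Z| \leq 4(k + p|X \setminus D|)$ follows by charging: large-matching branches each add $\leq 2p$ vertices to $Z$ while consuming $\geq p/2$ of the budget, contributing $\leq 4k$ along any root-to-leaf path, while the deterministic phase adds $\leq p$ vertices per $v \in X \setminus D$. The running time and $|\mathcal{Z}|$ bound are extracted from the recurrence $T(k') \leq T(k'-1) + p^{\O(c_\Pi)} T(k'-p/2)$, whose dominant characteristic root is $1 + \Theta((\log p)/p)$ and which solves to $T(k) = 2^{\O((k/p) \log p)} \leq 2^{\O((k/p) \log k)}$. The main obstacle is precisely this recurrence analysis: a naive depth argument would only give $2^{\O(k \log k)}$ (since the ``$v \in S$'' child decreases $k'$ by only $1$), and the refined argument of \cite[Lemma 6.10]{lok2021complete} is required to extract the subexponential bound. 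Transferring that analysis from $c_\Pi = 1$ to a general constant $c_\Pi$ only alters constants and the polynomial factor $p^{\O(c_\Pi)}$ in the branching, leaving the asymptotics unchanged.
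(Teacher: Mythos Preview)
Your overall structure matches the paper's: adapt the branching of \cite[Lemma 6.10]{lok2021complete} by replacing ``triangle'' with ``$c_\Pi$-bundle'' and using that any solution meeting a $p$-bundle at $v$ either contains $v$ or at least $p-c_\Pi+1$ of the matching vertices. The size bound on $Z$ and the recurrence analysis you give are both sound.

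There is however one unnecessary complication that creates a genuine gap. The paper (following \cite{lok2021complete}) branches \emph{two}-way at a $p$-bundle: either $v\to D$ with $k'\!\gets k'-1$, or $v\to U$ with \emph{all} of $V(M)$ moved to $Z$ and $k'\!\gets k'-(|M|-c_\Pi+1)$. This yields the recurrence $T(k')=T(k'-1)+T(k'-p/2)$ directly. Your additional split of the second branch into $\binom{|M|}{c_\Pi-1}$ sub-branches (one per guess of the ``uncovered'' edges $E_0$) is not needed: since $Z$ is not required to lie inside $S$, you may dump all of $V(M)$ into $Z$ without guessing which edges are covered. Worse, your extra split invalidates the \emph{uniqueness} part of item~2: all sub-branches for different $E_0$ produce the same pair $(D,U)$ (they only differ in $Z$), so for any solution $S$ with $v\notin S$ several leaves of the branching tree satisfy $D\subseteq S$ and $S\cap U=\emptyset$. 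With the paper's two-way branching, each internal node places $v$ in exactly one of $D$ or $U$, so the root-to-leaf path---hence the triple $(D,U,Z)$---is uniquely determined by $S$. Dropping your guess of $E_0$ both simplifies the recurrence to the paper's form and restores uniqueness.
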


\begin{proof}

The algorithm is the same than the one in \cite[Lemma 6.10]{lok2021complete} with minor modifications in the branching steps, and we use the same notations. We only consider the differences in the analysis of the branching. Observe that for a $p$-bundle around a vertex $v$, a solution $S$ of the problem $\Pi$ will either contains $v$ or at least $p-c_\Pi+1$ vertices of the bundle as otherwise a $c_\Pi$-bundle would not be hit.

We denote $T(k')$ the time complexity of a call to the algorithm with parameter $k'$. If $k'=0$ we have $T(k')=n^{O(1)}$ and for $k'\leq p-c_{\Pi}$, a recursive call from a $x$-bundle with $x>p$ corresponding to the branch where the central vertex is not part of the solution would make the parameter strictly below $0$. So there is no such call, and so for $1 \leq k' < p-c_{\Pi}+1$, $T(k')=T(k'-1)+n^{\O(1)}$. For $k'>p-c_{\Pi}$, 
\begin{align*}
    T(k')&=T(k'-1)+T(k'-(p-c_\Pi+1))+n^{\O(1)}\\
    &=T(k'-1)+T(k'-\frac p2)+n^{\O(1)}.
\end{align*}
This resolves to  $T(k')=2^{\O(\frac kp)}n^{\O(1)}$. Using the same arguments we get the bound on the size of the obtained collection.

Another difference with the original algorithm is the bound on the number of vertices in the set $Z$ obtained with this construction. When a set $Z'$ is updated by adding $y\geq 2p$ vertices, the parameter $k'$ decreases by at least $\frac y2-c_\Pi+1$, and having $y\geq 4c_\Pi$ this is at least $\frac y4$. So for every obtained $Z$ we have $|Z|\leq 4k$.
\end{proof}

These two reduction steps are summarized in the following corollary, which is the routine that we will use in our algorithms.

\begin{corollary}\label{cor:bothbranchings}
Let $\Pi \in \CPi$.
Let $\G$ be a hereditary graph class where the maximum clique can be $\alpha$-approximated for some constant factor $\alpha\geq 1$ in polynomial time. 
There exists a $2^{\O\left (\frac{k}{p}\log k \right )}n^{\O(1)}$-time algorithm that, given an instance $(G,k)$ of $\Pi$ and an integer $p\in [6\alpha c_{\Pi},k]$, where $G \in \G$, returns a collection
$\C$ of size $2^{\O\left (\frac{k}{p}\log k \right)}$ of tuples $(G',M,k')$ such that:
\begin{enumerate}
    \item For any $(G',M,k') \in \C$, $(G',k')$ is an instance of $\Pi$ where $G'$ is an induced subgraph of $G$, $\omega(G') \le p$, and $k' \le k$;
    \item $M$ is a $c_\Pi$-bundle hitting set of $G'$ with $|M| = \O(p k)$, and for any $v \in M$, $\mu(G'[N(v)\setminus M]) < c_\Pi$; and
    \item $(G,k)$ is a yes-instance of $\Pi$ if and only if there exists $(G',M,k') \in \C$ such that $(G',k')$ is a yes-instance of $\Pi$.
\end{enumerate}
\end{corollary}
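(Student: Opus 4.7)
The plan is to compose the two preceding lemmas, preceded by a preliminary greedy computation of a $c_{\Pi}$-bundle hitting set of $G$ of size $\O(k)$. First, I would observe that a maximal collection of vertex-disjoint $c_{\Pi}$-bundles can be computed in polynomial time, since $c_{\Pi}$ is a constant (it suffices, for each vertex $v$, to test whether $N(v)$ contains a matching of size $c_{\Pi}$). By the very definition of $\CPi$, any $\Pi$-solution intersects each such bundle, so if the packing has more than $k$ elements we can safely return an empty collection; otherwise the union $X$ of the packed bundles has $|X|\le (2c_{\Pi}+1)k=\O(k)$ and, by maximality of the packing, is a $c_{\Pi}$-bundle hitting set of $G$.

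Next, I would apply Lemma~\ref{lem:branchingClique} to $(G,k)$ with parameter $p$, obtaining the collection $\mathcal{Y}$ of pairs $(D,U)$. For each pair, letting $G_0:=G-D$, $k_0:=k-|D|$, and $X_D:=X\setminus D$, I would feed $G_0$, $p$, $k_0$, and $X_D$ into Lemma~\ref{lem:branching} (noting that $X_D$ remains a $c_{\Pi}$-bundle hitting set of $G_0$, since every $c_{\Pi}$-bundle of $G_0$ is one of $G$ and is disjoint from $D$). The resulting collection $\mathcal{Z}_{(D,U)}$ yields triples $(D',U',Z)$, from which I would produce the final output tuple $(G',M,k')$ with $G':=G_0-D'$, $M:=(X_D\setminus D')\cup Z$, and $k':=k_0-|D'|$. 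The multiplicative composition of the two branchings gives $|\C|$ and the total running time equal to $2^{\O((k/p)\log k)}n^{\O(1)}$, and the size bound $|M|\le |X_D|+|Z|=\O(k)+\O(k+p|X_D|)=\O(pk)$ follows immediately.

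The main verification is Condition~(2) for every $v\in M$. For $v\in X_D\setminus D'$ the required bound on $\mu(G'[N(v)\setminus M])$ is exactly the guarantee of Lemma~\ref{lem:branching}, since $D'\subseteq X_D$ implies $D'\cup M=X_D\cup Z$. For $v\in Z\subseteq V(G_0)\setminus X_D$ the bound is not provided by either lemma and must be argued separately: any $c_{\Pi}$-matching inside $N_{G'}(v)\setminus M$ together with $v$ would form a $c_{\Pi}$-bundle of $G_0$ entirely contained in $V(G_0)\setminus X_D$, contradicting that $X_D$ is a $c_{\Pi}$-bundle hitting set of $G_0$. The bundle-hitting property of $M$ in $G'$ transfers analogously from that of $X_D$ in $G_0$: any $c_{\Pi}$-bundle $B$ of $G'$ meets $X_D$ (as a bundle of $G_0$), and since $B\subseteq V(G')$ is disjoint from $D'\subseteq X_D$, it meets $X_D\setminus D'\subseteq M$. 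Condition~(3) reduces to the "iff" parts of the two lemmas: a solution $S$ of $(G,k)$ decomposes uniquely as $D\cup D'\cup S^{\star}$ with $S^{\star}$ a solution of $(G',k')$, and conversely any solution of $(G',k')$ can be lifted back by adding $D\cup D'$.

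I expect the main obstacle to be precisely the $v\in Z$ case of Condition~(2): the guarantee of Lemma~\ref{lem:branching} covers only centers in $X_D\setminus D'$, and transferring the property to $Z$ is what forces us to start from a genuine bundle hitting set $X$ (obtained via the greedy packing) rather than, say, a mere approximate $\Pi$-solution. Everything else is straightforward bookkeeping of sets, sizes, and branching factors.
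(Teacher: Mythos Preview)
Your proposal is correct and follows essentially the same route as the paper: apply Lemma~\ref{lem:branchingClique}, compute a $c_{\Pi}$-bundle hitting set of size $\O(k)$ via a greedy maximal packing, then apply Lemma~\ref{lem:branching} and set $M=(X\setminus D')\cup Z$; the verification of Condition~(2) for $v\in Z$ via the hitting property of $X$ is exactly the paper's argument. The only difference is ordering: you compute $X$ once on $G$ \emph{before} the clique branching and restrict to $X_D=X\setminus D$ in each branch, whereas the paper recomputes $X$ on each $G-D'$ \emph{after} the clique branching; both are valid, and your version is marginally cleaner since a single packing suffices.
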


\begin{proof}
We run the algorithm of \autoref{lem:branchingClique}. For any $(D',U') \in \mathcal{Y}$ we do the following. Let $G'=G-D'$, $k'=k-|D'|$.
Observe that for any constant $t$, the problem of hitting all $t$-bundles has a trivial $(2t+1)$-approximation by finding a maximal vertex disjoint packing of $t$-bundles and taking all vertices of this packing. Thus, we compute in polynomial time a $\O(1)$-approximation $X$ of a minimum $c_\Pi$-bundle hitting of $G'$. As the optimal value for $\Pi$ is larger than the minimum $c_\Pi$-bundle hitting set, if $|X|>(2c_\Pi+1) k$ then we can define $\C$ as a singleton containing a dummy no-instance, and thus we now assume that $|X| \le (2c_\Pi+1) k=\O(k)$. Now, we use \autoref{lem:branching} on $(G', k')$ with $p$ and $X$ to obtain a collection $\mathcal{Z}$ as in its statement.
 For any $(D,U,Z) \in \mathcal{Z}$,
define $G''=G'-D$, $k''=k'-|D|$, $M=(X\setminus D) \cup Z$. Call $\C$ the set of the triples $(G'', M, k'')$ obtained that way. From \autoref{lem:branchingClique} and \autoref{lem:branching}, $|\C| \leq 2^{\O\left (\frac{k}{p} \log k \right )}$ and it can be constructed in $2^{\O\left (\frac{k}{p} \log k \right )} n^{\O(1)}$-time.

First and third property are straightforward. Let us consider the second property.
As $X$ is a $c_{\Pi}$-bundle hitting set of $G'$, $X \setminus D$ is 
a $c_{\Pi}$-bundle hitting set of $G''$, and thus $M$ is a $c_{\Pi}$-bundle hitting set of $G''$, and  $|M| \le |X|+|Z| \le |X|+4(k+p|X|)=\O(kp)$.
It remains to prove that for any $v \in M$, $\mu(G[N(v)\setminus M]) < c_\Pi$.
If $v \in Z$, then $v \notin X$, and thus if $N(v)\setminus M \ge c_\Pi$ then 
$v \cup (N(v)\setminus M)$ would be a $c_\Pi$-bundle in $G'$ not intersected by $X$, a contradiction. Otherwise ($v \notin Z$), then $v \in X \setminus D$, and
Item~\ref{e:sizez} of \autoref{lem:branching} implies the desired inequality.
\end{proof}

\section{Positive results via \ASQGM}
\label{sec:asqgm}



\subsection{From \texorpdfstring{$\ASQGM(\omega,\ddstar)$}{ASQGM(ω,μ*)} to subexponential algorithms.}\label{subsec:asqgm1}

In this section we provide subexponential paramterized algorithms for problems of $\CPi$ in any class that has the $\ASQGM(\omega,\ddstar)$ property.

\begin{definition}
    Given a graph $G$, a \emph{subneighborhood function} of $G$ is any function $\Nstar:V(G)\rightarrow 2^{V(G)}$ such that for any $v\in V(G)$, $\Nstar(v) \subseteq N(v)$.    
    Moreover, if for any $u\in V(G)$, $|\{v\in V(G),~ u\in \Nstar(v)\}| \le c$ for some $c\in \mathbb{N}$ then we say that $\Nstar$ has \emph{$c$-bounded occurrences}.
    
    Given a subneighborhood function $\Nstar$, we define $\ddstar (v)$ as the maximum number of edges of a matching in $G[\Nstar(v)]$. We denote by $\ddstar(G)$ the maximum of $\ddstar$ over $V(G)$.
\end{definition}
For example in square graphs, we will fix a representation $\mS$, and define $\Nstar(v)$ as the set of neighbors of $v$ whose square is smaller than the one of $v$.

The main theorem from this subsection is the following. Recall that 
$\CPi$ encompasses fundamental algorithmic problems such as \FVS, \PSEUDO and \PtH for $t \le 5$ (\autoref{claim:pbincpi}).

\begin{restatable}{theorem}{subexpASQGM}
    \label{th:subexpASQGM}
    Let $\Pi$ be a problem of $\CPi$ and $\mathcal{C}$ be a hereditary graph class such that:
    \begin{itemize}
    \item maximum clique can be $\O(1)$-approximated in polynomial time in $\mathcal{C}$;
    \item for any $G \in \mathcal C$, there exists a subneighborhood function $\Nstar$ that has $\O(\omega(G)^{c_1})$-bounded occurrences for some $c_1 \in \mathbb{N}$; and
    \item $\C$ has the $\ASQGM(\omega,\ddstar)$ property, i.e., there exists a multivariate polynomial $P$  such that for all $G \in \mathcal C$, we have $\tw(G)=\O(P(\omega(G), \ddstar(G)) \cdot \grid(G))$.
    \end{itemize}
Then,  $\Pi$ admits a parameterized subexponential algorithm on $\C$.
More precisely, for $\epsilon>0$ such that $P(k^\epsilon, k^{(c_1+2)\epsilon})=\O(k^{\frac 12-\epsilon})$, $\Pi$ admits a parameterized subexponential algorithm on $\C$ running in time $2^{\O(k^{1-\epsilon}\log(k))}$.
 This algorithm does not need a representation except if one is required for finding the $\O(1)$-approximation of a maximum clique.
\end{restatable}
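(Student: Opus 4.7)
\begin{sketchproof}
The plan is to combine the preprocessing of \autoref{cor:bothbranchings} with the $\ASQGM(\omega,\ddstar)$ hypothesis and the bidimensionality of $\Pi$.

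First I would apply \autoref{cor:bothbranchings} with $p=k^\epsilon$ to the instance $(G,k)$. This produces in time $2^{\O(k^{1-\epsilon}\log k)}n^{\O(1)}$ a collection of $2^{\O(k^{1-\epsilon}\log k)}$ reduced instances $(G',M,k')$ such that $(G,k)$ is a yes-instance iff some $(G',k')$ is. Each $G'$ lies in $\C$ by heredity, satisfies $\omega(G')\le p$, and comes with a $c_\Pi$-bundle hitting set $M$ of size $\O(pk)$ such that $\mu(G'[N(v)\setminus M])<c_\Pi$ for every $v\in M$. A quick observation extends this inequality to all $v\in V(G')$: if $v\notin M$ and $\mu(G'[N(v)\setminus M])\ge c_\Pi$, the matching together with $v$ would form a $c_\Pi$-bundle entirely outside $M$, contradicting the hitting property.

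It then remains to solve each $(G',k')$ in time $2^{\O(k^{1-\epsilon}\log k)}n^{\O(1)}$. Since $\Pi$ is bidimensional, I answer \emph{no} as soon as $\grid(G')$ exceeds a suitable $C\sqrt{k}$; otherwise $\grid(G')=\O(\sqrt{k})$ and the $\ASQGM(\omega,\ddstar)$ hypothesis yields $\tw(G')=\O(P(\omega(G'),\ddstar(G'))\cdot\grid(G'))$. Provided one can establish $\ddstar(G')=\O(k^{(c_1+2)\epsilon})$, plugging in $\omega(G')\le k^\epsilon$ and using $P(k^\epsilon,k^{(c_1+2)\epsilon})=\O(k^{1/2-\epsilon})$ gives $\tw(G')=\O(k^{1-\epsilon})$, and the third item of the definition of $\CPi$ solves $\Pi$ on $G'$ in time $\tw(G')^{\O(\tw(G'))}=2^{\O(k^{1-\epsilon}\log k)}$.

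The heart of the proof is therefore the bound on $\ddstar(G')$. Fix the subneighborhood function $\Nstar$ given by the hypothesis on $G'$. For any $v$, at most $c_\Pi-1$ edges of a maximum matching in $G'[\Nstar(v)]$ can lie entirely outside $M$ (by the extended property above), so $\ddstar(v)\le c_\Pi-1+|\Nstar(v)\cap M|$. The $\O(\omega(G')^{c_1})=\O(p^{c_1})$-bounded occurrences of $\Nstar$ then give the global estimate
\[
\sum_{v\in V(G')}|\Nstar(v)\cap M| \;=\; \sum_{u\in M}\bigl|\{v : u\in \Nstar(v)\}\bigr| \;\le\; \O(p^{c_1})\cdot |M| \;=\; \O(p^{c_1+1}k),
\]
so at most $\O(k/p)=\O(k^{1-\epsilon})$ vertices $v$ can satisfy $|\Nstar(v)\cap M|>k^{(c_1+2)\epsilon}$. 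The main obstacle I anticipate is promoting this averaging bound to a uniform one: the natural route is to absorb these $\O(k^{1-\epsilon})$ heavy vertices into $M$ (or to branch over them, paying an extra $2^{\O(k^{1-\epsilon})}$ factor that remains subexponential) and to check that the extended set still satisfies the $c_\Pi$-bundle hitting property and the subneighborhood requirements needed to invoke $\ASQGM$. Once $\ddstar(G')$ is uniformly controlled, the algorithm is concluded by the treewidth-based routine from the definition of $\CPi$.
\end{sketchproof}
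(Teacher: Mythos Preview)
Your approach is essentially the paper's: the averaging argument you give is precisely the content of \autoref{lm:removeBIG}, and the overall structure (apply \autoref{cor:bothbranchings} with $p=k^\epsilon$, control $\ddstar$, invoke $\ASQGM$, then solve via the treewidth routine) matches. The only gap is the resolution of your ``anticipated obstacle'', which is simpler than either route you propose.

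Absorbing the heavy vertices into $M$ does not help: $\ddstar$ is defined from $\Nstar$ alone, so enlarging $M$ does not decrease $\ddstar(v)$ for any $v$; and branching is unnecessary. Instead, let $B$ be the set of the $\O(k^{1-\epsilon})$ heavy vertices and simply \emph{delete} them. By heredity $G'-B\in\C$, and your averaging bound shows $\ddstar(G'-B)\le k^{(c_1+2)\epsilon}$ (the $\Nstar$ restricted to $G'-B$ is still a subneighborhood function with the same occurrence bound). Applying $\ASQGM$ to $G'-B$ gives $\tw(G'-B)=\O(k^{1/2-\epsilon}\cdot\grid(G'-B))$, and since deleting a vertex lowers treewidth by at most one, $\tw(G')\le \tw(G'-B)+|B|=\O(k^{1/2-\epsilon}\cdot\grid(G'))+\O(k^{1-\epsilon})$. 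The bidimensionality test then caps $\grid(G')$ at $\O(\sqrt{k})$, yielding $\tw(G')=\O(k^{1-\epsilon})$ as you wanted. (Operationally, as in the paper, you never compute $\grid(G')$: you run a treewidth $2$-approximation with target $\Theta(k^{1-\epsilon})$; failure certifies $\grid(G')>C\sqrt{k}$ and hence a no-instance.)
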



\begin{lemma}
\label{lm:removeBIG}
Let $\Pi$ be a problem of $\CPi$.
Consider a graph $G$ and $\Nstar$ a $c$-bounded occurrences subneighborhood function of $G$.
Let $M\subseteq V(G)$ be a $c_\Pi$-bundle hitting set of $G$ such that for any vertex $v\in M$, $\mu(G[N(v)]-M) < c_\Pi$. Then for every positive integer $\tau \ge c_\Pi$, there exists a set $B\subseteq V(G)$ of size 
$|B|=\frac{c|M|}{\tau-c_\Pi+1}$  such that $\ddstar(G-B)\leq \tau$.
\end{lemma}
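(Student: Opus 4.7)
The plan is to take $B$ to be the set $A := \{v \in V(G) : \mu(G[\Nstar(v)]) > \tau\}$ of ``$\tau$-heavy'' vertices and then bound $|A|$. With this choice, the inequality $\ddstar(G-B) \le \tau$ is immediate: for every $v \in V(G)\setminus B$, the set $\Nstar(v) \setminus B$ is contained in $\Nstar(v)$, so the maximum matching in $(G-B)[\Nstar(v)\setminus B]$ is at most $\mu(G[\Nstar(v)]) \le \tau$ by the very definition of $A$. The entire task therefore reduces to showing $|A| \le c|M|/(\tau - c_\Pi + 1)$.

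The key claim is that every $v \in A$ satisfies $|\Nstar(v)\cap M| \ge \tau - c_\Pi + 2$. To prove it, I would first establish the uniform bound $\mu(G[\Nstar(v)\setminus M]) \le c_\Pi - 1$ for every $v \in V(G)$. If $v \in M$, it follows at once from the hypothesis $\mu(G[N(v)\setminus M]) < c_\Pi$ together with $\Nstar(v) \subseteq N(v)$. If $v \notin M$, then any matching of size $c_\Pi$ inside $\Nstar(v) \setminus M$ would, together with $v$, form a $c_\Pi$-bundle entirely disjoint from $M$, contradicting that $M$ is a $c_\Pi$-bundle hitting set. Now fix a maximum matching $F_v$ in $G[\Nstar(v)]$: it has size at least $\tau + 1$, and the sub-matching consisting of those edges of $F_v$ with both endpoints outside $M$ is itself a matching in $G[\Nstar(v) \setminus M]$, so it has size at most $c_\Pi - 1$. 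Hence at least $(\tau+1) - (c_\Pi - 1) = \tau - c_\Pi + 2$ edges of $F_v$ contribute an endpoint to $\Nstar(v) \cap M$, and since $F_v$ is a matching these endpoints are pairwise distinct, proving the claim.

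To conclude, I would double-count the pairs $(v,u)$ with $u \in \Nstar(v) \cap M$, using that $\Nstar$ has $c$-bounded occurrences:
\[
   |A|\,(\tau - c_\Pi + 2) \;\le\; \sum_{v \in A} |\Nstar(v) \cap M| \;\le\; \sum_{u \in M}\bigl|\{v : u \in \Nstar(v)\}\bigr| \;\le\; c|M|.
\]
This yields $|A| \le c|M|/(\tau - c_\Pi + 2) \le c|M|/(\tau - c_\Pi + 1)$, and taking $B := A$ finishes the proof. The only real subtlety is the uniform bound $\mu(G[\Nstar(v)\setminus M]) \le c_\Pi - 1$ valid for \emph{all} $v$: it requires stitching together the hypothesis on neighborhoods of vertices of $M$ (the case $v \in M$) with the bundle-hitting property of $M$ (the case $v \notin M$), and it is exactly this step that turns ``$v$ is $\tau$-heavy'' into ``$v$ has many $M$-neighbors inside $\Nstar(v)$'' — the substance of the counting argument.
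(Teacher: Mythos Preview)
Your proposal is correct and follows essentially the same approach as the paper's proof: define $B$ as the set of vertices whose matching-in-$\Nstar$-neighborhood is large, establish that each such vertex has many $\Nstar$-neighbors in $M$ via the same case distinction ($v\in M$ uses the hypothesis on $\mu(G[N(v)\setminus M])$, $v\notin M$ uses the bundle-hitting property), and finish with the same double-counting against the $c$-bounded occurrences assumption. The only cosmetic difference is that you take the strict threshold $\ddstar(v)>\tau$ whereas the paper uses $\ddstar(v)\ge\tau$, which gives you the slightly sharper intermediate bound $|B|\le c|M|/(\tau-c_\Pi+2)$ before relaxing to the claimed one.
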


\begin{proof}

Let $\tau$ a positive integer with $\tau \ge c_\Pi$, and let us define $B=\{v\in V(G):\ddstar(v)\geq \tau\}$ the set of vertices with ``big'' $\ddstar$ in $G$.
Let us first prove that for any $v\in B$, $|\Nstar(v)\cap M| \ge \ddstar(v)-c_\Pi+1$. Let $E' \subseteq E(G)$ be a maximum matching in $G[\Nstar(v)]$ with $|E'|=\ddstar(v)$. Observe that we cannot have $c_\Pi$ edges $e \in E'$ such that $V(e) \cap M = \emptyset$ as if $v \notin M$, then
vertices of $E'$ together with $v$ would form a $c_{\Pi}$-bundle not hit by $M$, a contradiction, and if $v \in M$, this would contradict the hypothesis $\mu(G[N(v)]-M) < c_\Pi$.
Thus, there is at least $|E'|-c_\Pi+1$ edges of $E'$ intersecting $M$, leading to the desired inequality. 
Thus, we get
\[
|B|\tau\leq \sum_{v\in B} \ddstar(v) \leq \sum_{v\in B}( |\Nstar(v)\cap M|+c_\Pi-1).
\]

Moreover, as for any $v \in V(G)$ there are at most $c$ vertices $u$ such that $v \in \Nstar(u)$, we get $\sum_{v\in B} |\Nstar(v)\cap M| \le c|M|$ by the pigeonhole principle (if the inequality was false, then there would exists $v \in M$ with $|\{u : v \in \Nstar(u)\}| > c$).
This leads to $|B|=\frac{c|M|}{\tau-c_\Pi+1}$. 
\end{proof}

We are now ready to describe the general algorithm to solve $\Pi$.
\begin{proof}[Proof of \autoref{th:subexpASQGM}]
    Given an instance $(G,k)$ of $\Pi$, we 
    first use \autoref{cor:bothbranchings} with $p=k^\epsilon$ to obtain in time $2^{\O(k^{1-\epsilon}\log(k))}$ the set of $2^{\O(k^{1-\epsilon}\log(k))}$ triples $(G_2, M, k_2)$ with   $k_2\leq k$, $|M| = \O(k^{1+\epsilon})$, and $\omega(G_2) \le k^\epsilon$.
   
In order to solve $\Pi$ on $(G,k)$, it is now enough to solve it on these instances $(G_2, k_2)$.
Observe that applying the \autoref{lm:removeBIG} to such $(G_2,k_2,M)$ triple with $\tau \ge c_\Pi$ gives a set $B$ of size at most $\frac{c|M|}{\tau-c_\Pi+1} = \O(\frac{\omega(G_2)^{c_1}k^{1+\epsilon}}{\tau-c_\Pi+1}) = \O(\frac{k^{1+\epsilon+\epsilon c_1}}{\tau-c_\Pi+1})$ such that $G_3=G_2\setminus B$ verifies $\ddstar(G_3)\leq \tau$. 
    
    By assumption on the $ASQGM$ property we then have $\tw(G_3)=\O(P(k^\epsilon, \tau) \grid(G))$. Moreover $\tw(G_2)\leq \tw(G_3)+|B|=\O(P(k^\epsilon, \tau) \grid(G))+\O\left (\frac{k^{1+\epsilon+\epsilon c_1}}{\tau-c_\Pi+1} \right )$ as removing a vertex decreases the treewidth by at most $1$.
    We set $\tau=k^{(c_1+2)\epsilon}$. By assumption we have $P(k^\epsilon, k^{(c_1+2)\epsilon})=\O(k^{\frac 12-\epsilon})$. 
    As $\Pi$ is bidimensionnal, there exists $c_1$ such that if $\grid(G) > c_1 \sqrt{k}$, then $(G,k)$ is a no-instance.
    
    Thus, as $\tw(G_2) = \O\left ( k^{\frac 12-\epsilon}\grid(G)\right )+\O\left (\frac{k^{1+\epsilon+\epsilon c_1}}{\tau}\right )=\O\left ( k^{\frac 12-\epsilon}\grid(G)\right )+\O(k^{1-\epsilon})$, observe that if $\grid(G) \le c_1 \sqrt{k}$, then there exists a constant $c$ such that $\tw(G_2) \le c k^{1-\epsilon}$.
    Thus, we use the treewidth approximation of~\cite{korhonen2022single} on $G_2$ with $\ell = c k^{1-\epsilon}$ to obtain  in $2^{\O(\ell)}n^{\O(1)}$ either a $2\ell+1$ treewidth decomposition, or conclude that $\tw(G_2)>\ell$. In the later case, this implies that $\grid(G) > c_1 \sqrt{k}$, and thus we can conclude that $(G,k)$ is a no instance.
     Otherwise, by definition of problems in $\CPi$ we can solve $\Pi$ in time $\tw^{\O(\tw(G_2))})$, which gives the claimed overall time complexity of 
     $2^{\O(k^{1-\epsilon}\log(k))} \times \tw(G_2)^{\O(\tw(G_2))} = 2^{\O(k^{1-\epsilon}\log(k))}$.
\end{proof}

\subsection{From \texorpdfstring{$\ASQGM(\omega,\lr)$}{ASQGM(ω,lr)} to \texorpdfstring{$\ASQGM(\omega,\ddstar)$}{ASQGM(ω,μ*)}.}\label{subsec:asqgm2}

To be able to use \autoref{th:subexpASQGM}, we need to deal with graph classes that have the $\ASQGM(\omega,\ddstar)$ property. This section provides a general framework for obtaining this property via \emph{local radius}.
The local radius was originally introduced by Lokshtanov et al.~\cite{Lokshtanov23Approx} for disks graphs in the context of approximation algorithms. Here we first extend this definition to string graphs.
To that end, we will see string graphs as graphs admitting  
a \emph{thick representation}. In such a representation every vertex $v$ of the considered graph $G$ corresponds to a subset $\D_v$ of the plane that is homeomorphic to a disk, two intersecting such regions have an intersection with non-empty interior, and
the number of maximal connected regions $\mathbb{R}^2\setminus \bigcup_{v \in V(G)} \partial \D_v$ is finite.

To turn a string representation into a thick one, it simply suffices to thicken each string by a small enough amount so that no new intersections occur. On the other hand, note that any thick representation can be turned into a string representation by replacing each connected subset of the plane $\D_u$ by a string that almost completely fills its interior. 
Note that a thick representation is not necessarily a pseudo-disk representation as here, the intersection of two regions,  $\D_u \cap\D_v$, may not be connected, or it may also be that $\D_u\setminus \D_v$ is not connected.\\
Thick representations allow us to extend the definition of \emph{local radius} to all string graphs.
The next definition is illustrated \autoref{fig:graphBaste}.

\begin{definition}\label{def:arrangement-graphintro}
  Let $G$ be a string graph and $\mS$ be a thick representation of it. Let $\mathcal X$ be the set of 
all maximal connected region $\R$ of $\mathbb{R}^2\setminus \bigcup_{D\in \mS} \partial D$, contained in at least one object of $\mS$.
  We define the arrangement graph of $\mS$, denoted $A_{\mS}$, by:
    \begin{itemize}
        \item adding one vertex of each region of $\mathcal X$, and
        \item adding an edge between two vertices if the boundaries of their regions share a common arc.
    \end{itemize}
    Moreover, for each $v\in G$, we denote $\R_{\mS}(v) \subseteq \mathcal X$ the set of regions included in $\D_v$ (recall that $\D_v$ is the region associated to $v$),
    and $V_{\mS}(v) \subseteq  V(A_{\mS})$ the set of vertices associated to the regions of $\R_{\mS}(v)$ (implying $|V_{\mS}(v)|=|\R_{\mS}(v)|$). Finally, we denote $A_{\mS}(v) = A_{\mS}[V_{\mS}(v)]$.
\end{definition}

\begin{definition}[from~\cite{Lokshtanov23Approx}, extended here to string graphs]
  Let $G$ be a string graph.
  \begin{itemize}
  \item Given a thick representation $\mS$ of $G$,
    \begin{itemize}
      \item for any $v \in V(G)$, we define $\lr_{\mS}(v)$ as the radius of the graph $A_{\mS}(v)$
      \item we define $\lr_{\mS}(G)=\min_{v \in V(G)}\lr_{\mS}(v)$
    \end{itemize}
  \item the local radius $\lr(G)$ of $G$ is the minimum over all thick representation $\mS$ of $G$ of $\lr_{\mS}(G)$.
  \end{itemize}
\end{definition}

     In order to show \ASQGM{} we use the framework of Baste and Thilikos~\cite{baste2022contraction} (originally designed for the classic SQGM property), that we recall now.

	\begin{definition}[contractions \cite{baste2022contraction}]
		Given a non-negative integer $c$, two graphs $H$ and $G$, and a surjection $\sigma~:~V(G)\rightarrow V(H)$ we write $H\leq^c_\sigma G$ if 
		\begin{itemize}
			\item for every $x\in V(H)$, the graph $G[\sigma^{-1}(x)]$ has diameter at most $c$ and
			\item for every $x,y\in V(H),\ xy\in E(H) \iff G[\sigma^{-1}(x)\cup \sigma^{-1}(y)]$ is connected.
		\end{itemize}
		We say that $H$ is a $c$-diameter contraction of $G$ if there is a surjection $\sigma$ such that $H\leq^c_\sigma G$ and we write this $H\leq^c G$. Moreover, if $\sigma$ is such that for every $x \in V(H),~|\sigma^{-1}(x)|\leq c'$, then we say that $H$ is a $c'$-size contraction of $G$, and we write $H\leq^{(c')} G$. If there exists an integer $c$ such that $H\leq^c G$, then we say that $H$ is a contraction of $G$.
	\end{definition}

	\begin{definition}[($c_1,c_2)$-extension \cite{baste2022contraction}]
		Given a class of graph $\mathcal G$ and two non-negative integers $c_1$ and $c_2$, we define the $(c_1,c_2$)-extension of $\mathcal G$, denoted by $\mathcal G^{(c_1,c_2)}$, as the class containing every graph $H$ such that there exist a graph $G\in \mathcal G$ and a graph $J$ that satisfy $G\leq^{(c_1)} J$ and $H\leq^{c_2}J$ (see \autoref{fig:bastefig}).
	\end{definition}

         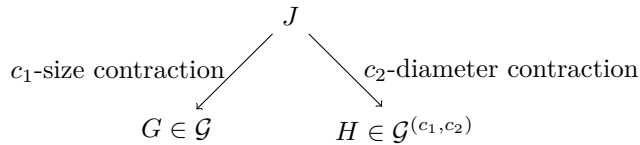
\begin{figure}[!ht]
\centering           
        \begin{tikzpicture}
    		\node (J) at (0,1.5) {$J$};
    		\node (G) at (-1.5,0) {$G\in \mathcal G$};
    		\node (H) at (1.5,0) {$H\in \mathcal G^{(c_1,c_2)}$};
    		
    		\draw[->] (J) -- (G) node[midway,left] {$c_1$-size contraction};
    		\draw[->] (J) -- (H) node[midway, right] {~$c_2$-diameter contraction};
	   \end{tikzpicture}

            \centering

    \caption{A graphical representation of the definition of $\mathcal G^{(c_1,c_2)}$, adapted from \cite{baste2022contraction}.}\label{fig:bastefig}
\end{figure}

	\begin{lemma}[implicit in the proof of {\cite[Theorem~15]{baste2022contraction}}]\label{lem:contract-tw-grid}
		For every integers $c_1,c_2$ and $G\in \mathcal P^{(c_1,c_2)}$, with $\mathcal P$ the class of planar graphs, we have $\tw(G)=\O(c_1 c_2\grid(G))$.
	\end{lemma}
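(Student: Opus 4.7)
The plan is to exploit the two-sided structure witnessing $G \in \mathcal{P}^{(c_1,c_2)}$: let $P \in \mathcal P$ be planar and let $J$ be an intermediate graph with $P \leq^{(c_1)} J$ and $G \leq^{c_2} J$, via surjections $\sigma_P : V(J) \to V(P)$ and $\sigma_G : V(J) \to V(G)$.

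First, I would bound $\tw(J)$ in terms of $\tw(P)$ by a standard blow-up argument: given any tree decomposition $(T, \{X_t\})$ of $P$, the family $\{X'_t\}$ defined by $X'_t = \bigcup_{x \in X_t} \sigma_P^{-1}(x)$ is a tree decomposition of $J$ (each edge of $J$ lies between preimages that are either equal or correspond to adjacent vertices of $P$), and since $|\sigma_P^{-1}(x)| \leq c_1$, this yields $\tw(J) \leq c_1(\tw(P) + 1) - 1$. Using the fact that planar graphs satisfy $\tw(P) = \O(\grid(P))$, we obtain $\tw(J) = \O(c_1 \grid(P))$. Since $G$ is a contraction (hence a minor) of $J$, $\tw(G) \leq \tw(J) = \O(c_1 \grid(P))$.

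The remaining ingredient is $\grid(P) = \O(c_2 \grid(G))$. Because $P$ is a contraction of $J$, any $(r,r)$-grid minor of $P$ lifts to an $(r,r)$-grid minor of $J$ by pulling each cell back through $\sigma_P$. The crux is then to push a grid minor from $J$ down to $G$, losing only a factor $c_2$: namely, to show that if $J$ contains an $(r,r)$-grid as a minor, then $G$ contains an $(\Omega(r/c_2), \Omega(r/c_2))$-grid as a minor. The strategy is a shifting/sparsification argument: extend the grid-minor cells of $J$ so that they cover $V(J)$ (so the $J$-distance between vertices of two distinct cells is at least their $L_1$-grid distance), then select a sub-grid of step $\Theta(c_2)$ and enrich each sub-grid cell with bridging cells towards its grid neighbors, forming super-cells so that (a) each super-cell is connected in $J$, (b) adjacent super-cells are joined by an edge of $J$, and (c) no bag $B_y = \sigma_G^{-1}(y)$, which has diameter at most $c_2$, meets two distinct super-cells. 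Property (c) guarantees that the images of the super-cells under $\sigma_G$ remain vertex-disjoint in $G$, while (a) and (b) provide the grid structure after contraction, yielding the claimed $\Omega(r/c_2)$-grid minor.

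The main obstacle is this grid-transfer step: conditions (b) (adjacency, pushing super-cells closer) and (c) (avoiding bridging bags, pulling them apart) pull in opposite directions, and reconciling them with only a linear loss in $c_2$ requires a careful parameter choice, with both the sub-grid step and the super-cell thickness set to $\Theta(c_2)$. Combining the two bounds gives $\tw(G) \leq \O(c_1 \grid(P)) \leq \O(c_1 c_2 \grid(G))$.
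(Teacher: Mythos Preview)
Your chain $\tw(G)\le \tw(J)\le c_1(\tw(P)+1)=\O(c_1\grid(P))$ is correct, and so is the reduction to showing $\grid(P)=\O(c_2\grid(G))$. The gap is in the grid-transfer from $J$ to $G$. The parenthetical claim that, after extending the branch sets to cover $V(J)$, ``the $J$-distance between vertices of two distinct cells is at least their $L_1$-grid distance'' is wrong: the extension yields a contraction of $J$ onto some graph $\Gamma$ having the $(r,r)$-grid as a spanning subgraph, but $\Gamma$ may contain arbitrary extra edges, so the trivial bound $d_J\ge d_\Gamma$ gives no control over the grid distance. Worse, the underlying claim --- that $G\le^{c_2}J$ alone forces $\grid(J)=\O(c_2\grid(G))$ --- is false in general. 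Take $J$ to be the $(r,r)$-grid together with an apex adjacent to every grid vertex; then $\diam(J)=2$, so $K_1$ is a $2$-diameter contraction of $J$, yet $\grid(J)\ge r$ while $\grid(K_1)=1$. (This particular $J$ admits no planar $c_1$-size contraction for bounded $c_1$, which is why the lemma itself is not contradicted --- but your step 4b never invokes that constraint.)

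The missing ingredient is that the grid minor you push to $G$ is not an arbitrary one in $J$: it was lifted from the \emph{planar} graph $P$. Concretely, the sets $A_g=\sigma_P(\sigma_G^{-1}(g))$ are connected subgraphs of $P$ of diameter at most $c_2$ in $P$ that cover $V(P)$, and it is this covering of a planar graph by small-diameter connected pieces that must be exploited. One clean route (and the one underlying the Baste--Thilikos argument) is to use that a planar graph of large treewidth contains a large (partially triangulated) grid as a \emph{contraction}; for a contraction $\tau$ --- as opposed to a mere minor model --- the inequality $d(\tau(p),\tau(p'))\le d_P(p,p')$ holds by definition, so each $A_g$ projects to a set of grid-diameter at most $c_2$, and then your super-cell argument with step $\Theta(c_2)$ does go through.
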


The main result of this section is the following.
\begin{theorem}
\label{lm:ASQGMlr}
String graphs have the $\ASQGM(\omega, \lr)$ property, more precisely for a string graph $G$ we have $\tw(G)=\O( \omega(G) \cdot \lr(G) \cdot \grid(G))$.
\end{theorem}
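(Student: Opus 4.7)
The plan is to invoke the Baste--Thilikos framework via \autoref{lem:contract-tw-grid}: it suffices to exhibit, for a string graph $G$, a planar graph $P$ and an auxiliary graph $J$ with $P \leq^{(c_1)} J$ and $G \leq^{c_2} J$, for $c_1 = \omega(G)$ and $c_2 = \O(\lr(G))$. I would fix a thick representation $\mS$ of $G$ realising $\lr(G)$ (after a small generic perturbation so that no three boundaries share a curve and no two disjoint regions share an arc), take $P = A_{\mS}$ --- which is a subgraph of the dual of a planar arrangement of curves, hence planar --- and construct $J$ by ``duplicating'' every arrangement face once per region containing it.

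Concretely, I would set $V(J) := \{(f,v) : v \in V(G),\ f \in V_{\mS}(v)\}$ and add two kinds of edges: (A)~within each vertex $v$, for every $ff' \in E(A_{\mS}(v))$, include $(f,v)(f',v)$, so that the copies indexed by $v$ span $A_{\mS}(v)$; and (B)~for each face $f$, turn $\{(f,v) : v \in \chi(f)\}$ into a clique, where $\chi(f) := \{v : f \in V_{\mS}(v)\}$. The two contractions come from the natural projections $\sigma_1(f,v) := f$ and $\sigma_2(f,v) := v$.

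For $\sigma_1$, the fiber over $f$ is the set $\chi(f)$, which is a clique in $G$ (its regions pairwise intersect), so has size $\leq \omega(G)$, and is a clique in $J$ by the (B) edges; this gives a size-$\omega(G)$ contraction onto $A_{\mS}$. For $\sigma_2$, the fiber $\sigma_2^{-1}(v)$ induces exactly $A_{\mS}(v)$ in $J$ (via the (A) edges; the (B) edges link distinct $v$'s and never stay inside a fiber), so its diameter is at most $2\lr_{\mS}(v) \leq 2\lr(G)$, yielding a diameter-$2\lr(G)$ contraction onto $G$.

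The main technical step, which I expect to be the real obstacle, is checking that both contractions are \emph{faithful} to adjacencies. For $\sigma_1$, when $ff' \in E(A_{\mS})$ the shared arc lies on a single boundary $\partial \D_u$ (by general position), so $\chi(f) \mathbin{\triangle} \chi(f') = \{u\}$; since both $\chi(f)$ and $\chi(f')$ are non-empty (as $f,f' \in \mathcal X$), one is contained in the other and they share some vertex $v'$, and the (A) edge $(f,v')(f',v')$ connects the two fibers; conversely, only (A) edges cross distinct $\sigma_1$-fibers, and these require $ff' \in E(A_{\mS})$. Similarly, $uv \in E(G)$ iff $V_{\mS}(u) \cap V_{\mS}(v) \neq \emptyset$ iff there is a (B) edge across the fibers (and only (B) edges can do so). Once this is in place, \autoref{lem:contract-tw-grid} yields $\tw(G) = \O(\omega(G) \cdot \lr(G) \cdot \grid(G))$.
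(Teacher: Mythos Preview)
Your proposal is correct and follows essentially the same route as the paper: both build an auxiliary graph $J$ whose vertices are the pairs (face, object containing it), contract the face-fibers to obtain the planar arrangement graph $A_{\mS}$ (a size-$\omega(G)$ contraction), contract the object-fibers to obtain $G$ (a diameter-$\O(\lr(G))$ contraction), and apply \autoref{lem:contract-tw-grid}. The only notable difference is that the paper puts a \emph{complete bipartite} graph between $K_f$ and $K_{f'}$ whenever $ff'\in E(A_{\mS})$, whereas you add only the ``same-$v$'' edges of type~(A); your sparser $J$ is what forces you to invoke a general-position assumption (so that $\chi(f)\mathbin{\triangle}\chi(f')=\{u\}$ and hence $\chi(f)\cap\chi(f')\neq\emptyset$), while with the paper's richer edge set the faithfulness of $\sigma_1$ is immediate without any such assumption. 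Your added care in checking both directions of the adjacency ``iff'' for the two contractions is a point the paper leaves implicit.
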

\begin{proof}

  \begin{figure}[h!]
    \centering
    \includegraphics[scale=.55]{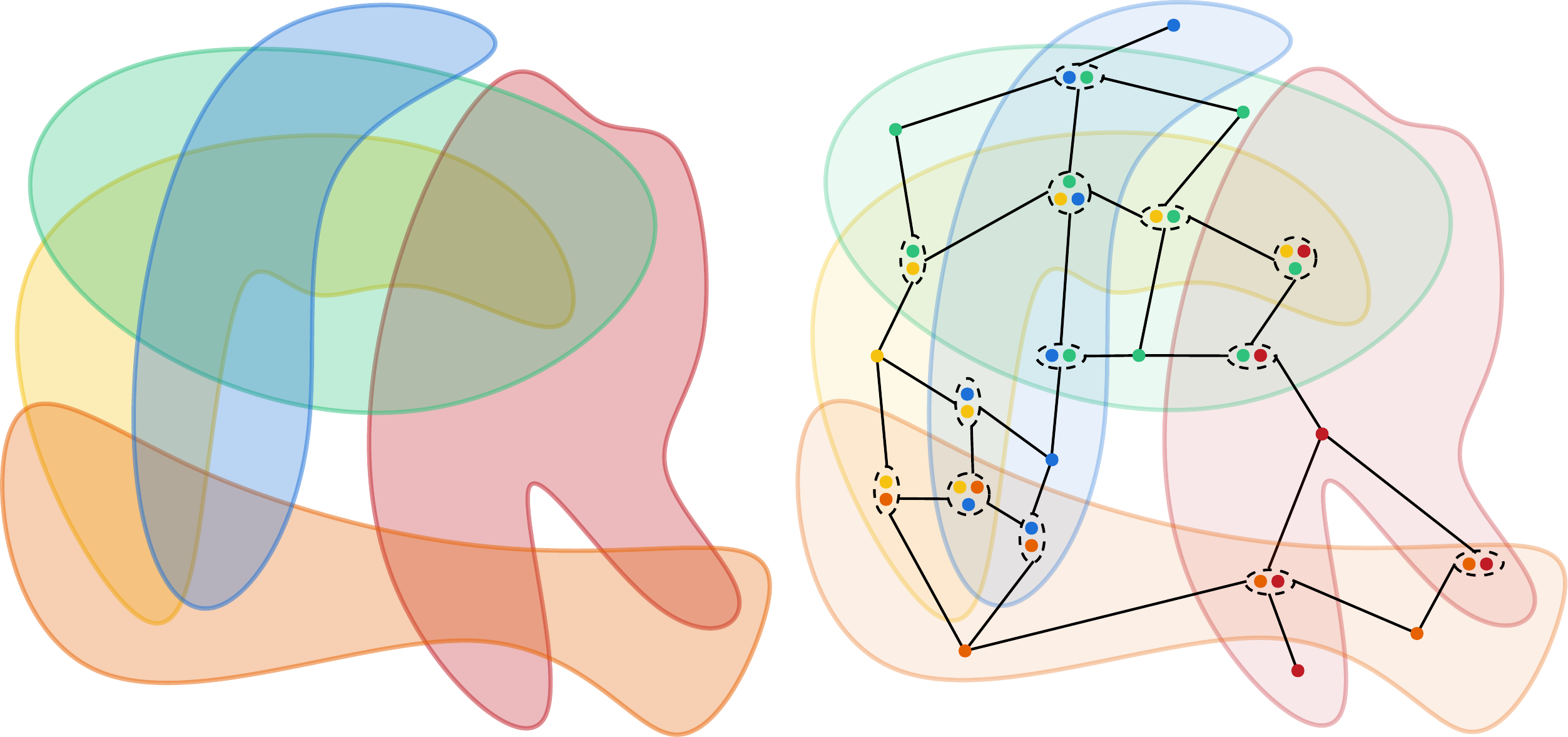}
    \caption{Left: thick representation of a string graph $G$. Right: 
    Illustrates both $A_{\mS}$ and the graph $J$ used in the proof of \autoref{lm:ASQGMlr}. 
    To visualise $A_{\mS}$, consider that each black dotted ellipse is a single vertex (we have $|V(A_{\mS})|=23$). Moreover, if $v$ is the vertex represented in red, we have $|V_\mS(v)|=6$ and $\lr_{\mS}(v)=2$.
    To visualise $J$: for each maximal connected region $\R$ of $\mathbb{R}^2\setminus \bigcup_{D\in \mS} \partial D$, the clique $K_\R$ with more than one vertex is represented by a black dotted ellipse around the clique. For readability only one edge is represented between two cliques instead of the complete bipartite graph.}
    \label{fig:graphBaste}
\end{figure}
  
  Let $G$ be a string graph, and $\mS$ a thick representation such that $\lr_{\mS}(G)=\lr(G)$.
  Let us define a graph $J$ as follows, \autoref{fig:graphBaste} is a representation of the construction. 
  For any maximal connected region $\R$ of $\mathbb{R}^2\setminus \bigcup_{D\in \mS} \partial D$, we add to $J$ a clique $K_\R$ of size $\ply(\R)$.
  Then, for any pair of regions $\{\R_1,\R_2\}$ that share a common arc, we add all edges between $K_{\R_1}$ and $K_{\R_2}$.
  For any $v \in V(G)$, we associate a set $X(v) \subseteq V(J)$ such that
  for any  $\R \in \R_{\mS}(v)$, $|X(v) \cap K_\R|=1$, and such that $X(v) \cap X(u) = \emptyset$ for any $u \neq v$. Notice that the condition $X(v) \cap X(u) = \emptyset$ is possible
  as $|K_\R|= \ply(\R)$, and thus any vertex $v$ can take its ``private'' vertex in $X(v) \cap \R$ for any $\R \in \R_{\mS}(v)$.

  Let us prove that $G$ is a $\lr(G)$-diameter contraction of $J$ by defining a surjection $\sigma: V(J) \rightarrow V(G)$ as follows.
  For any $v \in V(G)$, we define $\sigma^{-1}(v)=X(v)$ (informally we contract all vertices in $X(v)$).
  As for any $v \in V(G)$, $J[X(v)]$ is isomorphic to $A_{\mS}(v)$, we immediately have $\diam(J[\sigma^{-1}(v)])=\lr(G)$.
  Moreover, it is straightforward to check that for every $x,y\in V(G),\ xy\in E(G) \iff J[\sigma^{-1}(x)\cup \sigma^{-1}(y)]$ is connected.
  Now, observe that $A_{\mS}$ (which is planar) is a $\ply(\mS)$-size contraction of $J$ using $\sigma': V(J) \rightarrow V(A_\mS)$ such that for any $v \in V(A_\mS)$, $v$ corresponding to a region $\R$ of the plane delimited by the boundaries of the objects of $\mS$, $\sigma^{'-1}(v)=K_R$.
  As $\ply(\mS) \le \omega(G)$, we get the desired result.
  \end{proof}

The following corollary is immediate from \autoref{th:subexpASQGM} and \autoref{lm:ASQGMlr}. 

\begin{corollary}
  \label{cor:asqgm}
    Given an hereditary graph class $\mathcal C$ which is a subclass of string graphs such that
    \begin{itemize}
        \item maximum clique can be $\O(1)$-approximated in polynomial time,
        \item for any $G \in \mathcal C$, there exists a subneighborhood function $\Nstar$ that has $\O(\omega(G)^{c_1})$-bounded occurrences for some $c_1 \in \mathbb{N}$, and
        \item there exists a multivariate polynomial such that for any $G \in \mathcal C$, $\lr(G)=P(\omega(G), \ddstar(G))$
    \end{itemize}
Then, any problem $\Pi \in \CPi$ admits a parameterized subexponential algorithm on $\C$.
More precisely, let $P'(\omega(G), \ddstar(G))=\omega(G)P(\omega(G), \ddstar(G))$. For any $\epsilon>0$ such that $P'(k^\epsilon, k^{(c_1+2)\epsilon})=\O(k^{\frac 12-\epsilon})$, \FVS can be solved in time $\Ostar(k^{\O(k^{1-\epsilon})})$. This algorithm does not need a representation except if one is required for finding the $\O(1)$-approximation of a maximum clique.
\end{corollary}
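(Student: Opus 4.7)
The plan is to simply combine the two main results of this section, namely \autoref{th:subexpASQGM} and \autoref{lm:ASQGMlr}, and observe that the hypotheses of \autoref{cor:asqgm} are exactly what is needed to feed the former with the conclusion of the latter.

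First I would apply \autoref{lm:ASQGMlr} to the class $\mathcal{C}$: since $\mathcal{C}$ is a subclass of string graphs, every $G \in \mathcal{C}$ satisfies
\[
\tw(G) = \mathcal{O}\bigl(\omega(G)\cdot \lr(G)\cdot \grid(G)\bigr).
\]
Then I would substitute the third hypothesis of the corollary, $\lr(G) \le P(\omega(G),\ddstar(G))$, into this inequality to get
\[
\tw(G) = \mathcal{O}\bigl(\omega(G)\cdot P(\omega(G),\ddstar(G))\cdot \grid(G)\bigr) = \mathcal{O}\bigl(P'(\omega(G),\ddstar(G))\cdot \grid(G)\bigr),
\]
where $P'(x,y)=x\cdot P(x,y)$ is again a multivariate polynomial. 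This is precisely the statement that $\mathcal{C}$ has the $\ASQGM(\omega,\ddstar)$ property, witnessed by $P'$.

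Next I would check that the remaining two hypotheses of \autoref{th:subexpASQGM} hold verbatim: the $\mathcal{O}(1)$-approximability of the maximum clique in polynomial time is assumed directly, and the existence of a subneighborhood function $\Nstar$ with $\mathcal{O}(\omega(G)^{c_1})$-bounded occurrences is also assumed directly with the same constant $c_1$. Therefore, any $\epsilon>0$ for which $P'(k^\epsilon, k^{(c_1+2)\epsilon}) = \mathcal{O}(k^{1/2-\epsilon})$ is an admissible choice in \autoref{th:subexpASQGM}, and the theorem then yields a $2^{\mathcal{O}(k^{1-\epsilon}\log k)}n^{\mathcal{O}(1)}$-time algorithm for any $\Pi \in \CPi$ (in particular \FVS by \autoref{claim:pbincpi}). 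The ``robustness'' statement about not requiring a representation is inherited from \autoref{th:subexpASQGM} as well, since the only step potentially needing a representation is the clique approximation.

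There is no real obstacle here: the corollary is a straightforward repackaging of \autoref{th:subexpASQGM} using \autoref{lm:ASQGMlr} to convert a bound on $\lr$ into the $\ASQGM(\omega,\ddstar)$ property. The only bookkeeping point worth flagging is the multiplicative factor $\omega(G)$ introduced by \autoref{lm:ASQGMlr}, which is why the polynomial controlling the admissible range of $\epsilon$ is $P'$ rather than $P$ itself.
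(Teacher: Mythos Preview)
Your proposal is correct and matches the paper's approach exactly: the paper simply states that the corollary is immediate from \autoref{th:subexpASQGM} and \autoref{lm:ASQGMlr}, and your argument is precisely the natural unpacking of that implication, including the observation that the extra factor of $\omega(G)$ from \autoref{lm:ASQGMlr} is why the controlling polynomial becomes $P'=\omega\cdot P$.
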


\subsection{Upper bounding the local radius for square graphs}
\label{subsec:asqgm3}

Again we provided in the previous section a generic result (\autoref{cor:asqgm}) but so far it might not be clear to the reader which graph classes may satisfy its requirements. To demonstrate the applicability of this result, we show here that square graphs do. This requires to define an appropriate $\Nstar$ and prove that $\lr(G)= \omega(G)^{
O(1)} \cdot \ddstar(G)^{\O(1)}$. A second application, for \CONTACTSEG graphs, is given in the next section.

We say that a graph $G$ is a \emph{square graph} if it is the intersection graph of some collection of (closed) axis-parallel squares in the plane. In the following by \emph{square} we always mean closed and axis-parallel square.
By slightly altering the sizes and positions of the squares in a collection we can obtain a collection where exactly the same pairs of squares intersect and, in addition, all the side lengths of the squares are different from each other and no two sides squares are aligned. Furthermore this can easily be performed in polynomial time.
From now on we will assume that all the representations we consider satisfy this property.

The first requirement of \autoref{cor:asqgm} is provided by following lemma from~\cite{BonnetGM20CliqueDiskLike}, which describes an EPTAS for the clique problem in the more general case of the intersection graph of a fixed convex geometric shape with a central symmetry, while allowing rescaling.

\begin{theorem}[\cite{BonnetGM20CliqueDiskLike}]
    \label{lm:squareCliques}
    There is a polynomial-time 2-approximation of maximum clique in intersection graphs of squares, even when no representation is provided.
\end{theorem}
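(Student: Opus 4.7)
The plan is to follow the approach of Bonnet, Grelier, and Miltzow~\cite{BonnetGM20CliqueDiskLike}, which in fact handles the more general class of intersection graphs of translates and positive scalings of any centrally symmetric convex body. The starting point is the Helly property for axis-parallel squares: a pairwise intersecting family of such squares shares a common point. Hence the maximum clique of a square graph coincides with the maximum ply of any representation, and in particular every clique is ``witnessed'' by a single point in the plane.

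The algorithm iterates over pivots: for each $v \in V(G)$, compute (an approximation of) $\omega(G[N[v]])$ and return the best value. Since any maximum clique lies inside $N[w]$ for each of its vertices $w$, this reduces the problem to $2$-approximating max clique on each $G[N[v]]$. The key structural step is to show that, in any representation, any clique $K \subseteq N[v]$ splits into a ``big'' part (squares at least as large as $\D_v$) and a ``small'' part (strictly smaller), one of which has size at least $|K|/2$. Squares in the big part each contain one of the four corners of $\D_v$, so a further pigeonhole pins a subfamily at a specific corner; a symmetric argument, using that small squares intersecting $\D_v$ concentrate their Helly-witness inside $\D_v$, handles the small part. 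In both cases, the pinned subfamily forms a clique and corresponds to a vertex subset of $N[v]$ that can, with some care, be recognised combinatorially.

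The remaining task is to enumerate such candidate subsets without access to a representation and to compute their maximum clique exactly. The crux is here: one must exhibit, for each pivot $v$, a polynomial-size collection of candidate induced subgraphs of $G[N[v]]$ such that (i) at least one of them contains a clique of size $\omega(G[N[v]])/2$, and (ii) maximum clique is polynomial-time computable on each, e.g., because the candidate falls into a tractable subclass (perfect, co-comparability, or similar). The main obstacle will be precisely this translation of the geometric corner-pinning pigeonhole into a purely graph-theoretic recipe that avoids the unknown representation; Bonnet, Grelier, and Miltzow carry this out by extracting combinatorial certificates of the pinning and exploiting the tractability of max clique on the resulting restricted structures, which then yields the claimed polynomial-time $2$-approximation.
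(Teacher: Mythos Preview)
The paper does not prove this statement; it is quoted directly from~\cite{BonnetGM20CliqueDiskLike}, with the remark that the cited work actually gives an EPTAS for maximum clique in intersection graphs of homothets of any centrally symmetric convex body, of which a $2$-approximation is an immediate consequence. So there is no in-paper argument to compare your sketch against.

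Taken on its own, your outline does not reach a factor-$2$ guarantee. The big/small split of a clique $K\subseteq N[v]$ costs a factor~$2$, but your subsequent pigeonhole over the four corners of $\D_v$ only extracts a clique of size $\ge |K_{\text{big}}|/4$, so what you actually describe produces a clique of size $\ge |K|/8$. To recover factor~$2$ along these lines you would have to compute the maximum clique \emph{exactly} on the big part and on the small part, and corner-pinning does not do this: big squares pinned to distinct corners of $\D_v$ can still pairwise intersect, so the maximum clique among the big squares need not lie inside a single corner group. Your ``symmetric argument'' for the small part is also not symmetric---there is no fixed set of four points that every small square must contain; what is true is only that $\D_v$ contains a corner of each small square, which is a much weaker statement. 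Finally, you correctly identify but do not resolve the robustness issue: the big/small partition and the corner-pinning are both defined via a representation, and you provide no concrete combinatorial surrogate recoverable from $G$ alone; observing that~\cite{BonnetGM20CliqueDiskLike} ``carry this out'' is again a citation rather than a proof step.
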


\begin{definition}\label{def:Nmoins}
  Given a square representation $\mS = \{\D_v\}_{v\in V(G)}$ of a graph $G$, we denote $\ell_{\mS}(\D_v)$ the length of a side of the square $\D_v$,
  $N^-_{\mS}(v)$ (\textit{resp.} $N^+_{\mS}(v)$) the set of vertices $u$ such that $u\in N_G(v)$ and $\ell_{\mS}(\D_u) <  \ell_{\mS}(\D_v)$ (\textit{resp.} $>$). When ${\mS}$ is clear from the context, we will instead write $\ell$, $N^-$ and $N^+$.
\end{definition}

As the lengths of all sides differ, $\{N^+(v), N^-(v)\}$ is a partition of $N(v)$ for every vertex~$v$.

\begin{lemma}
    \label{lm:boundOccSquares}
    Given a square representation $\mS$ of a graph $G$, $N^-$ is a $\O(\omega(G))$-occurrences bounded subneighborhood function.
\end{lemma}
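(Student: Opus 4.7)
The plan is to reduce the bounded-occurrences condition to counting the ``larger intersecting neighbors'' of a single vertex. Precisely, observe that for any $u \in V(G)$,
\[
\{v \in V(G) : u \in N^-(v)\} \;=\; \{v \in N_G(u) : \ell(\D_v) > \ell(\D_u)\} \;=\; N^+(u),
\]
so it suffices to show $|N^+(u)| = \O(\omega(G))$ for every $u$.

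To bound $|N^+(u)|$, I will use a corner-covering argument on $\D_u$. The key geometric claim is that if $\D_v$ is an axis-parallel square with $\ell(\D_v) \ge \ell(\D_u)$ and $\D_v \cap \D_u \neq \emptyset$, then $\D_v$ contains at least one of the four corners of $\D_u$. The reason is that, projecting to the $x$-axis, the projection of $\D_v$ is longer than that of $\D_u$, so either it contains the whole $x$-projection of $\D_u$, or it contains one of its two $x$-endpoints; the same holds on the $y$-axis. A short case analysis on these four possibilities shows in each case that one of the four corners of $\D_u$ lies in $\D_v$.

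Once this is established, I partition $N^+(u)$ into four groups according to which corner $c$ of $\D_u$ is contained in $\D_v$ (breaking ties arbitrarily). Every square in one such group contains the point $c$, so these squares pairwise intersect at $c$; the vertices of this group therefore form a clique in $G$, of size at most $\omega(G)$. Summing over the four corners yields $|N^+(u)| \le 4\omega(G)$, which gives the claimed $\O(\omega(G))$-bound on occurrences.

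The only non-trivial ingredient is the geometric corner lemma, which is a routine but slightly tedious case analysis on which sides of $\D_u$'s bounding box are inside the $x$- and $y$-projections of $\D_v$; I would state it as a short sub-claim before applying the clique/ply argument. Note also that the assumption (made earlier in the preliminaries) that all side lengths in $\mS$ are distinct is used implicitly so that $N^-(v)$ and $N^+(v)$ partition $N(v)$ and strict inequality $\ell(\D_v) > \ell(\D_u)$ is well-defined.
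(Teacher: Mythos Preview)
Your proof is correct and follows essentially the same approach as the paper: both identify that $\{v : u \in N^-(v)\} = N^+(u)$, then use the corner-covering fact (a larger intersecting axis-parallel square must contain a corner of the smaller one) together with the observation that all squares through a fixed point form a clique, yielding the bound $4\omega(G)$. The only difference is that you spell out the projection/case-analysis justification for the corner-covering claim, which the paper simply asserts.
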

\begin{proof}
    $N^-$ is clearly a subneighborhood  function. For $v\in V(G)$, observe that a square larger than $\D_v$ has to contain one of the four corners of $\D_v$ if the two squares intersect. But a corner of $\D_v$ cannot be contained in more than $\omega(G)$ squares. Hence there are at most $4\omega(G)$ vertices $u\in V(G)$ such that $v\in N^-(u)$, and so $N^-$ is $4\omega(G)$-occurrences bounded.
\end{proof}

We will prove that choosing $N^*=N^-$ allows us to bound the local radius.

\begin{definition}\label{def:HIX}
Given a square graph $G$ with representation $\mS$, 
for any $v\in G$, we define $H(v)$ as a minimum vertex cover of $G[N^-(v)]$, $I(v)=N^-(v) \setminus H(v)$, and $X(v) = H(v) \cup N^+(v)$.
\end{definition}

\begin{claim}\label{claim:HIX}
For every vertex $v$ of a square graph $G$ with representation $\mS$, the following properties hold:
\begin{enumerate}
\item $I(v)$ is an independent set of $G$;
\item $|H(v)| \le 2\ddstar(G)$;
\item $|N^+(v)| = \O(\omega(G))$ (as in the proof of \autoref{lm:boundOccSquares});
\item $|X(v)|=\O(\ddstar(G) + \omega(G))$; and
\item $\{X(v),I(v)\}$ is a partition of $N(v)$.
\end{enumerate}
\end{claim}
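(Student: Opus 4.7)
The plan is to verify the five items one after another, each following from basic properties of the definitions in \autoref{def:HIX} combined with geometric observations about squares.

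First I would handle item (1): by definition $H(v)$ is a vertex cover of $G[N^-(v)]$, and $I(v) = N^-(v) \setminus H(v)$ is the complement of a vertex cover in $G[N^-(v)]$, so $G[I(v)]$ has no edges, i.e., $I(v)$ is an independent set of $G$. For item (2), I would invoke the standard fact that a minimum vertex cover has size at most twice the maximum matching: taking both endpoints of any maximum matching of $G[N^-(v)]$ yields a vertex cover, so $|H(v)| \le 2\mu(G[N^-(v)])$. Since $N^* = N^-$ in the square graph setting, $\mu(G[N^-(v)]) = \ddstar(v) \le \ddstar(G)$, giving $|H(v)| \le 2\ddstar(G)$.

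For item (3), I would reuse the corner argument from the proof of \autoref{lm:boundOccSquares}: any square $\D_u$ larger than $\D_v$ that intersects $\D_v$ must contain at least one of the four corners of $\D_v$, and each corner lies in at most $\omega(G)$ squares of $\mS$ (these squares pairwise intersect at that corner and therefore form a clique in $G$). Thus $|N^+(v)| \le 4\omega(G) = \O(\omega(G))$. Item (4) is then just the union bound $|X(v)| \le |H(v)| + |N^+(v)| = \O(\ddstar(G) + \omega(G))$.

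Finally for item (5), I would argue that since all side lengths are distinct, $\{N^-(v), N^+(v)\}$ is a partition of $N(v)$. Then
\[
  X(v) \cup I(v) = H(v) \cup N^+(v) \cup (N^-(v) \setminus H(v)) = N^-(v) \cup N^+(v) = N(v),
\]
and $X(v) \cap I(v) = \emptyset$ because $I(v) \cap H(v) = \emptyset$ by construction and $I(v) \subseteq N^-(v)$ is disjoint from $N^+(v)$. No step here is really an obstacle; the only point that requires a moment's care is checking that the bound for $|H(v)|$ in item (2) uses the global quantity $\ddstar(G)$ correctly, which is immediate since $\ddstar$ is defined as the maximum of $\ddstar(v)$ over $V(G)$.
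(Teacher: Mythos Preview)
Your proposal is correct and matches the intended approach: the paper states this claim without proof, treating each item as immediate from \autoref{def:Nmoins}, \autoref{def:HIX}, and the corner argument of \autoref{lm:boundOccSquares}, which is exactly what you spell out. The only microscopic nitpick is that in item~(2) the relevant fact is that the endpoints of a \emph{maximal} matching form a vertex cover (so the minimum vertex cover has size at most $2\mu$), but your phrasing already yields this.
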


\begin{definition}
    For a curve $\mathcal C:[0,1]\rightarrow \mathbb R^2$ such that for $t\in [0,1],~ \mathcal C(t)=(x(t),y(t))$, we say that $\mathcal C$ is \emph{monotonic} if the functions $x$ and $y$ are monotonic. For $k\geq 2$ we say that $\mathcal C$ is $k$-monotonic if it is the composition\footnote{A curve $\mathcal{C}(t) = (x(t), y(t))$ is the \emph{composition} of $k$ curves $(\mathcal{C}_i(t) = (x_i(t), y_i(t)))_{i\in\{1, \dots, k\}}$ if $(x(0), y(0)) = (x_1(0), y_1(0))$, $(x(1), y(1)) = (x_k(1), y_k(1))$,  $(x_i(1), y_i(1)) = (x_{i+1}(0), y_{i+1}(0))$ for every $i\in \{1, \dots, k-1\}$ and the set of points $\{(x(t), y(t)\}, t\in [0,1]\}$ is the union of the $\{(x_i(t), y_i(t)\}, t\in [0,1]\}$ for $i\in \{1, \dots, k\}$.}
    of $k$ monotonic curves.
\end{definition}

\begin{figure}[!ht]
    \centering
    \includegraphics[width=.8\textwidth]{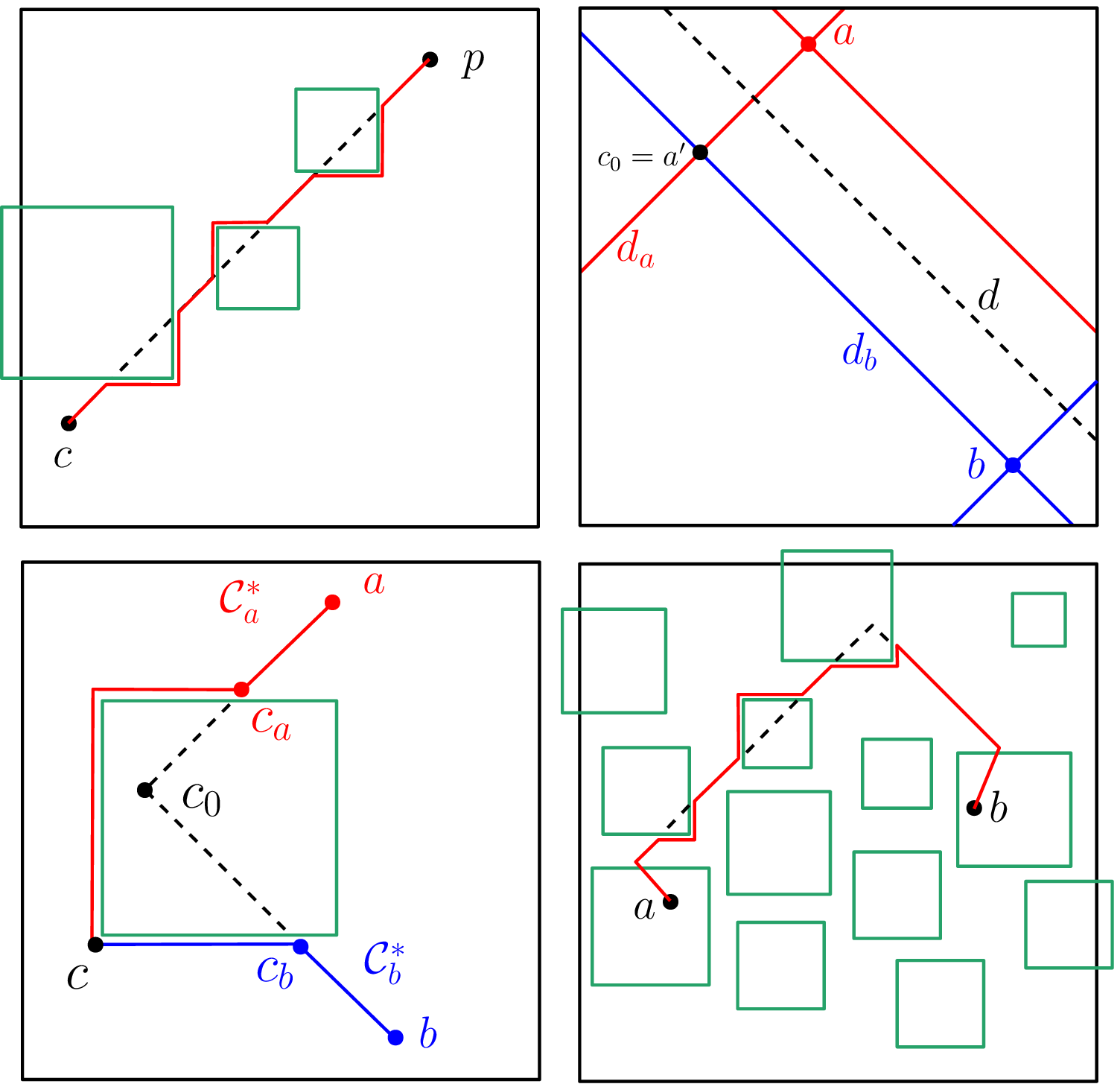}
\caption{Illustrations of the construction used in the proof of the \autoref{lm:curve}. Squares of $I(v)$ are represented in green. Top left: construction used for the \autoref{cl:diag} . Top right: construction used for the \autoref{cl:intersect}. Bottom left: construction used for \autoref{cl:goingaround}. Observe that in this situation $c_a$ and $c_b$ are next to opposite sides of the square containing $c_0$, that $\C_a^*$ can be extended in an counterclockwise direction, and $\C_b^*$ in a clockwise direction, which ensure the existence of a common point $c$ of their monotonic extensions. Bottom right: an example of a $4$-monotonic curve between $a$ and $b$ obtained by the construction of \autoref{lm:curve}. Observe that only two squares of $I(v)$ are crossed.}\label{fig:squarecurve}
\end{figure}

Recall in the next Lemma that $\D_{I(v)}$ denotes the union of all squares in $I(v)$.
\begin{lemma}
  \label{lm:curve}
  Let $G$ be a square graph and $\mS$ a representation.  Let $v\in V(G)$ and $a,b$ two points contained in $\D_v$. There exists a $4$-monotonic curve $\C$  contained in $\D_v$ joining the point $a$ to the point $b$, and crossing at most twice a boundary of the squares of $I(v)$.
\end{lemma}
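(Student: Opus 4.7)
The plan is to construct the curve $\C$ explicitly using two geometric facts: since $I(v)$ is an independent set of $G$, the squares $\{\D_u : u \in I(v)\}$ are pairwise disjoint; and since $\Nstar = N^-$ in this section, each such $\D_u$ meeting $\D_v$ is strictly smaller than $\D_v$ (by the assumption on representations). Up to an orientation-preserving symmetry of the plane, I normalize so that $\D_v = [0,1]^2$ and $a_x \le b_x$, $a_y \le b_y$.

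I establish the lemma through three sub-claims corresponding to the three figure configurations. The first sub-claim (cl:diag) handles the configuration where the segment $[a,b]$ avoids the interiors of all squares of $I(v)$: then $\C = [a,b]$ is $1$-monotonic with $0$ boundary crossings. The second sub-claim (cl:intersect) handles the configuration where $[a,b]$ enters and exits a single square $\D_u$ transversally with $a, b \notin \D_u$: again $\C = [a,b]$ is $1$-monotonic, and since the squares of $I(v)$ are pairwise disjoint $[a,b]$ meets no other $I(v)$-boundary, so there are exactly $2$ boundary crossings.

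The third sub-claim (cl:goingaround) is the main case: $a$ or $b$ lies inside some square of $I(v)$, or a monotonic path from $a$ to $b$ is blocked by some pivot square $\D_u \in I(v)$. I construct two monotonic escape curves $\C_a, \C_b$ from $a, b$ to points $c_a, c_b$ on $\partial \D_u$ lying near opposite sides of $\D_u$, and then find monotonic extensions $\C_a^*, \C_b^*$ along $\partial \D_u$ in the counterclockwise and clockwise directions respectively, meeting at a common point $c$. The concatenation $\C_a \cdot \C_a^* \cdot (\C_b^*)^{-1} \cdot \C_b^{-1}$ consists of at most $4$ monotonic pieces, and its $I(v)$-boundary crossings come only from $\C_a$ and $\C_b$ (at most one each), because the detour along $\partial \D_u$ never crosses $\partial \D_u$ itself, and the other squares of $I(v)$ are disjoint from $\D_u$ and therefore avoided by a sufficiently thin corridor along $\partial \D_u$.

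The main obstacle is this third sub-claim, where the delicate points are: (a) showing that a suitable pivot $\D_u$ and endpoints $c_a, c_b$ on opposite sides of $\D_u$ always exist, which uses $\D_u \subsetneq \D_v$ to guarantee enough room inside $\D_v$ for both escape curves to be chosen monotonically; and (b) verifying that the counterclockwise and clockwise monotonic extensions along $\partial \D_u$ indeed meet at a common point after at most two sides each, which is a direct consequence of $\partial \D_u$ decomposing into four sides and four corners and the two extensions rotating in opposite directions.
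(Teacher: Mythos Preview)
Your proposal has a genuine gap. The core error is in your second sub-claim and it propagates to the third: you assert that if the segment $[a,b]$ enters and exits a single square $\D_u$ of $I(v)$, then ``since the squares of $I(v)$ are pairwise disjoint $[a,b]$ meets no other $I(v)$-boundary.'' This is false. Pairwise disjointness of the squares in $I(v)$ does not prevent a single straight segment from passing through many of them: picture a row of small disjoint squares along a line inside $\D_v$. Your case analysis therefore misses the central difficulty, namely that \emph{any} monotone route from $a$ to $b$ may need to navigate around an unbounded number of $I(v)$-squares, and detouring around one ``pivot'' $\D_u$ does nothing to control crossings with the others. The same problem contaminates your third sub-claim: the escape curves $\C_a, \C_b$ you posit from $a,b$ to $\partial\D_u$ are themselves given no mechanism to avoid the remaining $I(v)$-squares, so the bound ``at most one crossing each'' is unjustified.

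The paper's argument handles this differently, and the mechanism you are missing is its \autoref{cl:diag}: for two points outside $\D_{I(v)}$ on a common $\pm 45^\circ$ line, one follows that diagonal and, \emph{each time} an $I(v)$-square is met, detours monotonically around it (the diagonal direction and the fact that the square is smaller than $\D_v$ guarantee such a detour exists and stays in $\D_v$). This yields a single monotonic curve avoiding \emph{all} $I(v)$-squares, however many there are. The paper's \autoref{cl:intersect} is not a case of your analysis but a geometric existence statement: two such diagonals through $a$ and $b$ always meet at some $c_0\in\D_v$, so when $c_0\notin\D_{I(v)}$ the two diagonal curves concatenate to a $2$-monotonic curve avoiding $\D_{I(v)}$ entirely. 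Only when $c_0$ lies in some $\D_u$ is the local go-around (\autoref{cl:goingaround}) invoked, and at that point both approaching curves are already $I(v)$-free by the diagonal-with-detours construction. The only $I(v)$-boundary crossings come from the preliminary step of pushing $a$ and $b$ out of any containing $I(v)$-square by a short axis-parallel segment. Without the diagonal-with-detours idea, your single-pivot approach cannot bound the number of $I(v)$-crossings.
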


\begin{proof}
  In what follows, what we call a \emph{diagonal line (resp. half line)} any line (resp. half line) having an angle $+45^\circ$ or $-45^\circ$ with the horizontal axis,
  and a \emph{diagonal of a point $p$ in the plan} a diagonal half line whose endpoint is $p$.

  The first step for the creation of the curve is to reduce to the case where the point $a$ and $b$ are outside $\D_{I(v)}$. If this is not the case, for example if $a$ in contained in a square $s=\D_u$ with $u\in I(v)$, we create a rectilinear curve from $a$ toward the outside of $s$, in a direction such that the intersection of the curve with the boundary of $s$  is still in $\D_v$ (see the construction in \autoref{fig:squarecurve} for an example of such reduction). As such curve is monotonic and crosses the boundary of a square of $I(v)$ exactly once, after the reduction we are in the case where we want to construct a $2$-monotonic curve between two points of $\D_v\setminus \D_{I(v)}$ such that no square of $I(v)$ is crossed. In what follow we suppose we have reduced to this case and we still denote $a$ and $b$ the two points of $\D_v \setminus \D_{I(v)}$ we want to join by a curve.
  \begin{claim}
    \label{cl:diag}
        Given two points $c,p\in \D_v\setminus \D_{I(v)}$ on the same  diagonal line, there is a monotonic curve included in $\D_v\setminus \D_{I(v)}$ between $c$ and $p$.
    \end{claim}
    \begin{proof}
        The construction is represented in \autoref{fig:squarecurve}. The curve is created by starting from the point $c$, then by following the diagonal line toward $p$. When encountering a square $s=\D_u$ of a vertex $u \in I(v)$, it is always possible of getting around $s$ in order to join back the diagonal on the other side, and doing so in a direction such that the curve is still monotonic and contained in $\D_v$.
    \end{proof}

    \begin{claim}
        \label{cl:intersect}
There are diagonals $d_a$ of $a$ and $d_b$ of $b$ intersecting on a point $c_0\in \D_v$.
    \end{claim}
    \begin{proof}
        Consider the line $d$ parallel to the top left to bottom right diagonal of $\D_v$ (see \autoref{fig:squarecurve}), at equal distances of the points $a$ and $b$. By symmetry of the square and of the variables $a$ and $b$, we can suppose that $d$ goes from top left to bottom right, is above the diagonal of $\D_v$, and that $a$ is above $d$. The symmetric $a'$ of the point $a$ relatively to $d$ is inside $\D_v$ and is contained in a diagonal of both $a$ and $b$.
    \end{proof}

Now, if $c_0\in \D_v\setminus \D_{I(v)}$, composing the two curves toward $c_0$ given by the previous claim gives the wanted result.

It remains to deal with the case where 
$c_0$ lies in some square $s=\D_u$ for $u \in I(v)$. Let $c_a$ be a point of $d_a$ between $a$ and the square $s$, at an infinitely small distance outside of $s$. \autoref{cl:diag} gives a monotonic curve $\mathcal C_a^*$ from $a$ to $c_a$. In the same way we define $c_b$ and $\mathcal C_b^*$.

    \begin{claim}
    \label{cl:goingaround}
        There exists a point $c\in \D_v\setminus \D_{I(v)}$ such that $\mathcal C_a^*$ and $\mathcal C_b^*$ can be extended to $c$ while still being monotonic and contained in $\D_v\setminus \D_{I(v)}$.
    \end{claim}

    \begin{proof}
        
        We can assume that $d_a$ and $d_b$ are perpendicular as otherwise the points $a$ and $b$ are on the same diagonal and so \autoref{cl:diag} gives the wanted result by taking $c=b$. 
        Observe that if $c_a$ and $c_b$ are arbitrarily close to the same side of $s$, then prolonging $\C_a^*$ toward $c_b$ would keep the curve monotonic, as $\C_a^*$ was already going toward $d_b$ as $d_a$ and $d_b$ intersect in $s$. So taking $c=c_b$ would give the wanted result.

        Otherwise if $c_a$ and $c_b$ are at arbitrarily small distance from two different sides, observe that the curve $\C_a^*$ can be extended running alongside the boundary of $s$ until crossing $2$ corners. The same is true for $\C_b^*$ so the only situation where those extensions do not cross each other would be if $c_a$ and $c_b$ are next to opposite side of $s$, and that the orientations of $d_a$ and $d_b$ force the extensions of $\C_a*$ and $\C_b^*$ to go in the same direction around $s$. However, this is impossible: as $d_a$ and $d_b$ cross each other inside of $s$, one extension will go clockwise around $s$ and the other counterclockwise (see \autoref{fig:squarecurve}). This ensures that $\C_a^*$ and $\C_b^*$ can be extended around $s$ while still being monotonic in order for them to join on a point $c$ while staying outside of $\D_{I(v)}.$
    \end{proof}
    
    Composing the two curves obtained by the above claim gives a path as wanted.
\end{proof}

  \begin{figure}[!ht]
    \centering
    \includegraphics[width=.8\textwidth]{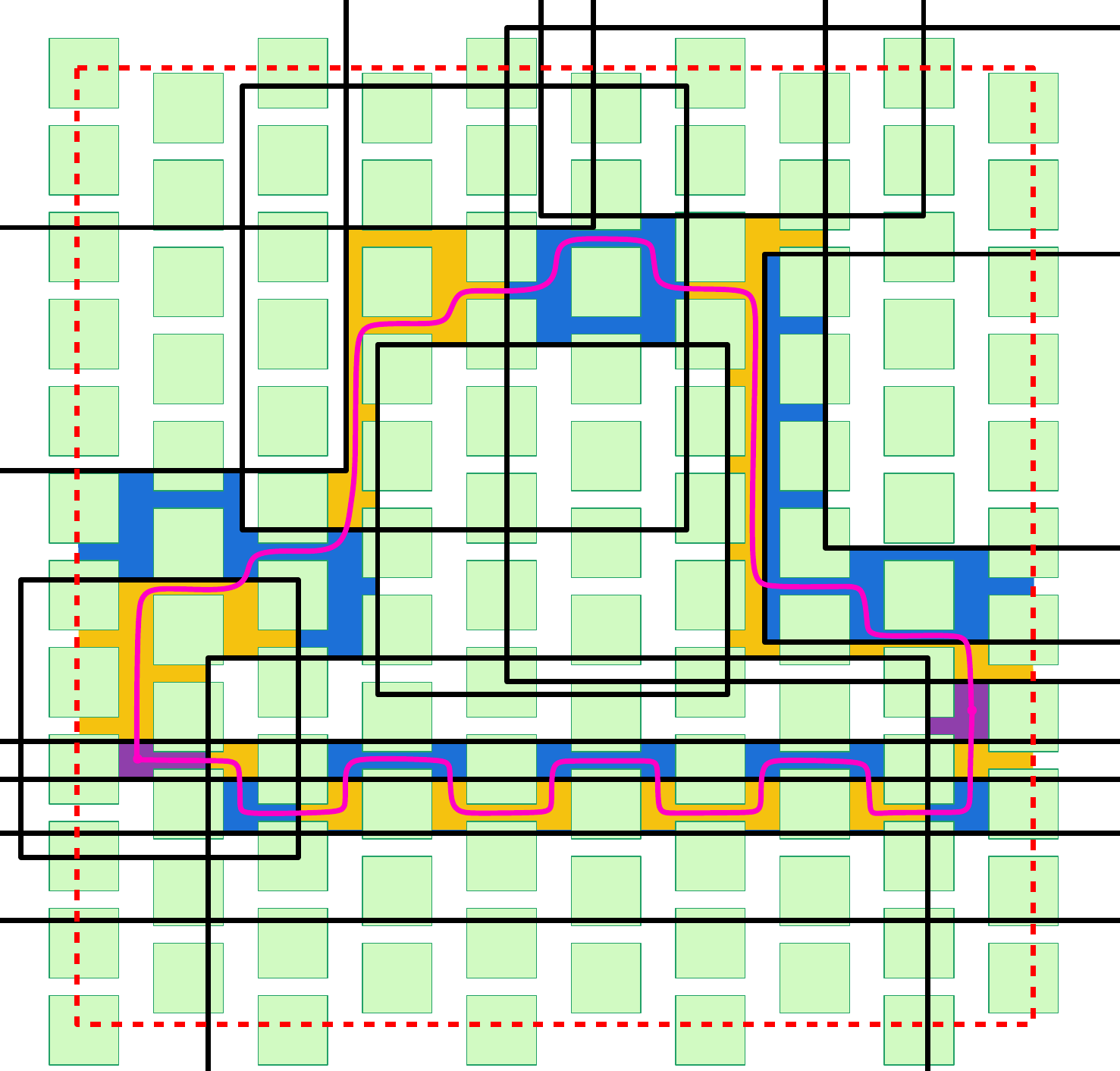}
\caption{Examples of paths in the configuration graph, with $\D_v$ represented with a dashed red square, $I(v)$ by green squares and the sides of the squares of $X(v)$ in black. Here we can see two curves between the two purple regions, $\mathcal{C}_1$ (that goes up and then down) and $\mathcal{C}_2$, and the path in $A_{\mS}(v)$ associated to each curve as in the proof of \autoref{lm:lrBoundSquare}, where the regions traversed by the paths are alternatively colored blue and yellow. Notice that $\mathcal{C}_1$ is $2$-monotone, whereas $\mathcal{C}_2$ is $c$-monotone, where $c$ could be made arbitrary large by creating more and smaller squares in $I(v)$. As $c$ is large, there is a side of a square in $X(v)$ crossed many times (eight) by $\mathcal{C}_2$, and thus we do not use curve like $\mathcal{C}_2$ in the proof.}
\label{fig:pathsquare}
\end{figure}

We are now ready to prove the main combinatorial statement of this section.
\begin{lemma}
    \label{lm:lrBoundSquare}
Let $G$ be a square graph. There exists a subneighborhood function $\Nstar$ which is $\omega(G)$-occurrences bounded and such that $\lr(G)=\O(\ddstar(G) + \omega(G))$.
\end{lemma}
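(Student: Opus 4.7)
Fix a representation $\mS$ of $G$ and choose $\Nstar = N^-$ (smaller neighbors), as defined in \autoref{def:Nmoins}. By \autoref{lm:boundOccSquares}, $N^-$ is $\O(\omega(G))$-occurrences bounded, settling the first part of the statement. It remains to bound $\lr_\mS(v)$ for some (in fact, any) vertex $v$, as then $\lr(G)\leq \lr_\mS(G)=\min_v \lr_\mS(v)$.

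Fix an arbitrary $v \in V(G)$ and consider the partition $N(v) = X(v) \sqcup I(v)$ given by \autoref{def:HIX}. By \autoref{claim:HIX}, $I(v)$ is an independent set of $G$ (so the squares of $I(v)$ are pairwise disjoint) and $|X(v)| = \O(\ddstar(G) + \omega(G))$. Pick an arbitrary point $b \in \D_v$ and let $R_b$ denote the region of $A_{\mS}(v)$ that contains $b$. The plan is to show that every other region $R$ of $A_{\mS}(v)$ lies at distance $\O(\ddstar(G) + \omega(G))$ from $R_b$ in $A_\mS(v)$; this will imply $\lr_{\mS}(v) = \O(\ddstar(G) + \omega(G))$.

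Given any point $a$ in a region $R$ of $A_{\mS}(v)$, apply \autoref{lm:curve} to obtain a $4$-monotonic curve $\C$ contained in $\D_v$ joining $a$ to $b$, crossing the boundaries of $I(v)$-squares at most twice in total. Every time $\C$ crosses a boundary arc separating two regions of $A_{\mS}(v)$, it corresponds to moving along one edge of $A_{\mS}(v)$; hence the number of boundary crossings of $\C$ upper-bounds the distance from $R$ to $R_b$ in $A_{\mS}(v)$. The crossings of $\C$ fall into three categories: (i) crossings of $\partial \D_v$, of which there are none since $\C \subseteq \D_v$; (ii) crossings of boundaries of $I(v)$-squares, at most $2$ by \autoref{lm:curve}; and (iii) crossings of boundaries of $X(v)$-squares. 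For (iii), the key observation is that a monotonic curve enters any axis-parallel square in a single connected arc (parametrise by $x$ and note that $y=f(x)$ is monotonic, so the set of $x$ for which $(x,f(x))$ lies in the square is the intersection of two intervals), hence crosses its boundary at most twice; since $\C$ is the composition of $4$ monotonic pieces, it crosses the boundary of each square at most $8$ times. This yields at most $8|X(v)| = \O(\ddstar(G) + \omega(G))$ crossings from $X(v)$-squares.

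Summing over the three categories, the total number of crossings is $\O(\ddstar(G) + \omega(G))$, so $d_{A_{\mS}(v)}(R, R_b) = \O(\ddstar(G) + \omega(G))$. As this holds for every region $R$ of $A_{\mS}(v)$, we conclude $\lr_{\mS}(v) \leq \O(\ddstar(G) + \omega(G))$, and therefore $\lr(G) = \O(\ddstar(G) + \omega(G))$, as desired. The main technical point is the monotonicity argument bounding the number of crossings of each $X(v)$-square's boundary by a constant; everything else is bookkeeping using \autoref{claim:HIX} and \autoref{lm:curve}.
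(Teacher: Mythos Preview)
Your proof is correct and follows essentially the same strategy as the paper: choose $\Nstar=N^-$, use the partition $N(v)=X(v)\sqcup I(v)$ from \autoref{claim:HIX}, invoke \autoref{lm:curve} for the $4$-monotonic curve, and bound the number of boundary crossings. The only cosmetic differences are that the paper counts crossings per side (at most one per axis-parallel side per monotonic piece, giving $16|X(v)|$) whereas you count per square (the preimage of a square under a monotonic curve is an interval, giving $8|X(v)|$), and that the paper bounds the diameter of $A_\mS(v)$ while you fix a center $b$ and bound the radius directly; both yield the same $\O(\ddstar(G)+\omega(G))$ conclusion.
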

\begin{proof}
  Let $\mS$ be a square representation of $G$, and let $\Nstar$ as defined in \autoref{def:Nmoins}, which is $\omega(G)$-occurrences bounded according to \autoref{lm:boundOccSquares}.
   Let us now prove that $\lr_\mS(G)=\O(|X(v)|)$. This will imply the required result as $\lr(G) \le \lr_\mS(G)$ and $|X(v)|=\O(\ddstar(G) + \omega(G))$ by \autoref{claim:HIX}.
    To that end, let us bound the diameter of $A_{\mS}[V_\mS(v)]$. Let $u,v$ be two vertices of $A_{\mS}[V_\mS(v)]$, and let us bound the distance between these two vertices. 
    Remember that any vertex in $A_{\mS}[V_\mS(v)]$ corresponds to an inclusion-wise maximal rectangular region of the plane included in $\D_v$, and delimited by edges of squares of $\mathcal{S}$. 
    Let $a$ and $b$ be points in the regions of $u$ and $v$ respectively.
    Notice that to any curve inside $\D_v$ we can associate a path in $A_{\mS}[V_\mS(v)]$ by considering the sequence of regions visited by $\mathcal{C}$, and associate to each of the region its corresponding vertex in $A_{\mS}[V_\mS(v)]$ (see \autoref{fig:pathsquare}).
    Thus, we will upper bound the distance from $u$ to $v$ in $A_{\mS}[V_\mS(v)]$ by constructing a curve $\mathcal{C}$ from $a$ to $b$, and by counting the length of the sequence of regions visited by $\mathcal{C}$.

    We use for $\mathcal C$ the $4$-monotonic curve between $a$ and $b$ defined in \autoref{lm:curve}.
    Observe the following property $\pi_0$: any monotonic curve inside $\D_v$ crosses at most $4|X(v)|$ sides of squares in $X(v)$.
    Indeed, as each square in $X(v)$ has at most $4$ sides intersecting $\D_v$, and any side, as a vertical or horizontal segment intersecting in $\D_v$, can be crossed at most one time by a monotonic curve.
    Observe also that, each time $\mathcal{C}$ leaves its current region, $\mathcal{C}$ must cross a side of a square in $N(v)$.
    However, the total number of crossings between $\mathcal{C}$ and a side of a square in $N(v)$ is at most $16|X(v)|+4$, as each of the four monotonic part of $\mathcal{C}$ crosses at most $4|X(v)|$ sides of squares in $X(v)$ (by $\pi_0$), 
    and $\mathcal{C}$ crosses at most $4$ sides of squares in $I(v)$ (the worst case being when $a \neq a'$, and $\mathcal{C}_{a\rightarrow a'}$ crosses the corner of the square in $I(v)$ containing $a$, and same for $b,b'$).
    Thus, the curve $\mathcal{C}$ goes from a region to the next one at most $16|X(v)|+4$ times, implying that the diameter of $A_{\mS}[V_\mS(v)]$, and so the local radius $\lr_{\mS}(G)$, are in $\O(|X(v)|)$.
\end{proof}

As announced in the introduction of the section, we are now able to apply 
\autoref{cor:asqgm}.

\begin{theorem}\label{thm:pisquare}
    Any problem $\Pi \in \CPi$ can be solved in time $2^{\O(k^{9/10} \log (k))}n^{\O(1)}$ in square graphs, even when no representation is given.
\end{theorem}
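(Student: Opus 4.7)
The plan is to instantiate \autoref{cor:asqgm} for the class of square graphs, since all three hypotheses of the corollary have been established in the preceding subsection. First, square graphs form a hereditary subclass of string graphs (each axis-parallel square can trivially be turned into a Jordan arc with the same intersection pattern by thinning, or conversely by thickening we obtain a valid thick representation). Second, \autoref{lm:squareCliques} gives a polynomial-time $2$-approximation of the maximum clique that does not require a representation. Third, \autoref{lm:boundOccSquares} provides the required subneighborhood function $N^\star = N^-$ with $\O(\omega(G))$-bounded occurrences, so we may take the exponent $c_1 = 1$. Finally, \autoref{lm:lrBoundSquare} shows that with this choice of $\Nstar$ we have $\lr(G) = \O(\omega(G) + \ddstar(G))$, which is a (multivariate) polynomial $P(\omega, \ddstar) = \omega + \ddstar$ as required.

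It remains to pick $\epsilon$ so that the technical inequality in \autoref{cor:asqgm} is satisfied. With $c_1 = 1$ and $P'(\omega, \ddstar) = \omega \cdot P(\omega, \ddstar) = \omega^2 + \omega \ddstar$, plugging in the target exponents gives
\[
P'\!\left(k^\epsilon,\, k^{(c_1+2)\epsilon}\right) = P'\!\left(k^\epsilon,\, k^{3\epsilon}\right) = k^{2\epsilon} + k^{4\epsilon} = \O\!\left(k^{4\epsilon}\right).
\]
The condition $P'(k^\epsilon, k^{3\epsilon}) = \O(k^{1/2 - \epsilon})$ thus becomes $4\epsilon \le \tfrac12 - \epsilon$, i.e. $\epsilon \le \tfrac{1}{10}$. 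Choosing $\epsilon = \tfrac{1}{10}$ and invoking \autoref{cor:asqgm} yields an algorithm with running time $2^{\O(k^{1-\epsilon} \log k)} n^{\O(1)} = 2^{\O(k^{9/10} \log k)} n^{\O(1)}$, which is exactly the claimed bound.

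Since the approximation of maximum clique from \autoref{lm:squareCliques} works without a representation, and \autoref{cor:asqgm} explicitly states that the resulting algorithm is representation-free in this case, the final algorithm is robust. There is no serious obstacle beyond the arithmetic of balancing $\epsilon$: the combinatorial ingredients (the EPTAS-style clique approximation, the occurrence bound for $N^-$, and the curve-based local radius bound) have all been proved in the preceding lemmas.
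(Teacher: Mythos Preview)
Your proof is correct and follows essentially the same approach as the paper: you instantiate \autoref{cor:asqgm} with $c_1=1$ and $P(x,y)=x+y$, then solve $k^{4\epsilon}=\O(k^{1/2-\epsilon})$ to get $\epsilon=\tfrac{1}{10}$. The paper's proof is terser but performs exactly the same computation $k^\epsilon(k^\epsilon+k^{3\epsilon})=\O(k^{1/2-\epsilon})$ and reaches the same conclusion.
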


\begin{proof}
Let $\Pi \in \CPi$.
According to \autoref{lm:squareCliques}, \autoref{lm:lrBoundSquare}, we can apply \autoref{cor:asqgm} with $c_1=1$, and $P(x,y)=x+y$.
  This implies that for any $\epsilon$ such that $k^\epsilon(k^\epsilon+k^{3\epsilon})=\O(k^{\frac{1}{2}-\epsilon})$, $\Pi$ can be solved in $\Ostar(k^{\O(k^{1-\epsilon})})$ in square graphs.
  Taking $\epsilon=\frac{1}{10}$ leads to the claimed complexity.
\end{proof}

\subsection{Upper bounding the local radius for contact segment graphs}
\label{sec:asqgm-contact-seg}

As announced above, we will show here that \CONTACTSEG graphs satisfy the requirements of \autoref{cor:asqgm}, which will have the following algorithmic consequence.

\begin{theorem}\label{thm:fvscontact}
    Any problem $\Pi \in \CPi$ can be solved in time $2^{\O(k^{7/8} \log (k))}n^{\O(1)}$ in \CONTACTSEG graphs, even when no representation is given.
\end{theorem}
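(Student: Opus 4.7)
The plan is to apply \autoref{cor:asqgm} to \CONTACTSEG graphs, mirroring the strategy used in the proof of \autoref{thm:pisquare}. To reach the exponent $7/8$ in the running time, we target $c_1 = 0$ together with $P(\omega, \ddstar) = \omega + \ddstar$: then $P'(x, y) = x(x + y)$, and the condition $P'(k^\epsilon, k^{(c_1+2)\epsilon}) = \O(k^{1/2 - \epsilon})$ of \autoref{cor:asqgm} becomes $k^\epsilon(k^\epsilon + k^{2\epsilon}) = \O(k^{1/2 - \epsilon})$, which holds for $\epsilon = 1/8$ and yields the announced running time $2^{\O(k^{7/8}\log k)} n^{\O(1)}$.

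For the first hypothesis of \autoref{cor:asqgm}, we will establish a polynomial-time $\O(1)$-approximation of maximum clique in \CONTACTSEG graphs that does not require a representation, either by a direct argument or by invoking a known result about the structure of cliques in contact segment representations. For the second hypothesis, fix a contact segment representation $\mS$ of $G$ and define $\Nstar(v)$ as the set of neighbors $u$ of $v$ such that $s_u$ has an endpoint in the relative interior of $s_v$ (equivalently, the contact between $s_u$ and $s_v$ lies in the relative interior of $s_v$). The contact condition forbids two distinct segments to share a point that is in the relative interior of both of them, so any given point of the plane is in the relative interior of at most one segment of $\mS$. Since $s_u$ has only two endpoints, this gives $|\{v \in V(G) : u \in \Nstar(v)\}| \le 2$, hence $\Nstar$ has $\O(1)$-bounded occurrences and we may take $c_1 = 0$.

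The heart of the proof is the local radius bound $\lr(G) = \O(\omega(G) + \ddstar(G))$. As in \autoref{claim:HIX}, we split $N(v) = I(v) \cup X(v)$, where $H(v)$ is a minimum vertex cover of $G[\Nstar(v)]$, $I(v) = \Nstar(v) \setminus H(v)$ is an independent set, and $X(v) = H(v) \cup N^+(v)$ with $N^+(v) = N(v) \setminus \Nstar(v)$. Then $|H(v)| \le 2\ddstar(v)$. Moreover, a neighbor in $N^+(v)$ touches $s_v$ only at one of its two endpoints, and all segments passing through a common point pairwise touch at that point, so the segments through an endpoint of $s_v$ form a clique and $|N^+(v)| \le 2(\omega(G) - 1)$; in particular $|X(v)| = \O(\ddstar(v) + \omega(G))$. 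To bound $\lr_\mS(v)$ by $\O(|X(v)|)$, we work in a thick representation and view the thickening $R_v$ of $s_v$ as a long thin rectangle. Since contact segments do not cross, no intrusion $R_u \cap R_v$ for $u \in N(v)$ traverses $R_v$ entirely: it is a small rectangle attached to one of the long sides of $R_v$ (for $u \in \Nstar(v)$) or to a corner of $R_v$ (for $u \in N^+(v)$). Intrusions coming from $I(v)$ are pairwise disjoint (by independence) and can be bypassed by a central spine running along $R_v$; the $\O(\omega)$ neighbors in $N^+(v)$ contribute $\O(\omega)$ regions clustered near the two endpoints of $s_v$; and pairs of intrusions from $\Nstar(v)$ that meet inside $R_v$ (i.e.\ those that correspond to edges of $G[\Nstar(v)]$) are each covered by a vertex of $H(v)$. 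A construction analogous to the monotone-curve argument of \autoref{lm:curve} then yields a path of length $\O(|X(v)|)$ in $A_\mS(v)$ between any two of its vertices.

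The main obstacle should be this last topological argument: controlling the interactions (``walls'') produced by edges of $G[\Nstar(v)]$ inside $R_v$, and the clustering of $N^+(v)$-intrusions at the two endpoints of $s_v$. For each such wall one needs to show that crossing it costs only $\O(1)$ additional hops or that it is already charged to a vertex of $H(v) \subseteq X(v)$, and similarly that the complex local structure at each endpoint of $s_v$ contributes only $\O(\omega)$ to the radius. Once the local radius bound is in place, the running time follows from \autoref{cor:asqgm} with the parameters identified above.
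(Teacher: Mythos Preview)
Your plan is correct and yields the right running time via \autoref{cor:asqgm} with $c_1=0$ and $\epsilon=1/8$; the choice of $\Nstar$ and the bound $|N^+(v)|\le 2(\omega(G)-1)$ are also fine. The difference with the paper lies entirely in how the local radius bound is obtained.

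The paper takes a more direct route and avoids the topological ``main obstacle'' you flag. Instead of adapting the monotone-curve argument from squares, it distinguishes \emph{trivial} interior contact points (all touching segments lie on the same side of $s_v$) from \emph{non-trivial} ones (segments on both sides), and defines $\Nstar(v)$ using only the latter. Then it builds a tailor-made thick representation in which each contact point becomes a ``slot'': for a trivial or endpoint contact the slot is a side pocket whose removal leaves $\D_v$ connected, while a non-trivial slot cuts $\D_v$ into two pieces. This immediately gives $\lr_{\mS}(v)=\O(|\NT(s_v)|)$, and since each non-trivial contact point carries at least two mutually adjacent segments of $\Nstar(v)$, one has $|\NT(s_v)|\le \ddstar(v)$. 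So the paper gets $\lr(G)=\O(\ddstar(G))$ with $P(x,y)=y$, and no $X(v)/I(v)$ decomposition or curve argument is needed at all.

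Your approach would also go through (your $\Nstar$ is a superset of the paper's, so your $\ddstar$ is only larger, and the weaker target $\lr=\O(\omega+\ddstar)$ is certainly true), but it is more laborious: you would have to engineer the thickening so that one-sided intrusions stay off the spine, then argue about walls and endpoint clusters. The paper's slot construction buys you all of that for free.
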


Again, the first requirement of \autoref{cor:asqgm} is the approximability of maximum clique. This is \autoref{lm:segmentCliques}, for the proof of which we first prove two simple statements.

\begin{figure}
    \centering
    \includegraphics{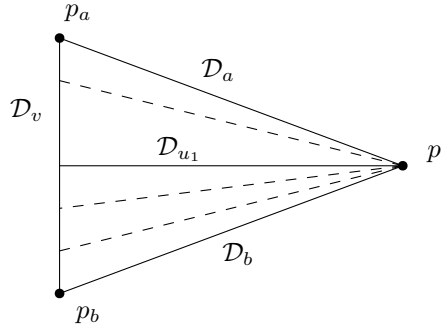}
    \caption{The construction used in the proof of \autoref{lm:cliquePoint}. The dashed segments represent examples of potential additional segments in the representation of the clique.}
    \label{fig:proofPoint}
\end{figure}

\begin{lemma}
    \label{lm:cliquePoint}
    Let $n\geq 1$. For every representation of $K_n$ as a contact-segment graph there is a point of the plane that belongs to at least $n-1$ segments.
\end{lemma}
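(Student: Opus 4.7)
The plan is to prove the lemma by induction on $n$. The base cases $n=1$ and $n=2$ are immediate: any point of a segment works in the first case, and the unique contact point works in the second.

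For the inductive step, fix a contact-segment representation of $K_n$ and pick any segment $s$ with endpoints $a,b$. By definition of a contact representation, every other segment $s'$ touches $s$ at a single point that is either an endpoint of $s$ or an endpoint of $s'$. This partitions the remaining $n-1$ segments into three sets: $A$ (those containing $a$), $B$ (those containing $b$ but not $a$), and $C$ (those whose endpoint lies in the interior of $s$). These sets are disjoint, since a segment belonging to two of them would share more than a single point with $s$, contradicting the contact condition. If $|A|\ge n-2$ then $a$ lies on at least $n-1$ segments (namely $s$ together with all of $A$), so we are done; the case $|B|\ge n-2$ is symmetric.

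The core case is therefore $|A|,|B|\le n-3$, forcing $|C|$ to be large. The central geometric fact used throughout is: \emph{in any contact-segment representation, at most one segment can contain a given point $p$ in its interior}, since otherwise two such segments would touch at $p$ with no endpoint of either involved, violating the contact condition. As a consequence, at every endpoint-point $p$, all but at most one segment through $p$ must have $p$ as an endpoint. This rigidity is what makes the many endpoint-points $p_i\in\mathrm{int}(s)$ produced by the segments of $C$ strongly constrained: each such $p_i$ is shared as an endpoint by all but at most one of the segments of $C$ passing through it.

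The hardest step will be closing the argument in the large-$C$ case. The plan is to apply the inductive hypothesis to the sub-clique $C\cup\{s\}$, which is a $K_{|C|+1}$ of size strictly less than $n$, yielding a point $p$ that lies on at least $|C|$ of its segments. Then, combining the fact that $p$ lies on $s$ (or is forced onto $s$ via the ``at most one interior'' property applied to the endpoints $p_i$ on $s$) with a case analysis on the segments of $A\cup B$, one shows that $p$, or a slightly different endpoint-point in the representation extracted from the same configuration, belongs to at least $n-1$ of the $n$ segments. The figure referenced illustrates exactly this situation: the solid segments are the chosen $s$ and the representatives of $C$ used in the induction, while the dashed segments are the remaining ones of $A\cup B$ whose positions are constrained by the contact conditions and thereby must pass through the claimed common point.
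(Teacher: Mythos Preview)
Your inductive plan has a real gap at the crucial step. After applying the inductive hypothesis to the sub-clique $C\cup\{s\}$ you obtain a point $p$ lying on at least $|C|$ of those $|C|+1$ segments. But the target is a point on $n-1$ of all $n$ segments, which means you still need $p$ (or some other point) to lie on $|A|+|B|$ further segments. You assert that ``a case analysis on the segments of $A\cup B$'' finishes the job, yet no such analysis is carried out, and it is not clear how one would go: the segments in $A$ are only known to contain the endpoint $a$ of $s$, and nothing you have established ties them to~$p$. For instance, if $p\notin s$ (so $p$ lies on every segment of $C$), each segment of $A$ must touch every segment of $C$, but this does not by itself force it through the single point~$p$. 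A secondary problem is that your case split $|A|,|B|\le n-3$ does not exclude $|A|=|B|=0$; in that event $|C|+1=n$ and the induction does not reduce at all. You also omit the base case $n=3$.

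For comparison, the paper's proof is direct rather than inductive. For $n\ge 4$ it locates a segment $\D_v$ that meets the others at two distinct points $p_a,p_b$; the segments $\D_a$ and $\D_b$ through these points, being adjacent, meet at a further point $p\notin\D_v$, and the three segments bound a triangular region with corners $p_a,p_b,p$. Any fourth segment must touch all three of $\D_v,\D_a,\D_b$ in a contact fashion, which forces it to contain one of the three corners; once one such segment contains~$p$, a non-crossing argument shows that every remaining segment must also contain~$p$. This yields the common point in one shot and sidesteps the bootstrapping difficulty your approach runs into.
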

\begin{proof}
    The result is trivial for $n\leq 3$. For $n\geq 4$, let $\mS$ a representation of $K_n$ as contact-segment. If for all segments, the contact with the other segments is made on a single point, then this contact point is common for all the segments and we get the desired result. Otherwise, as represented in \autoref{fig:proofPoint} let $\D_v$ a segment in contact with the other segments on at least two distinct points $p_a,p_b\in \D_v$ and $a,~b$ vertices such that $\D_a$ (respectively $\D_b$) intersects $\D_v$ on $p_a$ (respectively $p_b$). As they correspond to adjacent vertices, $\D_a$ and $\D_b$ have to intersect: we denote $p$ their contact point. By definition of contact-segments graphs, $p\notin \D_v$. Let $u_1$ a vertex distinct from $v,~ a$ and $b$. $\D_{u_1}$ is in contact with $\D_v, ~\D_a$ and $\D_b$, this can be done only by containing either $p_a,~ p_b$ or $p$. Without loss of generality suppose that $\D_{u_1}$ contains $p$, and so an endpoint of the segment $\D_{u_1}$ is contained in $\D_v$. For the remaining vertices $u_2,\dots,u_{n-3}$, we still have that either $p$, $p_a$ or $p_b$ are contained in their segment, but as it is not possible to cross the segment $\D_{u_1}$, it is always $p$ which is contained in the segment. So all the segments except $\D_v$ intersect on $p$, which yields the claimed result.
\end{proof}
\begin{lemma}
    \label{lm:polyMaxClique}
    A contact-segment graph $G$ with $n$ vertices has $\O(n^2)$ maximal cliques.
\end{lemma}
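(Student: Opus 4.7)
The plan is to classify maximal cliques of $G$ by cardinality and bound each class separately, relying on \autoref{lm:cliquePoint} to handle the large ones. Cliques of size at most $2$ are isolated vertices or edges, giving a trivial $\O(n^2)$ contribution, so the work lies in cliques of size at least $3$.

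Fix any contact-segment representation $\mS$ of $G$ and call a point of the plane a \emph{meeting point} if it lies in at least two segments; write $S_p$ for the set of segments through $p$. My first step will be to bound the number of meeting points by $\O(n)$: at any such point $p$, the contact-segment property forces all but at most one segment of $S_p$ to have $p$ as an endpoint, so a double counting against the $2n$ segment-endpoints will bound the number of meeting points.

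For a maximal clique $C$ with $|C| \ge 3$, \autoref{lm:cliquePoint} applied to the induced representation of $G[C] \cong K_{|C|}$ supplies a point $p_C$ lying in at least $|C|-1$ segments of $C$, which is in particular a meeting point. If every segment of $C$ contains $p_C$, then $C \subseteq S_{p_C}$; since $S_{p_C}$ is itself a clique (all its segments pairwise intersect at $p_C$), maximality forces $C = S_{p_C}$, bounding this case by the $\O(n)$ meeting points. Otherwise a unique $v_C \in C$ avoids $p_C$, and my key claim will be that $C = (S_{p_C} \cap T_{v_C}) \cup \{v_C\}$, where $T_v$ denotes the set of segments intersecting $v$. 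The inclusion $\subseteq$ is immediate from the definitions, and the reverse will follow from maximality: any $s \in S_{p_C} \cap T_{v_C}$ outside $C$ would intersect every member of $C$ (through $p_C$ for $C \setminus \{v_C\}$, and through $T_{v_C}$ for $v_C$), hence could be added to $C$. Thus $C$ is encoded by the pair $(p_C, v_C)$, and there are at most $(\text{meeting points}) \cdot n = \O(n^2)$ such pairs.

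The delicate step is this pair-encoding of $C$, where maximality and the contact-segment property together rule out extra segments; everything else is counting. Summing the three contributions yields the claimed $\O(n^2)$ bound.
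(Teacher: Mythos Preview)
Your proposal is correct and follows essentially the same approach as the paper's proof: bound size-$\le 2$ cliques trivially, use \autoref{lm:cliquePoint} to attach a point $p_C$ to each larger maximal clique, and then split into the two cases (all of $C$ through $p_C$ versus one exceptional vertex $v_C$), showing in each case that $C$ is determined by $p_C$ alone or by the pair $(p_C,v_C)$. Your ``meeting point'' terminology and the double-counting bound are just a slightly more explicit version of the paper's one-line remark that $p_C$ must be a segment endpoint and hence takes at most $2n$ values; your sets $S_{p_C}$ and $T_{v_C}$ are exactly the paper's $S$ and $N(v)$.
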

\begin{proof}
    There are $\O(n^2)$ maximal cliques with at most $2$ vertices in $G$, and for a maximal clique $C$ of size at least $|C|\geq 3$, \autoref{lm:cliquePoint} gives the existence of a point $p_C$ contained in all the segments of the representation of the maximal clique except at most one. This point is necessarily the endpoint of a segment, so $p_C$ can take at most $2n$ distinct values. 
    
    Let us now distinguish two cases. If all the segments representing $C$ contain $p_C$: because $C$ is maximal, it is exactly the segments containing $p_C$ and so is entirely determined by $p_C$. This gives that there are at most $\O(n)$ maximal clique for this case.
    The second case is when there is some vertex $v\in C$ such that $p_C$ is not contained in $\D_v$. By denoting $S=\{u\in V(G),~ p_C\in \D_u\}$ we have $C=(S\cap N(v))\cup \{v\}$ as $C$ is maximal. So this time $C$ is determined by the point $p_c$ and the vertex $v$ not containing $p_C$. So there are at most $\O(n^2)$ maximal clique for this case.
    In total we obtain at most $\O(n^2)$ maximal cliques in $G$.
\end{proof}

\begin{lemma}
    \label{lm:segmentCliques}
    There is a polynomial time algorithm that, given a contact-segment graph, returns a maximum clique, even if no representation is provided.
\end{lemma}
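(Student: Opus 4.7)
The plan is to combine \autoref{lm:polyMaxClique} with a standard polynomial-delay enumeration of maximal cliques. Since every clique is contained in some maximal clique, a maximum clique can be recovered simply by enumerating all maximal cliques of $G$ and returning one of maximum size.

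More precisely, I would invoke an output-sensitive algorithm for enumerating maximal cliques, such as the classical algorithm of Tsukiyama, Ide, Ariyoshi, and Shirakawa, which produces all maximal cliques with polynomial delay per clique. If $\mu(G)$ denotes the number of maximal cliques of $G$, this enumeration runs in time polynomial in $n$ and $\mu(G)$. By \autoref{lm:polyMaxClique}, any contact-segment graph satisfies $\mu(G) = \O(n^2)$, so the total running time remains polynomial in $n$. During the enumeration we simply keep track of the largest maximal clique encountered.

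Observe that this procedure operates purely on the abstract graph: both the enumeration algorithm and the size comparison are graph-theoretic and do not consult any geometric representation. The geometric structure is only used implicitly, through \autoref{lm:polyMaxClique}, to guarantee that the number of maximal cliques is polynomial, which in turn guarantees polynomial running time. Hence the algorithm works without requiring a representation as input, which is exactly the statement to be proven. The only minor point to verify is the polynomial-delay bound for the chosen enumeration algorithm, but this is a standard textbook fact and does not require any new argument.
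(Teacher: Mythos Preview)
Your proposal is correct and follows essentially the same approach as the paper: invoke \autoref{lm:polyMaxClique} to bound the number of maximal cliques by $\O(n^2)$, enumerate them with a polynomial-delay algorithm, and return the largest. The paper's proof is virtually identical, citing a polynomial-delay enumeration routine and noting that no representation is needed.
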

\begin{proof}
    By \autoref{lm:polyMaxClique} contact-segment graphs have a polynomial number of maximal cliques.
    Those can be enumerated with polynomial delay (for instance using \cite{enumerateMaxCliques}) so overall in polynomial time one can enumerate all maximal cliques of a contact-segment graph, and return one with the maximum number of vertices.
\end{proof}

\begin{definition}[interior and non-trivial contact points]
Given a \CONTACTSEG representation of a graph and a point $p$ in the plane, we denote by $\E^p$ the set of segments whose one endpoint is $p$.
Given a segment $s$, we denote by
\begin{itemize}
\item $\E(s)$ the two endpoints of $s$,
\item $\ICP(s)$ (for interior contact point) the set of interior points of $s$ that are endpoints of other segments,
\item $\NT(s)$ (for non-trivial) the subset of those points $p \in \ICP(s)$ for which there exist $s_1$ and $s_2$ on the two sides of $s$ such that $p \in \E(s_1) \cap \E(s_2)$, and
\item $\T(s) = \ICP(s)\setminus \NT(s)$ ($\T$ for trivial).
\end{itemize}
\end{definition}

\begin{figure}
    \centering
    \includegraphics[scale=.6]{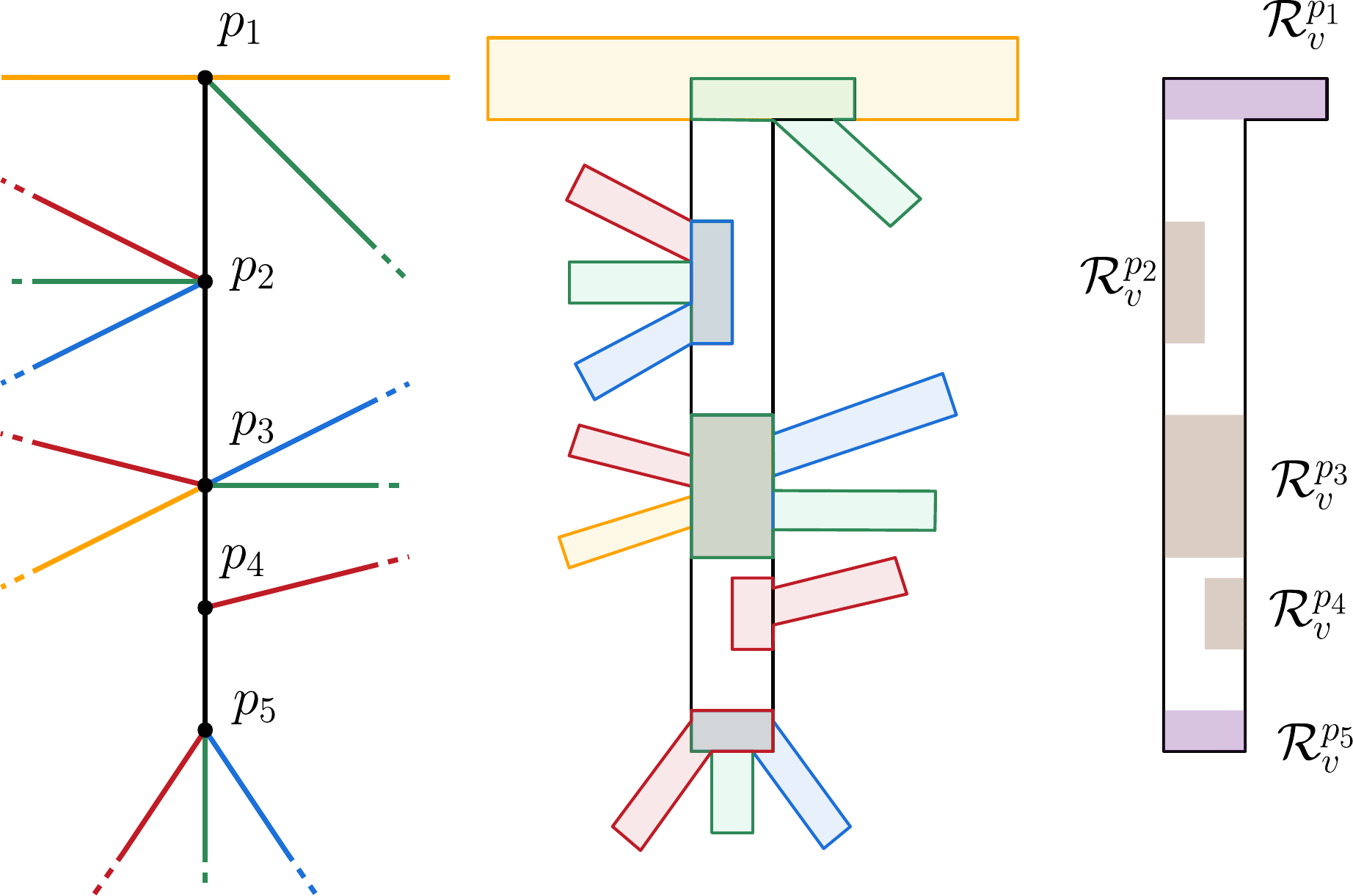}
    \caption{Left: a representation $\mS_1$ of a contact-segment graph. Middle: the thick representation $\mS_2$ defined in \autoref{lm:lrBoundSegment}. Right: the slots of the region associated to a vertex $v$. Here by denoting $s$ the segment representing $v$ we have $\E(s)=\{p_1,p_5\}$, $\ICP(s)=\{p_2,p_3,p_4\}$, $\NT(s)=\{p_3\}$ and $\T(s)=\{p_2,p_4\}$.}
    \label{fig:thickcontactseg}
\end{figure}

\begin{definition}\label{def:nstar2}
  Given a \CONTACTSEG graph $G$ and a representation $\mS= \{\D_v \mid v\in V(G)\}$, for any $v$ we define $\NSstar(v)=\{u \in N(v) \mid \E(\D_u) \cap \NT(v) \neq \emptyset\}$.
When $\mS$ is clear from context, we use $\Nstar$ instead of $\NSstar$.
\end{definition}
Informally, $\Nstar(v)$ is the set of all vertices intersecting $v$ in a non-trivial contact point.

\begin{remark}\label{rem:nstar2}
 With the same definitions as in \autoref{def:nstar2}, the following holds: 
  \begin{itemize}
  \item $\Nstar$ is a $2$-occurrences bounded subneighborhood function (direct from the definition).
  \item  For any $v \in V(G)$, $|\NT(v)|\le \ddstar(v)$ (as all segments ending in a point $p \in \NT(s)$ form a clique of size at least $2$, and thus contribute to a least $1$ on the matching in $G[\Nstar(v)]$).
  \end{itemize}
\end{remark}

\begin{lemma}
  \label{lm:lrBoundSegment}
 Let $G$ be a \CONTACTSEG graph. There exists a subneighborhood function $\Nstar$ which is $2$-occurences bounded and such that $\lr(G)=\O(\ddstar(G))$.
\end{lemma}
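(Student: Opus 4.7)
The plan is to construct, from the contact-segment representation $\mS_1$, a thick representation $\mS_2$ and to show that for every $v\in V(G)$ the radius of the arrangement graph $A_{\mS_2}(v)$ is $\O(\ddstar(v))$. Combined with the two observations from \autoref{rem:nstar2} (namely that $\Nstar$ is 2-occurrences bounded and that $|\NT(s)|\le \ddstar(v)$ for the segment $s$ representing $v$), this will give the statement since $\lr(G)\le \lr_{\mS_2}(G)=\min_v \lr_{\mS_2}(v)$.

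For $\mS_2$ I would thicken each segment $s$ into a thin rectangle around $s$, together with a small disk around each of its two endpoints; the thickness parameter is chosen small enough that two thickened regions intersect if and only if the underlying segments were in contact, and then with an intersection of non-empty interior. The endpoint disks play a crucial role: at any common endpoint $q$ of several segments (which form a fan/clique in $G$), all their thickened regions share a single disk around~$q$, so the intersection of this common disk with $\D_v$ is a single region of $\mathcal X$ adjacent to every fan region sitting inside $\D_v$. Then for a fixed $v$ with segment $s$, I would analyse $A_{\mS_2}(v)$ by walking along $s$ from one endpoint to the other. Each interior contact point $p\in\ICP(s)$ creates an overlap region inside $\D_v$: if $p\in\T(s)$ only one side pokes in, yielding a hanging vertex adjacent to the current slot; if $p\in\NT(s)$, segments from both sides end at~$p$ and the union of their thickenings traverses $\D_v$ from side to side, separating the current slot from the next one while creating a short $\O(1)$-length chain of overlap regions connecting them. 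At each endpoint of $s$, the shared-disk trick ensures that the whole local fan sits within distance $\O(1)$ of the extremal slot.

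The global picture of $A_{\mS_2}(v)$ is therefore a ``path'' of $|\NT(s)|+1$ slots, each slot carrying a bounded-depth tree of hanging regions (one per trivial contact point) and with consecutive slots joined by an $\O(1)$-long sequence of overlap regions; the two extremal slots additionally carry the constant-diameter endpoint gadgets. A central vertex picked in a middle slot is then at distance $\O(|\NT(s)|)=\O(\ddstar(v))\le\O(\ddstar(G))$ from every other vertex, giving $\lr(G)\le\O(\ddstar(G))$. The main obstacle I expect is making the thickening precise enough that the combinatorial description above is faithful: in particular, one must verify that the union of the thickenings at each $p\in\NT(s)$ indeed spans the full cross-section of $\D_v$ (so that the slot separation is genuine, or at least that whatever extra ``bridges'' appear still yield the same $\O(1)$ local diameter), and that the shared-disk construction at common endpoints truly collapses a fan of size up to $\omega(G)$ into $\O(1)$ arrangement regions inside $\D_v$, rather than producing a long cycle of length $\Theta(\omega(G))$, which would spoil the bound as we want the final answer independent of $\omega(G)$.
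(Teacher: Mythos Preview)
Your plan is essentially the paper's proof: build a thick representation in which, for each contact point $p$, all objects touching $v$ there share a common ``slot'' region $\R^p_v\subseteq\D_v$, so that $A_{\mS_2}(v)$ is a chain of $|\NT(s)|+1$ backbone pieces (only non-trivial points disconnect $\D_v$) with $\O(1)$-depth decorations, giving radius $\O(|\NT(s)|)=\O(\ddstar(v))$. The paper states the construction abstractly by listing the required properties (slots are pairwise disjoint, $\D_{v'}\cap\D_v=\R^p_v=\R^p_{v'}$, and $\D_v\setminus\R^p_v$ is connected for $p\in\E(s)\cup\T(s)$ and has two components for $p\in\NT(s)$) and relies on a figure for the realisation, whereas you spell out a concrete rectangle-plus-endpoint-disk thickening.

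Your worry about a $\Theta(\omega(G))$ fan at an endpoint $q$ of $s$ is in fact harmless in your own construction: since every $\D_{v'_i}$ sharing $q$ contains the same disk $B$, no $\partial\D_{v'_i}$ meets $\mathrm{int}(B)$, and if the rectangles are thin enough that $R_{v'_i}\cap R_v=\emptyset$, the only part of $\partial\D_{v'_i}$ inside $\D_v$ is the arc of $\partial B$ where $R_v$ attaches---the \emph{same} arc for every $i$---so the whole fan contributes a single arrangement region, exactly as in the paper's slot picture.
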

\begin{proof}
  Let $\mS_1$ be a \CONTACTSEG representation of $G$, and $\Nstar$ as defined in \autoref{def:nstar2}.
  Let us now define a thick representation $\mS_2 = \{\D_v \mid v\in V(G)\}$ of $G$ such that $\lr_{\mS_2}(G) = \O(\ddstar)$. This will imply $\lr(G) \le \lr_{\mS_2}(G)=\O(\ddstar(G))$.
  Informally, we will give to each segment some thickness, and associate special ``slots'' to draw intersection with other segments (see \autoref{fig:thickcontactseg}).

 To any vertex $v \in V(G)$ corresponding to a segment $s$ in $\mS_1$, we associate a region $\D_v$, and $|\ICP(s)|+2$ disjoint and connected subregions (called \emph{slots}) 
 $\R^p_v$ for $p \in \E(s) \cup \ICP(s)$ such that in particular:
 
  \begin{itemize}
  \item $\{v,v'\} \in E(G)$ iff $\D_v \cap \D_{v'} \neq \emptyset$ (i.e. $\mS_2$ is indeed a representation of $G$)
  \item intersections occur within slots: for any vertex $v' \in N(v)$ corresponding to a segment $s'$ in $\mS_1$, with $p = s' \cap s$, we have $\D_{v'} \cap \D_v  = \R^p_v = \R^p_{v'}$.
   \item for any $p \in \E(s) \cup \T(s)$,  $\D_v \setminus \R^p_v$ is connected, 
  \item for any $p \in \NT(s)$, $\D_v \setminus \R^p_v$ is a set of two connected regions.
  \end{itemize}
 
  This completes the description of $\mS_2$. 
  For any $v \in V(G)$ corresponding to a segment $s$ in $\mS_1$, we have $\lr_{\mS_2}(v) = \O(|\NT(s)|)$ (as for any $p \in \T(s)$, $\D_v \setminus \R^p_v$ is connected), and thus $\lr_{\mS_2}(v)=\O(\ddstar(v))=\O(\ddstar(G))$ by \autoref{rem:nstar2}.
  
\end{proof}

We are now ready to prove the main algorithmic result of the section.
\begin{proof}[Proof of \autoref{thm:fvscontact}]
Let $\Pi \in \CPi$. According to \autoref{lm:segmentCliques} and \autoref{lm:lrBoundSegment}, we can apply \autoref{cor:asqgm} with $c_1=0$, and $P(x,y)=y$.
  This implies that for any $\epsilon$ such that $k^{3\epsilon}=\O(k^{\frac{1}{2}-\epsilon})$, $\Pi$ can be solved in time $\Ostar(k^{\O(k^{1-\epsilon})})$ in \CONTACTSEG.
  Taking $\epsilon=\frac{1}{8}$ leads to the claimed complexity.
\end{proof}

\section{\ETH based hardness results}
\label{sec:negative}

Let us first start with the following result on \TH and \OCT.

 \begin{theorem}\label{th:lbk3hit}
    Under the ETH, \TH and \OCT cannot be solved in time $2^{o(n)}$ on $n$-vertex \DEUXDIR{} graphs.
 \end{theorem}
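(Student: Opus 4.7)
The plan is to establish this bound via a linear-size reduction from 3-SAT. By the Sparsification Lemma, we may assume the input 3-CNF $\phi$ has $N$ variables and $M = O(N)$ clauses, so ETH forbids $2^{o(N)}$-time algorithms for it. We will build, in polynomial time, a 2-DIR representation of a graph $G$ with $n = O(N+M) = O(N)$ vertices together with an integer $k$ such that $\phi$ is satisfiable if and only if $G$ has a triangle hitting set (resp.\ odd cycle transversal) of size at most $k$. A $2^{o(n)}$-time algorithm for \TH or \OCT on 2-DIR graphs would then yield a $2^{o(N)}$-time algorithm for 3-SAT, contradicting ETH. As the authors already observe, the \OCT part can be read off from~\cite{okrasa2020subexponential} via lower bounds for homomorphism problems in string graphs, so I focus on \TH; the \OCT variant is built from the same template by replacing every triangle in the construction by a short odd cycle realised with axis-parallel segments.

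The construction employs three gadget types, each consisting of a constant number of axis-parallel segments, for a total of $O(N+M)=O(N)$ segments. First, each variable $x_i$ is encoded by a \emph{variable gadget}: a short chain of triangles built from alternating horizontal and vertical segments, admitting exactly two minimum triangle hitting sets, one encoding $x_i=\mathrm{true}$ and one encoding $x_i=\mathrm{false}$, and exposing two distinguished ``literal ports'' $T_i$ and $F_i$ whose membership in the hitting set records the chosen value. Second, each clause $C_j=\ell_1\vee\ell_2\vee\ell_3$ is encoded by a \emph{clause gadget}: a single triangle formed by segments that are reached by wires coming from the literal ports corresponding to $\ell_1,\ell_2,\ell_3$, so that the clause triangle is hit as soon as at least one of its literals is set true in its variable gadget. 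Finally, \emph{wires} are orthogonal staircase paths of horizontal and vertical segments connecting literal ports to the incident clause gadgets. The budget $k$ is the sum of the minimum hitting-set sizes of the variable gadgets, which forces any feasible hitting set both to encode a complete assignment and to hit every clause triangle, yielding the desired equivalence with satisfiability.

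The main obstacle is geometric realisability: all segments must be axis-parallel and no unintended triangle may appear. A triangle in 2-DIR requires either three pairwise overlapping parallel segments, or two overlapping parallel segments pierced by a perpendicular one; I plan to arrange the $N$ variable gadgets in a row along the top, the $M$ clause gadgets in a row at the bottom, and route each wire as a short staircase in a grid-like layout, in a manner analogous to standard VLSI-style embeddings. Since the intersection pattern of each gadget is controlled locally and unintended triple incidences can be ruled out by a generic-position perturbation of endpoint coordinates, the routing argument goes through. This is the only delicate part of the proof; everything else reduces to local inspection of the gadgets and a trivial counting of the budget. For the \OCT version, the same layout is used with every gadget triangle replaced by a length-$3$ odd cycle on axis-parallel segments, so that the parity of the surviving cycles encodes satisfiability exactly as above.
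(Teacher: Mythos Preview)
Your proposal follows the right template—a linear reduction from 3-SAT with variable and clause gadgets—but the wire mechanism, which you call ``the only delicate part,'' is in fact a genuine gap. You describe wires as staircase paths of segments connecting literal ports to clause triangles and assert that ``the clause triangle is hit as soon as at least one of its literals is set true.'' But a triangle-free path carries no hitting constraint: nothing forces any of its vertices into a solution, so the status of a literal port at one end has no effect on whether the clause triangle at the other end is hit. If instead you mean the wire to be a chain of triangles (so that a minimum hitting set propagates a bit along it), then the budget must pay for the wire length; with the variables in a row on top and the clauses in a row below, wires have length $\Theta(N)$ in the worst case, the total wire length is $\Theta(N^2)$, and a $2^{o(n)}$ algorithm on the resulting graph no longer contradicts ETH. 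Two fixed ports $T_i,F_i$ per variable cannot bypass this: a single port segment can only reach clauses whose meeting points are collinear with it, which is incompatible with arbitrary clause structure.

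The paper avoids wires altogether. Each variable $x_i$ is a $k_i$-\emph{polygon}: alternating horizontal and vertical segments with an additional zero-length segment at every corner (this is the triangle type missing from your dichotomy ``three overlapping parallels, or two overlapping parallels plus a perpendicular''). The key lemma is that such a polygon has exactly two minimum triangle hitting sets—all horizontal sides or all vertical sides—so the polygon itself encodes the truth value. The polygons are drawn concentrically and their sides are extended outward to meet at each clause point $z_C$, with positive literals contributing a vertical side and negative literals a horizontal one (after preprocessing to eliminate all-positive and all-negative clauses). Thus the ``wire'' from $x_i$ to $C$ is literally a side of the polygon, already accounted for in the budget, and crossings between different polygons create only single edges, never triangles. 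This keeps $|V(G)|=O(m)$.

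Finally, your treatment of \OCT is both vacuous (a triangle \emph{is} a length-$3$ odd cycle) and insufficient: one must argue that the chosen minimum triangle hitting set also destroys every longer odd cycle. The paper does this by observing that away from corners the segments along any cycle alternate orientation, so any odd cycle must pass through a corner and hence contains both a horizontal and a vertical side of the same polygon—one of which is always in the hitting set.
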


Before proceeding to the proof of \autoref{th:lbk3hit} we need to introduce some definitions about the gadgets used in our reduction.

	\begin{definition}
   	For $k\geq 2$, a \emph{$k$-polygon} $P$ is a \DEUXDIR{} graph composed of $4k$ axis-parallel segments in the plane such that $V(P)=H\cup V\cup C$ with :
   	\begin{enumerate}
   		\item $H$ is a set of $k$ disjoint horizontal segments of non-zero length;
     \item $V$ is a set of $k$ disjoint vertical segments of non-zero length;
     \item every segment of $H$ intersects exactly two segments of $V$, and vice-versa;
    \item $C$ consists of zero-length segments located at each intersection point between a segment of $H$ and a segment of $V$; and
    \item the intersection graph of the segments in $P$ is connected.
   	\end{enumerate}
   \end{definition}
   Notice that in the definition above, $|C|=2k$.
   See \autoref{fig:fPoly} for a depiction of a 3-polygon.
   
   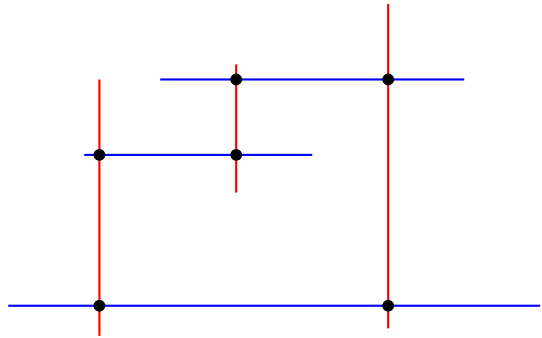
\begin{figure}[!h]
   	\begin{center}
   		\begin{tikzpicture}
   			\draw[draw=blue, thick] (0,1) -- (7,1);
   			\draw[draw=blue, thick] (1,3) -- (4,3);
   			\draw[draw=blue, thick] (2,4) -- (6,4);
   			
   			\draw[draw=red, thick] (1.2,0.6) -- (1.2,4);
   			\draw[draw=red, thick] (3,2.5) -- (3,4.2);
   			\draw[draw=red, thick] (5,5) -- (5,.7);
   			
   			\filldraw[black] (1.2,1) circle (2pt) node{};
   			\filldraw[black] (1.2,3) circle (2pt) node{};
   			\filldraw[black] (5,1) circle (2pt) node{};
   			\filldraw[black] (5,4) circle (2pt) node{};
   			\filldraw[black] (3,4) circle (2pt) node{};
   			\filldraw[black] (3,3) circle (2pt) node{};
   		\end{tikzpicture}
   		\caption{An example of a $3$-polygon, where the segments in $H$ are blue, the ones in $V$ are red, and the length $0$ segments in $C$ are represented by black dots.}
     \label{fig:fPoly}
   	\end{center}
   \end{figure}
   
   \begin{lemma}
   \label{lem:polygon}
   	Let $P$ be a $k$-polygon. Then $P$ does not have any triangle hitting set of size $k-1$, and has exactly two triangle hitting sets of size $k$: the non-zero horizontal segments and the non-zero vertical segments.
   \end{lemma}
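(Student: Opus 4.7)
The plan is to first classify the triangles of $P$ and then translate the triangle hitting problem to a covering problem on an auxiliary cycle. First, I would show that the triangles of $P$ are in bijection with the $2k$ intersection points: since the segments of $H$ are pairwise disjoint (and likewise for $V$), and distinct elements of $C$ lie at distinct points so are also pairwise disjoint, every triangle must contain at least one segment from each of $H$ and $V$. Two such segments meet in at most one point $p$, and the only element of $C$ touching both of them is the zero-length segment $c_p$. An easy case check then shows that the triangles of $P$ are exactly the triples $\{h_p, v_p, c_p\}$ where $p$ is an intersection point of $h_p \in H$ and $v_p \in V$.

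Next, I would introduce the bipartite graph $B$ with parts $H$ and $V$, in which $h \in H$ and $v \in V$ are adjacent iff they share an intersection point. By conditions (2)--(3) in the definition of a $k$-polygon, $B$ is $2$-regular, and by condition (5) it is connected, so $B$ is a cycle of length $2k$ whose (unique) bipartition is $\{H, V\}$. Each edge of $B$ then corresponds to a unique triangle of $P$.

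The lower bound $|S| \geq k$ follows from a double-counting argument: each element of $H \cup V$ lies in at most two triangles and each element of $C$ lies in exactly one, so any triangle hitting set $S$ hits at most $2|S \cap (H \cup V)| + |S \cap C| \leq 2|S|$ triangles; hitting all $2k$ of them forces $|S| \geq k$. For the characterization of the size-$k$ hitting sets, both inequalities must be tight: so $S$ avoids $C$ entirely and no two vertices of $S$ are adjacent in $B$. Thus $S$ is simultaneously a vertex cover and an independent set of $B$, and the only such sets in a connected bipartite graph are the two parts of the bipartition, namely $H$ and $V$.

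The main obstacle will be the first step, i.e., carefully ruling out triangles of unexpected types (for example, triangles using two zero-length segments, or using $c_p$ together with segments that do not pass through $p$); once the bijection with intersection points is established, the remainder is a direct vertex-cover argument on an even cycle.
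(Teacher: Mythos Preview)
Your proposal is correct and follows essentially the same approach as the paper's proof. Both arguments identify the $2k$ triangles as being in bijection with the intersection points (each triangle having one vertex in each of $H$, $V$, $C$), use the count that each segment lies in at most two triangles to get the lower bound $|S|\ge k$, and then for the equality case conclude $S\cap C=\emptyset$ and $S$ independent in the $2k$-cycle $G[H\cup V]$, leaving only $H$ and $V$; your explicit vertex cover/independent set phrasing on the auxiliary graph $B$ is just a slight repackaging of the paper's observation that the independent sets of size $k$ in a $2k$-cycle are its two colour classes.
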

    \begin{proof}
    Let $P$ be a $k$-polygon, and let $\{H,V,C\}$ be the partition of $V(P)$ named as in the definition of $k$-polygon. Notice that every vertex of $C$ together with its two neighbors (one vertical, one horizontal, by definition) forms a triangle, and that every triangle $P$ has this form. So $P$ contains $2k$ triangles.
    As every segment is part of at most two triangles, any triangle hitting set has size at least $k$, which shows the first part of the statement.
    Suppose now that $X$ is a triangle hitting set of size $k$. In order to intersect the $2k$ triangles of $P$, each of the $k$ segments in $X$ have to take part in two triangles, so $X\cap C=\emptyset$, and no triangle can be hit twice, so $S$ is an independent set. The induced subgraph over $V\cup H$ is a cycle of size $2k$, whose independent sets of size $k$ are $H$ and $V$. So $X=V$ or $X=H$, as claimed.
    \end{proof}

   \begin{proof}[Proof or \autoref{th:lbk3hit}]
 	The proof is a reduction from \THREESAT.
   
   Let $\varphi$ be a $3$-SAT instance with $n$ variables $x_1,\dots,x_n$ and $m$ clauses $C_1,\dots, C_m$. Without loss of generality we may assume that each variable appears in some clause and that $\varphi$ has no clause with only one literal (otherwise it could be easily simplified). For our reduction we also want to avoid clauses with $3$ literals all positive or all negative. To do so, for any clause of the form $x_a\lor x_b \lor x_c$ we define an additional variable $y_i$ and we replace the clause by the equivalent clauses $x_a\lor x_b \lor \overline{y_i}$ and $x_c\lor y_i$, and similarly for clauses containing only negative literals. Notice that after performing these operations the number of clauses and variables increased by $\O(m)$.

   Let us now construct a \DEUXDIR{} graph $G$ from the formula $\varphi$.
   In this graph each variable $x_i$ is represented by a $k_i$-polygon with $k_i$ to be specified later. (We only describe the non-zero segments, as the position of the zero-length segments is uniquely determined by those, by definition.)
   To every clause $C$, we associate a point of the plane $z_C$. We construct the variable polygons such that, for every clause $C$ and variable $x$, each of the following is satisfied:
   \begin{enumerate}
       \item if the literal $x$ (respectively $\bar{x}$) appears in $C$, then some vertical (respectively horizontal) segment of the polygon of the variable $x$ ends at $z_C$;
       \item if $C$ contains only two literals, then a new vertex with zero-length segment is added at position $z_C$;
       \item if $x$ does not appear in $C$, then $z_C$ does not belongs to any segment of the polygon of~$x$; and
       \item if two horizontal (respectively vertical) segments intersect, their intersection consists in a unique point of the form $z_{C'}$ for some clause $C'$.
   \end{enumerate}
   
    Also, we want the number of segments in each polygon  and its number of intersections with each other polygon to be linearly bounded from above by the number of clauses the corresponding variable appear in. Such a configuration can for instance be obtained by initially drawing the polygons of the variables as concentric rectangles and then, for every clause $C$, picking a point $z_C$ outside of the outermost triangle and connecting corresponding the polygons to it. See \autoref{fig:fReducbis} for an illustrative example and \autoref{fig:fReduc2} for a depiction of the connection to the $z_C$'s.

   \begin{figure}
        \centering
       	\includegraphics[scale=.9]{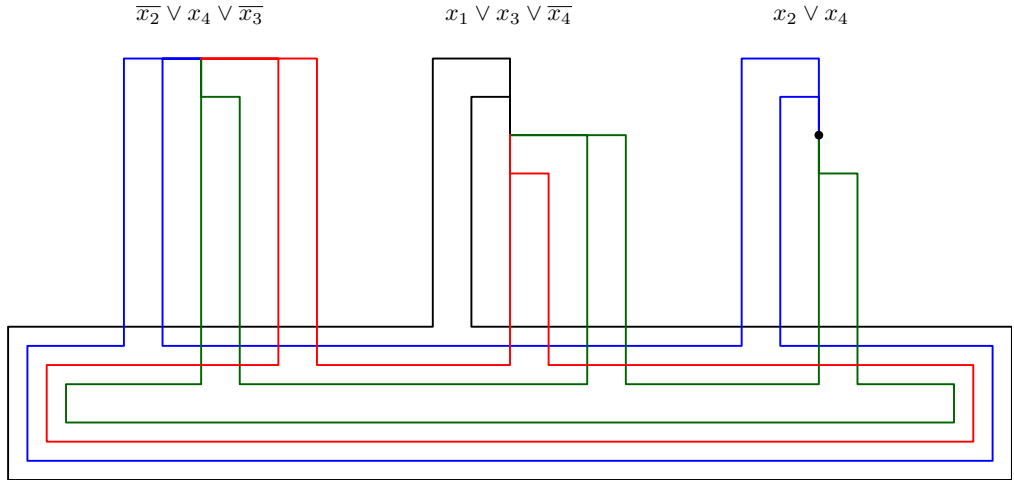}
   	\caption{The construction for the formula $(\overline{x_2}\lor x_4 \lor \overline{x_3})\land (x_1\lor x_3 \lor \overline{x_4})\land (x_2\lor x_4)$. The zero-length segments at each corner of the $k$-polygons are not represented, while that added for the clause with two variables is depicted with a black dot.}
    \label{fig:fReducbis}
   \end{figure}

  \begin{figure}
    \centering
    \includegraphics[]{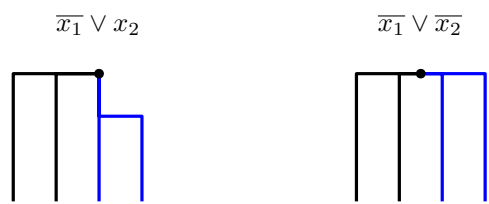}
    \caption{The two remaining possibilities for the clause gadget. The length $0$ segments at each corner of the $k$-polygons are not represented. The black dots represent the length $0$ segments we add for clauses with $2$ literals. }
    \label{fig:fReduc2}
  \end{figure}
	 For $1\leq i \leq n$ we define $k_i$ such that the variable gadget for the variable $x_i$ constructed as defined above is a $k_i$-polygon.
  Let $k = \sum_{i=1}^n k_i$. We define a \emph{good triangle hitting set} as a triangle hitting set of size $k$.
  \begin{claim}\label{cl:goodTHS}
	    If a good triangle hitting set exists, it is a minimum triangle hitting set, and a minimum odd cycle transversal set, moreover it only contains non-zero length segments and all the segments from a polygon it contains have the same orientation.
	\end{claim}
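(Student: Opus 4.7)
I address the three statements of the claim in turn. For the lower bound and the structural part, I would apply \autoref{lem:polygon} to each of the $n$ vertex-disjoint polygons $P_i$ composing $G$. Since $X$ hits every triangle of $G$, in particular every triangle internal to $P_i$, we have $|X \cap V(P_i)| \geq k_i$, and summing over $i$ yields $|X| \geq \sum_i k_i = k$. Hence a good triangle hitting set is a minimum triangle hitting set. As $|X|=k$, equality is forced on every polygon, so $X \cap V(P_i)$ is a minimum triangle hitting set of $P_i$; applying the second part of \autoref{lem:polygon}, $X \cap V(P_i)$ is either exactly the set of horizontal non-zero segments of $P_i$ or exactly the set of vertical non-zero segments. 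This simultaneously establishes that $X$ contains no zero-length segment and that all segments of $X$ from a single polygon share the same orientation.

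For the minimum odd cycle transversal statement, I would first note that every triangle is an odd cycle, so any OCT is also a triangle hitting set and has size at least $k$. It therefore suffices to prove that $G - X$ is bipartite, which combined with $|X| = k$ shows that $X$ is itself a minimum OCT. I propose to show bipartiteness by an explicit $2$-coloring: color each surviving non-zero segment by its orientation (horizontal $\mapsto 0$, vertical $\mapsto 1$) and each zero-length segment by the opposite color of its adjacent surviving non-zero segment. Inside every polygon $P_i$, removing the orientation chosen by $X \cap V(P_i)$ leaves exactly $k_i$ vertex-disjoint length-two paths of the form corner--segment--corner, which are consistent with this coloring.

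The main obstacle is to verify consistency of the coloring at clause points, where edges between distinct polygons appear. By Condition~4 of the construction, two same-orientation non-zero segments intersect only at clause points. At every clause $C$, the at most three segments meeting at $z_C$ form a triangle that must be hit by $X$, so a short case analysis on the sign pattern of $C$ shows that the only potentially monochromatic edges of $G - X$ arise when $X$ picks exactly the unique opposite-orientation segment at some $z_C$. I would then argue that no such edge can close an odd cycle in $G - X$: combining such a monochromatic clause-edge with polygon-internal walks and further clause-edges, the layered nested layout of the construction forces any resulting closed walk of odd length to induce a triangle spanning three non-zero segments from three distinct polygons, none of which lies in $X$ by the structural part of the claim, contradicting that $X$ hits all triangles of $G$. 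Executing this planar case analysis rigorously is the technical core of the proof; once done, the $2$-coloring above is valid and $G - X$ is bipartite.
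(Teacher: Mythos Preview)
Your treatment of the first two assertions—that a good triangle hitting set is minimum and that, within each $k_i$-polygon, it consists of all non-zero segments of a single orientation—is correct and matches the paper essentially line for line: both apply \autoref{lem:polygon} to each polygon, use that the polygons are vertex-disjoint to force $|X\cap V(P_i)|=k_i$ everywhere, and read off the structure from the second half of the lemma.

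For the odd cycle transversal part you diverge from the paper, and your plan has a genuine gap. The orientation $2$-coloring you propose is \emph{not} a valid bipartition of $G-X$: for a clause such as $\bar x_a\lor\bar x_b\lor x_c$ the point $z_C$ carries two horizontal segments and one vertical one, and if $X$ consists of the verticals of all three polygons then the vertical at $z_C$ is removed while both horizontals survive, giving a monochromatic edge in $G-X$. You notice this obstruction, but your remedy—that any odd closed walk through such an edge ``forces \dots\ a triangle spanning three non-zero segments from three distinct polygons, none of which lies in $X$''—is asserted rather than proved; you yourself describe carrying it out as ``the technical core of the proof''. There is no evident reason why an odd cycle assembled from several such monochromatic clause-edges and alternating interior crossings must contain an unhit triangle, so as written this part is incomplete.

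The paper's route is shorter and avoids any global coloring. It argues directly that every odd cycle in $G$ must contain two segments of some common \emph{polygon corner}; since these are one horizontal and one vertical side of the same $k_i$-polygon, and $X$ contains all sides of one orientation in each polygon, one of the two lies in $X$. The supporting step is a parity observation: if no two segments of the cycle meet at a polygon corner, then consecutive segments along the cycle intersect in their interiors and hence alternate orientation, forcing even length. Replacing your $2$-coloring attempt by this corner-based argument removes the unfinished case analysis entirely.
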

    \begin{proof}
        Let $S$ be a good triangle hitting set of $G$. By \autoref{lem:polygon}, for each variable $x_i$, its corresponding $k_i$-polygon needs $k_i$ segments in order to hit all its triangles.
        As $k=\sum_{i=1}^n k_i$ and no segment is part of two distinct polygons, we can conclude that exactly $k_i$ segments of the considered polygon are in $S$, and $S$ is a minimum triangle hitting set.
        Again by \autoref{lem:polygon} we get that those $k_i$ segments have non-zero length and either consist of all vertical segments of the polygon, or of all its horizontal segments. Moreover, this implies that $S$ does not contains the zero length segments added for clauses with two literals.
        It remains to prove that $S$ is a minimal odd cycle transversal set. Minimality is obtained by the minimality of $S$ as triangle hitting set and observing that a odd cycle transversal is a triangle hitting set. Let us consider a cycle in $G$ with odd size and show that $S$ contains at least one segment of this cycle. To do so it is enough to prove the cycle necessarily contains $2$ segments of a same corner, as in this case it would contains the two non-zero segments of this corner, one of which necessarily belongs to $S$. By contradiction suppose that no pair of segments in the cycle intersect in a corner. Then the intersection between two adjacent segments of the cycle is done in their interior (i.e., not at their endpoints) and so they do not have the same orientation: one is horizontal, the other is vertical. Hence the segments of the cycle are alternatively horizontal and vertical and the cycle has even size, a contradiction. \cqed
     \end{proof}

	\begin{claim}\label{cl:goodiffsat}
	    $G$ has a good triangle hitting set if and only if $\varphi$ is satisfiable.
	\end{claim}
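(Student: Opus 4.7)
The plan is to prove both directions by an explicit correspondence between satisfying assignments of $\varphi$ and good triangle hitting sets of $G$, leveraging the structural information from \autoref{cl:goodTHS} on the backward direction and from \autoref{lem:polygon} on the forward direction. The key dictionary is: for a polygon representing $x_i$, selecting the non-zero vertical (resp.\ horizontal) segments will correspond to setting $\tau(x_i)$ true (resp.\ false), consistently with the construction rule that a positive occurrence of $x_i$ in a clause $C$ places a vertical segment of its polygon ending at $z_C$, and a negative occurrence places a horizontal one.

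For the forward direction, I would start from a satisfying assignment $\tau$ of $\varphi$ and define $S$ by taking, for each variable $x_i$, the $k_i$ non-zero vertical segments of its polygon if $\tau(x_i)$ is true and the $k_i$ non-zero horizontal segments otherwise. Summing gives $|S|=\sum_i k_i = k$, so $S$ has the right size. I would then verify the hitting property triangle by triangle, splitting into two cases: triangles internal to a single polygon, which are hit by \autoref{lem:polygon}; and triangles located at a clause point $z_C$, where the segments participating in the triangle are exactly the literal-segments of $C$ (plus a zero-length segment when $|C|=2$). Because $\tau$ satisfies $C$, at least one literal-segment meeting $z_C$ belongs to $S$ by the dictionary, so every triangle at $z_C$ is hit.

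For the backward direction, I would start from a good triangle hitting set $S$ and apply \autoref{cl:goodTHS} to obtain, for each $x_i$, that $S$ restricted to the polygon of $x_i$ is either its full set of non-zero vertical segments or its full set of non-zero horizontal segments. I would then define $\tau(x_i)$ to be true in the first case and false in the second. To check that $\tau$ satisfies $\varphi$, I would fix an arbitrary clause $C$ and consider the triangle sitting at $z_C$; by \autoref{cl:goodTHS} this triangle cannot be hit by the zero-length segment (when $|C|=2$), so $S$ must contain one of the literal-segments incident to $z_C$, which via the dictionary corresponds to a literal of $C$ satisfied by $\tau$.

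The main point that needs justification is the locality claim used in both directions, namely that the triangles incident to $z_C$ involve only the literal-segments of $C$ (and possibly the added zero-length segment). This follows from the rules of the construction: item~4 forbids spurious intersections between two segments of the same orientation outside of a clause point, and polygons are drawn as nested rectangles connected out to the $z_C$'s as in \autoref{fig:fReducbis}, so the only segments passing through $z_C$ are exactly those assigned to literals of $C$. Once this locality is in place, the two directions combine into the claimed equivalence.
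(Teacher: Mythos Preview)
Your proposal is correct and follows essentially the same approach as the paper: the same dictionary between orientations and truth values, the same case split on triangle types (polygon-internal versus clause triangles at $z_C$), and the same use of \autoref{cl:goodTHS} to extract a well-defined assignment from a good hitting set. You are slightly more explicit than the paper in justifying the locality of clause triangles via item~4 of the construction, which the paper leaves implicit.
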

    \begin{proof}
	Direction ``$\Rightarrow$''. Let $S$ be a good triangle hitting set of $G$.
	We define a truth assignment for the variables by setting the variable $x_i$ at false if the segments of the corresponding polygon being in $S$ are the horizontal segments, and true if they are the vertical ones. For a clause $C$ of $\varphi$, let $\Delta_C$ denote the set of the 3 vertices containing~$z_C$ (which trivially induce a triangle).
    Because $S$ is a hitting set and by \autoref{cl:goodTHS}, we have $S\cap \Delta_C\neq \emptyset$. Let $x_i$ be a variable such that its $k_i$-polygon contains a segment in $S\cap \Delta_C$. Suppose this segment is vertical, this means that the variable $x_i$ appears in $C$ in a positive literal, and that in the constructed assignment $x_i$ is true, and the clause $C$ is verified. We can obtain the same conclusion is the case of a vertical segment, so each clause is verified with the defined truth assignment, and so $\varphi$ is satisfiable.
 
 Direction ``$\Leftarrow$''. For an assignment of the variables $x_i$ such that $\varphi$ is verified, we construct a set $S$ of size $k$ with the same encoding than before: $S$ contains the vertical segments of the $k_i$-polygon for the variables $x_i$ assigned to true, and the horizontal ones otherwise.
 Let us now show that $S$ is a hitting set.
 
 All the triangles contained in the polygons encoding the variables are hit, as they always contain an horizontal segment and a vertical segment of the same polygon. The remaining triangles are the one formed by each clause $C$ on the point $z_C$. As $\varphi$ is verified by the considered truth assignment, there is a literal which evaluate to true. Let $x_i$ be the variable contained in this literal. If the literal is positive, then $x_i$ was assigned to true, so all the horizontal segments of its polygon are in $S$; and the segment in contact with $z_C$ is horizontal. So the triangle at the point $z_C$ is hit by $S$. The case of a negative literal have the same conclusion. So all triangles of $G$ are hit by $S$.\cqed
\end{proof}	

\begin{claim}
$|G| = \O(m)$.
\end{claim}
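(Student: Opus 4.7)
The plan is to bound $|G|$ by separately bounding the contribution of each variable polygon and the contribution of the clause gadgets, and then summing over all variables and clauses.

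First I would argue that after the preprocessing step (replacing clauses with all-positive or all-negative literals by equivalent pairs), the number of clauses and variables are both $\O(m)$: the preprocessing itself adds $\O(m)$ of each, and since we assumed every variable appears in some clause, $n \le 3m$ already before preprocessing. Call $m'$ and $n'$ the post-preprocessing counts; both are in $\O(m)$.

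Next I would bound the size of each variable polygon. By construction, the $k_i$-polygon representing variable $x_i$ has exactly $4k_i$ vertices (the $k_i$ horizontal segments of $H$, the $k_i$ vertical segments of $V$, and the $2k_i$ zero-length segments of $C$). As stated in the construction, $k_i$ is chosen so that the number of segments in the polygon of $x_i$ is linearly bounded by the number of clauses in which $x_i$ appears; write this count as $c_i$, so $k_i = \O(c_i)$. Summing over all variables,
\[
\sum_{i=1}^{n'} k_i = \O\!\left(\sum_{i=1}^{n'} c_i\right) = \O(m'),
\]
because each of the $m'$ clauses contributes at most $3$ to $\sum_i c_i$.

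Finally, the clause gadgets add at most one extra (zero-length) vertex per clause that has only two literals, contributing $\O(m')$ additional vertices. Combining these two sources, $|V(G)| = 4\sum_i k_i + \O(m') = \O(m')= \O(m)$, as claimed. The only step that requires any care is ensuring the polygons can actually be laid out so that $k_i = \O(c_i)$, but this is exactly what the concentric-rectangle construction described before the claim achieves, so there is no real obstacle.
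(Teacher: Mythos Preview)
Your proof is correct and follows essentially the same approach as the paper's: both bound the polygon contribution by noting that the number of segments in each variable's polygon is linear in the number of clauses it appears in, sum over variables using that each clause contains at most three variables, and then add the $\O(m)$ extra zero-length vertices for two-literal clauses. Your version simply spells out the counting (the $4k_i$ vertices, the preprocessing bounds $m',n'=\O(m)$) more explicitly than the paper's three-line argument.
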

\begin{proof}
	   As argued above for every variable, the number of segments of its polygon is linearly bounded by the number of clauses it appears in. Each clause contains at most $3$ variables and there are at most $m$ extra vertices (for clauses of size 2) so the total number of segments is $\O(m)$.
    \cqed
    \end{proof}

    We described the construction of a graph $G$ and integer $k$ from $\varphi$ such that $(G,k)$ is a positive instance of \TH or \OCT if and only if $\varphi$ is satisfiable.
    Besides, the construction can clearly be done in time polynomial in $n+m$ and the graph has $n' = \O(m)$ vertices. Therefore any algorithm solving \TH{} or \OCT{} in time $2^{o(n')}$ for input graphs of size $n'$ could be used to solve \THREESAT{} in time $2^{o(m)}$ on formulae with $m$ clauses, which would refute the \ETH.
 \end{proof}

 We observe that the instances constructed as in \autoref{fig:fReducbis} contain large complete bicliques.
One way to forbid such bicliques is to restrict the number of crossings in our constructions. We do so by two different approaches. The lower bounds we obtain are weaker in terms of value, however they apply more generally as they hold for more restricted graph classes.
The bound in the following result provides an interpolation between graphs bounded degree and the general case (obtained by taking $d=n$) where we get the same bound as in \autoref{th:lbk3hit}.

\begin{figure}
 \centering
 \includegraphics[scale=1.5]{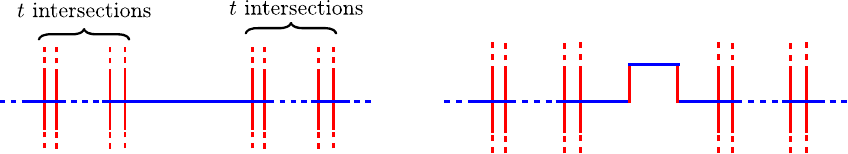}
 \caption{The gadget used in the proof of \autoref{th:lb-maxdeg}. An horizontal segment of a polygon before and after the addition of the crenellation in order to bound the number of crossings by $t+4$.}
 \label{fig:crenellationdegbis}
\end{figure}

\begin{restatable}{theorem}{apxthlbmaxdeg}
\label{th:lb-maxdeg}
    Under \ETH, the problems \TH and \OCT cannot be solved in time $2^{o\left (\sqrt{dn}\right )}$ on 
   \DEUXDIR{} graphs with maximum degree $d\geq 6$. 
\end{restatable}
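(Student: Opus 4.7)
The plan is to adapt the reduction of Theorem~\ref{th:lbk3hit} so that the produced 2-DIR graph has bounded maximum degree. Starting from a 3-SAT instance $\varphi$ on $n$ variables and $m$ clauses, we may assume $m = \O(n)$ by the Sparsification Lemma, so that \ETH rules out $2^{o(m)}$-time algorithms for $\varphi$. We build a $k$-polygon--based 2-DIR representation as in the proof of Theorem~\ref{th:lbk3hit}, and apply the \emph{crenellation gadget} of \autoref{fig:crenellationdegbis} to every overly crossed polygon segment: a long horizontal (or vertical) segment intersected by many transverse segments is replaced by a zigzag path alternating short horizontal pieces and short vertical ``teeth'', with a zero-length segment at each concave corner of the zigzag. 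By inserting a new tooth after every group of at most $t := d - 4$ consecutive transverse crossings, each piece of the new path is incident to at most $t$ original neighbours plus the $4$ pieces or teeth meeting it at its two endpoints, yielding degree at most $t + 4 = d$.

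The combinatorial analysis essentially reuses that of Theorem~\ref{th:lbk3hit}. The key point is that a crenellated polygon still satisfies the analogue of Lemma~\ref{lem:polygon}: any minimum triangle hitting set must intersect it in exactly one of two patterns, taking either all horizontal pieces (and none of the vertical teeth), or all vertical pieces. Indeed, each concave corner of a crenellation, together with its adjacent horizontal piece, vertical tooth, and zero-length segment, forms a configuration whose triangles can only be hit in the same ``all-horizontal-or-all-vertical'' way as a standard corner of a $k$-polygon; chaining these corners propagates the orientation choice along the entire crenellated boundary. Since additional segment--segment crossings produced by the crenellation involve only two segments meeting at a point where no third segment is incident, they contribute no new triangles, and the bijection between good triangle hitting sets and satisfying assignments of $\varphi$ from Claim~\ref{cl:goodiffsat} carries over unchanged; as in Claim~\ref{cl:goodTHS}, the same sets are also minimum odd cycle transversals, handling \OCT simultaneously.

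It remains to bound the number $n'$ of vertices. The variable polygons and the routing to the clause points $z_C$ can be arranged so that the total number of segment--segment crossings is $\O(m^2)$ (for instance, in a grid-like layout where each variable polygon has one segment per clause). The crenellation replaces each segment with $\delta$ crossings by $\O(\delta / d)$ pieces, so
\begin{equation*}
n' \;=\; \O\!\left(\frac{m^2}{d}\right),
\end{equation*}
and consequently $\sqrt{d\,n'} = \O(m)$. An algorithm solving \TH or \OCT in time $2^{o(\sqrt{d\,n'})}$ on 2-DIR graphs of maximum degree $d$ would therefore give a $2^{o(m)}$-time algorithm for 3-SAT, contradicting \ETH. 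The main technical hurdle is the second step: verifying that introducing zigzags does not create any alternative minimum hitting set mixing horizontal and vertical pieces within a single polygon, which requires a local case analysis at each crenellation corner showing that the added triangles force a globally consistent orientation.
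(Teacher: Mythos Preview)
Your proposal is correct and follows essentially the same approach as the paper: apply the crenellation gadget to the high-degree segments of the construction from Theorem~\ref{th:lbk3hit}, set $t=d-4$, and observe that the blow-up is $\O(m^2/d)$ vertices so that $\sqrt{dn'}=\O(m)$. The only noteworthy difference is that your ``main technical hurdle'' is not actually a hurdle: a crenellated polygon is literally a $k'$-polygon in the sense of the definition (each new horizontal piece meets exactly two vertical pieces of the same polygon, and vice versa, with zero-length segments at the new corners), so Lemma~\ref{lem:polygon} applies verbatim without any additional case analysis; the paper accordingly dispatches this point in one sentence.
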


\begin{proof}
    In this proof we construct the same \DEUXDIR graph $G$ from the proof of \autoref{th:lbk3hit}, but adapt it so that a side of a polygon is not crossed by more than $t$ segments. We do so by creating a graph $G'$ where crenelations, as depicted in \autoref{fig:crenellationdegbis}, are added on the top segments of the base rectangle of each variable gadget, as they are the only segments with unbounded degree in the construction.
    Observe that after the modification of a segment, each newly created horizontal segment has degree at most $t+4$, as a segment outside of a clause gadget intersects $2$ length $0$ segments. Taking $t=d-4$ ensures that the degree of the created segments are at most $d$ (a crenelation increases the degree by $2$ on each side), and the degree of the unmodified segments is at most $6\leq d$, so the maximum degree of $G$ is at most $d$.
    \begin{claim}
        $|G'|=\O\left(\frac{|G|^2}{d}\right)$.
    \end{claim}
    \begin{proof}
        In our construction of $G'$, for a segment intersected by $c$ segments, we add $\O(\frac cd)$ crenelations, each containing a constant number of segments. We have $c\leq |G|$ so the number of additional segments is $\O\left(\frac{|G|^2}{d}\right)$.
    \end{proof}

    Because our construction still satisfies the properties of \autoref{th:lbk3hit}, \autoref{cl:goodiffsat} holds.
     Therefore any algorithm solving \TH{} or \OCT{} in time $2^{o(\sqrt {dn'})}$ for input graphs of size $n'$ and maximum degree $d$ could be used to solve \THREESAT{} on formulae with $m$ clauses and in time $2^{o(m)}$, which would refute the \ETH.
\end{proof}

 We now provide in the next theorem a construction that is $K_{2,2}$-free, with maximum degree $6$, where segments do not cross (i.e., that is a \CONTACTDEUXDIR{} graph) and that applies also to \FVS{}, to the price of a weaker lower bound.

\begin{restatable}{theorem}{apxthlbfvs}
\label{th:lb-fvs}
    Under \ETH, the problems \TH, \OCT, and \FVS cannot be solved in time $2^{o\left (\sqrt{n}\right )}$ on $n$-vertex $K_{2,2}$-free 
   \CONTACTDEUXDIR{} graphs with maximum degree $6$. 
 \end{restatable}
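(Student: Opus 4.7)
The plan is to reduce from \PTHREESAT, which under \ETH has no $2^{o(\sqrt{n})}$-time algorithm (where $n$ is the number of variables plus clauses). Since contact representations forbid crossings, the construction of \autoref{th:lbk3hit} must be redone from scratch using the planarity of the input formula to avoid crossings by construction. In particular, given a planar embedding of $\varphi$, I will draw the variable polygons and the clause points $z_C$ respecting the embedding so that the edges connecting a variable polygon to a clause point never cross another polygon. The key advantage over \autoref{th:lb-maxdeg} is that we no longer use crenellations to handle crossings (these introduce crossings of their own), and we trade a stronger quantitative lower bound ($2^{o(n)}$) for a weaker one ($2^{o(\sqrt{n})}$) that applies in the much more restrictive contact-\DEUXDIR{} setting.

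First, I would process $\varphi$ exactly as in \autoref{th:lbk3hit} to get an equivalent formula with clauses of size at most $3$ and no monotone $3$-clauses, which preserves planarity up to constant factors. Then, using a polynomial-time rectilinear embedding of the variable-clause incidence graph (where each variable becomes a horizontal segment and each clause becomes a point connected to its variables by rectilinear paths), I would turn each variable ``box'' into a small $k_i$-polygon. Each clause point $z_C$ lies in a single face of the planar drawing, and the connecting segments from the three variable polygons to $z_C$ can be routed along the planar edges so that the only intersections are at the $z_C$'s themselves and at the corners of the polygons. This yields a contact-\DEUXDIR{} representation, and because the incidence graph is planar with $O(n+m)$ edges, the total number of segments used is $O(n+m)$, giving an $n'=O(n+m)$-vertex graph.

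To ensure $K_{2,2}$-freeness and maximum degree $6$, I will subdivide the long segments (the ``sides'' of the polygons and the routes to the clause points) into short contact-segments by inserting zero-length contact vertices at each bend and at each intersection with another segment. Because every non-trivial contact point belongs to at most three segments (the two sides of a polygon corner, or the three literal-sides ending at $z_C$), and any two segments share at most one contact point, no $K_{2,2}$ can arise. The maximum degree $6$ then comes from keeping at most a constant number of contact points per subdivided segment, splitting any would-be high-degree segment into a chain of degree-$\le 6$ pieces. The overall vertex count remains $O(n+m)$.

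Finally I need to argue that the reduction works not only for \TH and \OCT (as in \autoref{th:lbk3hit}, via the analogue of \autoref{cl:goodTHS} and \autoref{cl:goodiffsat}) but also for \FVS. This is the main obstacle: I must redo the polygon-gadget analysis for \FVS. The claim is that a $k$-polygon has a feedback vertex set of size exactly $k$, and that every such minimum FVS consists either of all horizontal or of all vertical non-zero segments. The lower bound $k$ follows from the $2k$ triangles as before. For the upper bound, note that removing all horizontal (resp.\ vertical) non-zero segments leaves a disjoint union of stars centred at vertical (resp.\ horizontal) segments, which is a forest. The characterisation of minimum FVSs then follows exactly the argument in \autoref{lem:polygon}: any FVS of size $k$ must hit each of the $2k$ triangles exactly once with a non-corner segment, forcing an independent set in the even cycle $V\cup H$, whose two independent sets of size $k$ are $V$ and $H$. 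With this variant of \autoref{lem:polygon} in hand, the equivalence with satisfiability of $\varphi$ goes through verbatim, so an algorithm running in $2^{o(\sqrt{n'})}$ for any of \TH, \OCT, or \FVS would yield a $2^{o(\sqrt{n+m})}$ algorithm for \PTHREESAT, contradicting \ETH.
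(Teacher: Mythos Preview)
Your high-level plan (reduce from \PTHREESAT, use the planar embedding to avoid crossings, route polygons to clause points) matches the paper. But there are two genuine gaps, both stemming from your decision to drop crenellations. In the paper's proof of this theorem the crenellations are \emph{not} used to handle crossings (that was \autoref{th:lb-maxdeg}); here they serve two different and essential purposes that your proposal does not replace.

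\textbf{Gap 1: $K_{2,2}$-freeness.} Your argument that ``every non-trivial contact point belongs to at most three segments \ldots{} no $K_{2,2}$ can arise'' does not work: $K_{2,2}=C_4$ needs no point shared by three segments, only four segments touching pairwise in a 4-cycle. Such 4-cycles can easily appear in your construction (e.g.\ two adjacent sides of one polygon together with two sides of another polygon routed to two nearby clause points). The paper kills these by adding a crenellation on every polygon side, which subdivides the underlying cycle so that any two clause points are at distance at least~5 along the polygon; this is exactly \autoref{cl:consecutive}, which yields $K_{2,2}$-freeness as a corollary.

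\textbf{Gap 2: the FVS direction.} Your analysis of \autoref{lem:polygon} for FVS is correct for a \emph{single} $k$-polygon, but the reduction graph $G$ contains global cycles that traverse several polygons through the clause points $z_C$, and you never argue that a good triangle hitting set kills those. The equivalence of \autoref{cl:goodiffsat} is only for \TH; for \FVS you need the extra step the paper performs: via the crenellations, any non-triangle cycle contains five consecutive sides of one polygon and hence two adjacent segments of different orientations, one of which lies in any good triangle hitting set. Without a gadget playing this role, ``the equivalence \ldots{} goes through verbatim'' is unjustified. Your proposed subdivision by ``inserting zero-length contact vertices'' does not achieve this, since adding a length-0 segment at a point of a long segment does not split that segment into several graph vertices.
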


Before starting the proof of \autoref{th:lb-fvs}, we need some intermediate results.
The \emph{incidence graph} of a \THREESAT{} formula is the (bipartite) graph whose vertices are clauses and variables and where edges connect variables to the clauses they appear in. The restriction of the \THREESAT{} problem to formulae with a planar incidence graph is called \PTHREESAT{}.

\begin{theorem}[\cite{lichtenstein1982planar}]\label{th:planar3sat}
There is no algorithm that solves  \PTHREESAT{} on a formula with $n$ variables and $m$ clauses in time $2^{o(\sqrt{n+m})}$, unless the \ETH fails.
\end{theorem}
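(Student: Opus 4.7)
The plan is to derive this lower bound by composing a reduction from \THREESAT{} with the sparsification lemma of Impagliazzo, Paturi, and Zane. On its own, \ETH only forbids $2^{o(n)}$-time algorithms for \THREESAT{} in terms of the number of variables, which does not survive a reduction that creates extra variables. The sparsification lemma upgrades \ETH to the statement that \THREESAT{} admits no $2^{o(n+m)}$-time algorithm on instances with $n$ variables and $m$ clauses; this is the form we can transfer through a reduction whose blow-up is polynomial in the \emph{total} instance size.

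First I would take a \THREESAT{} formula $\varphi$ with $n$ variables and $m$ clauses and draw its incidence graph (the bipartite graph whose vertices are the variables and clauses, with an edge per variable-clause incidence) in the plane with straight-line edges in general position, giving at most $O((n+m)^2)$ pairwise edge crossings. Next I would invoke Lichtenstein's crossover gadget: at each crossing of two edges, I replace the crossing locally by a small planar subformula on a constant number of fresh variables and clauses that forces two independent ``wires'' to carry exactly the same literals as the two original edges, with no enforced relationship between them. Applying this at every crossing (and reconnecting endpoints appropriately) produces a planar formula $\varphi'$ equisatisfiable with $\varphi$, whose number of variables plus clauses is $n'+m' = O((n+m)^2)$.

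The last step is to compose the resulting size bounds with the hypothesised algorithm. A $2^{o(\sqrt{n'+m'})}$-time algorithm for \PTHREESAT{} applied to $\varphi'$ would run in time $2^{o(\sqrt{(n+m)^2})} = 2^{o(n+m)}$ on $\varphi$, contradicting the sparsified form of \ETH. Combined with the polynomial construction time of $\varphi'$, this yields the claimed lower bound.

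The main obstacle in this plan is the design and verification of the crossover gadget: one needs a constant-size planar 3-CNF formula that behaves, with respect to its four boundary literals, exactly like two independent copies of the identity (so that the two crossing wires are functionally decoupled), and that admits a satisfying extension for every assignment of those boundary literals. This is precisely the technical contribution of Lichtenstein's original construction; once such a gadget is available, the quadratic size bound, the preservation of (un)satisfiability, and the time-complexity computation above are straightforward.
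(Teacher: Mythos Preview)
The paper does not prove this theorem; it is stated as a cited result (attributed to Lichtenstein) and used as a black box in the proof of \autoref{th:lb-fvs}. So there is no ``paper's own proof'' to compare against.

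That said, your derivation is the standard one and is correct: the sparsification lemma upgrades \ETH to a $2^{o(n+m)}$ lower bound for \THREESAT, Lichtenstein's reduction planarizes the incidence graph at the cost of a constant-size gadget per crossing, a straight-line drawing of a graph with $O(n+m)$ edges has $O((n+m)^2)$ crossings, and the square root in the claimed bound exactly absorbs this quadratic blow-up. One small historical remark: Lichtenstein's 1982 paper predates the \ETH and only establishes NP-hardness of \PTHREESAT{} via the crossover construction; the $2^{o(\sqrt{n+m})}$ lower bound under \ETH is a later (and by now folklore) observation obtained precisely by the composition you describe. The citation in the paper should thus be read as pointing to the reduction rather than to the quantitative \ETH statement.
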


 \begin{lemma}\label{lem:contactrect}
     There is an algorithm that, given a planar bipartite graph $G$, returns in polynomial time a representation of $G$ as a contact graph of rectangles, where two intersecting rectangles intersect on a non-zero length segment.
 \end{lemma}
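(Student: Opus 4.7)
The plan is to reduce the statement to a classical segment-contact representation theorem and then carefully thicken the segments into rectangles. Specifically, I would invoke the well-known result, due independently to Hartman, Newman, and Ziv and to de Fraysseix, Ossona de Mendez, and Pach, asserting that every planar bipartite graph $G = (A \cup B, E)$ admits, in polynomial time, a representation by horizontal and vertical segments in the plane: vertices of $A$ are mapped to horizontal segments and vertices of $B$ to vertical segments, two segments meet if and only if the corresponding vertices are adjacent, and every intersection is a T-contact (an endpoint of one segment lies on the interior or endpoint of the other). My goal from there is to thicken each segment into an axis-parallel rectangle so that every T-contact becomes a non-zero length segment contact, without creating or destroying any adjacency.

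Concretely, I would first compute such a T-contact representation, rescale so that all coordinates are integers, and let $\eta > 0$ denote the minimum distance between a segment endpoint and any geometric feature of the representation it is not supposed to lie on (another segment, or a distinct endpoint on the same segment). Fix some $\varepsilon \in (0, \eta/3)$. Then I would thicken every horizontal segment $s$ into the axis-parallel rectangle $R_s$ obtained as the $\varepsilon$-neighborhood of $s$ in the vertical direction, and symmetrically thicken every vertical segment in the horizontal direction. At each T-contact where an endpoint of a horizontal segment $s$ lies on a vertical segment $s'$, the rectangles $R_s$ and $R_{s'}$ currently overlap on a small box, so I would locally retract the corresponding vertical side of $R_s$ inwards by $\varepsilon$. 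After this retraction, $R_s$ is flush against one of the vertical sides of $R_{s'}$, and the two rectangles meet exactly along a single segment of length $2\varepsilon$ lying on $\partial R_{s'}$. Since $G$ is bipartite, contacts only occur between a horizontal and a vertical rectangle, so retractions are never ambiguous.

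The main obstacle is verifying that these local modifications preserve exactly the intended adjacencies. The choice $\varepsilon < \eta/3$ guarantees that any two rectangles arising from disjoint segments remain at distance at least $\eta - 2\varepsilon > 0$, so no spurious contact is created; and each retraction step alters $R_s$ only within an $\varepsilon$-neighborhood of one of its endpoints, which, by the definition of $\eta$, is disjoint from every other T-contact incident to $s$, so every intended contact is preserved. All operations run in polynomial time once the initial segment-contact representation is computed, yielding the desired contact representation of $G$ by rectangles whose pairwise intersections are segments of positive length.
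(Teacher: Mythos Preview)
Your approach is correct and essentially the same as the paper's: start from an axis-parallel segment-contact representation of the bipartite planar graph (the paper cites Czyzowicz--Kranakis--Urrutia, you cite the equivalent Hartman--Newman--Ziv / de~Fraysseix--Ossona de Mendez--Pach result) and thicken the segments into rectangles so that each contact becomes a positive-length boundary segment. The only difference is that the paper first slightly extends one segment at every endpoint--endpoint contact and then uniformly shortens and thickens all segments, which sidesteps your per-contact retraction bookkeeping; in your version you should state the symmetric retraction rule for vertical-endpoint-on-horizontal contacts, and at an endpoint--endpoint contact retract only one of the two rectangles, since retracting both would collapse the shared boundary to a single point.
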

 \begin{proof}
     Let $G$ be a planar bipartite graph. In linear time it can be represented as a \CONTACTDEUXDIR{} graph according to~\cite{CzyzowiczKU98BipartiteContact}, with the contact occurring only between segments of different directions. We will now transform this representation to obtain a representation of $G$ as a contact graph of rectangles. To do this,  we first prevent the contact of endpoints of two segments, by slightly extending one if such a contact point exists. This transformation done, we then shorten all the segments at both sides by a small $\epsilon$, and thicken them by the same amount.
     The segments are now interior disjoint rectangles, and two segments that were in contact are now two rectangles sharing an $\epsilon$-length segment on their border.
 \end{proof}
In order to avoid all-positive or all-negative clauses as in the proof of \autoref{th:lbk3hit}, we show below that those (if any) can be replaced without destroying planarity.

 \begin{lemma}\label{lem:planarclause}
     There is an algorithm that, given a planar \THREESAT{} formula with $n$ variables and $m$ clauses, returns in time polynomial in $n+m$ an equivalent planar \THREESAT{} formula with $n+m$ variables and $2m$ clauses where in addition no clause contains 3 positive literals or 3 negative literals.
 \end{lemma}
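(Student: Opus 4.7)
The plan is to reuse the equivalence-preserving substitution already used in the proof of \autoref{th:lbk3hit}, namely replacing an all-positive clause $x_a\lor x_b\lor x_c$ by the pair $(x_a\lor x_b\lor \overline{y})\land(x_c\lor y)$ (and symmetrically for all-negative clauses), and to show that the rewriting can be performed locally on any planar embedding so that planarity of the incidence graph is preserved. Logical equivalence (after projecting out $y$) is immediate: setting $y=\mathtt{true}$ forces $x_c$, while the first clause enforces $x_a\lor x_b$; setting $y=\mathtt{false}$ makes the first clause equivalent to $x_a\lor x_b$ again, with the second satisfied. Since only $2$ clauses and $1$ fresh variable are introduced per rewritten clause, the final formula has at most $2m$ clauses and at most $n+m$ variables.

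The core step is the planar rewiring. Start with a planar embedding of the incidence graph of the input formula, computable in linear time. For each all-positive or all-negative clause vertex $C$ with neighbours $x_a,x_b,x_c$, consider the three edges $Cx_a,Cx_b,Cx_c$ in the cyclic order given by the embedding around $C$. Pick two cyclically consecutive edges, say $Cx_a$ and $Cx_b$. Delete $C$ (keeping track of the small disk it occupied in the embedding, together with the three dangling half-edges), then place inside that disk the small planar gadget consisting of two new clause vertices $C_1,C_2$ and one new variable vertex $y$, where $C_1$ is joined to $x_a, x_b, y$ and $C_2$ is joined to $y, x_c$. Because we only need to route three external edges to $C_1,C_2$ in the prescribed cyclic order on the boundary of the disk (two consecutive ones to $C_1$, the remaining one to $C_2$), this insertion can be carried out inside the face without any new crossing. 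Performing this replacement independently for each offending clause keeps the graph planar throughout.

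The main (minor) obstacle is checking that the cyclic-order argument really works: one must observe that among three edges leaving a vertex in a planar embedding, any two of them are cyclically consecutive, so there is no constraint on which pair is routed to $C_1$. Once this is noticed, the rest is bookkeeping: the construction runs in time polynomial in $n+m$ (it is local at each clause and performs a single planarity-preserving modification per clause), and the resulting formula has the stated size and logical equivalence. Hence the algorithm meets the specification of the lemma.
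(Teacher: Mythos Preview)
Your argument is correct in substance, with one small slip: in your equivalence check the roles of $y=\mathtt{true}$ and $y=\mathtt{false}$ are swapped (it is $y=\mathtt{false}$ that forces $x_c$ via the clause $x_c\lor y$, while $y=\mathtt{true}$ reduces the first clause to $x_a\lor x_b$). The overall equivalence is of course unaffected.

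The paper uses the same substitution but handles planarity more economically. Rather than working inside a disk of an explicit embedding, it observes that the new incidence graph is obtained from the old one by keeping the vertex $v(C)$ (renamed $v(C_1)$) with its edges to $v(x_a)$ and $v(x_b)$, and replacing the single edge $v(C)v(x_c)$ by the path $v(C_1)\,v(y)\,v(C_2)\,v(x_c)$. Each step is therefore just a subdivision of one edge, and subdivisions of planar graphs are planar. Your embedding-level argument reaches the same conclusion but with more moving parts; the paper's view yields a one-line planarity proof and makes your observation about cyclic consecutiveness of any two out of three edges unnecessary.
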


 \begin{proof}
     Let $\phi$ be the input formula with $n$ variables and $m$ clauses and let $G$ be its planar incidence graph. For a variable or clause $z$ of $\varphi$, we denote by $v(z)$ the corresponding vertex in $G$. If there is a clause $C$ of the form $x_a \lor x_b \lor x_c$ then we replace it by the two clauses $C_1 = x_a \lor x_b \lor \bar{y}$ and $C_2 = y \lor x_c$, where $y$ is a fresh variable. Clearly the obtained formula $\varphi'$ is equivalent and has $n+1$ variables and $m+1$ clauses. Also, the incidence graph $G'$ of $\varphi'$ can be obtained from $G$ by renaming $v(C)$ into $v(C_1)$ (for the clause $C_1$) and replacing the edge $v(x_c)v(C_1)$ by the path $v(C_1)v(y)v(C_2)v(x_c)$, where $v(C_2)$ and $v(y)$ are new vertices representing $C_2$ and $y$. As a subdivision of a planar graph, $G'$ is planar.
     A symmetric replacement can be done for clauses where all literals are negative. Each replacement decreases the number of all-positive or all-negative clauses so after $m$ replacement steps at most the obtained formula is free of such clauses (and equivalent to the initial one). Finding a clause to replace and doing so can be performed in polynomial time, hence so does the replacement of all all-positive and all-negative clauses, as claimed. 
 \end{proof}

\begin{proof}[Proof of \autoref{th:lb-fvs}.]
	In this proof we adapt our construction from \autoref{th:lbk3hit} so that the polygons intersect each other only at points of the form $z_C$, for $C$ a clause. toward this goal we consider an instance $\varphi$ of the problem \PTHREESAT{}; by \autoref{lem:planarclause} we may assume that $\varphi$ has no all-positive or all-negative clause; also we can easily simplify clauses of size one.
 
     The incidence graph $G_\varphi$ of $\varphi$ is a bipartite planar graph, so using \autoref{lem:contactrect} it can be represented as a contact graph of rectangles. Let us now describe how this representation is used to construct a \DEUXDIR{} instance of the cycle-hitting problems we consider. Initially, for each variable we create a 2-polygon that follows the associated rectangle (in the contact representation) and for each clause $C$ we choose a point $z_C$ of the plane located at the center of its rectangle. 
     For each clause, we want to add a few sides to the polygons of its ($2$ or $3$) variables so that these polygons intersect in $z_C$ only. As they do not intersect elsewhere, this ensures that the obtained graph is a \CONTACTDEUXDIR{}.
     We describe the construction for clauses with $3$ variables; that for two variables is a simpler version of it.
     Recall that for a clause $C$, the rectangles of its variables are in contact with the rectangle of $C$ and do not intersect mutually. For each variable in $C$, we add a non-zero length segment with one endpoint being $z_c$, vertical if the literal containing the variable is positive, horizontal otherwise. As the clause does not contain all-positive or all-negative literals, it is always possible to do so with the three segments intersecting in $z_C$ only.
     The choices of the sides of $z_c$ on which we put the segments is made so that the circular order of the segments around $z_C$ is the same as the circular order of the rectangles of the variables around that of $C$. 
     We claim that extending the polygons of the variables in order to connect each with the newly created segment without crossings can be done by adding a constant number of segments to each, as depicted in \autoref{fig:config-clause}.
     \begin{figure}
         \centering
         \includegraphics{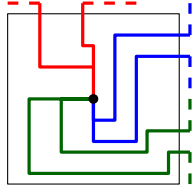}
         \caption{An example of extensions of three polygons toward a point $Z_C$ in order to encode a clause as described in the proof of \autoref{th:lb-fvs}.}
         \label{fig:config-clause}
     \end{figure}
     In order to finish the construction of the graph $G$, for each non-zero length segment, which is then a side of one of the polygons, we add in the portion of segment between the two intersections with the two adjacent sides of the concerned polygon a \emph{crenellation}, which is the simple gadget depicted in \autoref{fig:crenellation}. We make this crenellation with segments in contact, and small enough so that it does not intersect other segments. This corresponds to subdividing $4$ times each edge of the cycles corresponding to polygons. 

     \begin{figure}
         \centering
         \includegraphics[scale=.7]{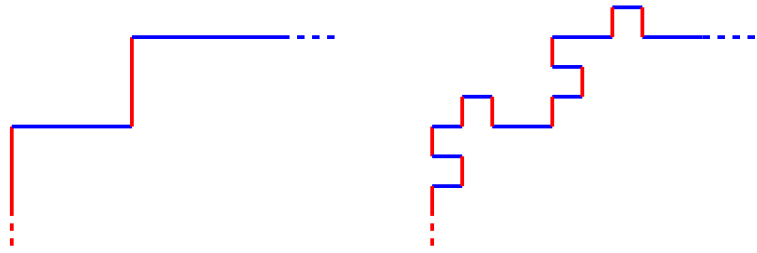}
         \caption{Portion of a polygon before and after the adding of the crenellation on each side of the polygon.}
         \label{fig:crenellation}
     \end{figure}

     \begin{claim}
        \label{cl:consecutive}
         A cycle in $G$ which is not a triangle contains $5$ consecutive sides of a same polygon. And so $G$ is $K_{2,2}$ free.
     \end{claim}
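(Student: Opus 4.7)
The plan is to establish the claim by analyzing how any cycle must navigate the layered structure of $G$: polygons connected to each other only at clause points $z_C$, with every non-zero polygon side replaced by a path of $5$ edges (the crenellation, i.e.\ four subdivisions). First I would enumerate the triangles of $G$: those formed at polygon corners (the zero-length corner segment together with its two adjacent crenellated segments), those formed at clause points $z_C$ for $3$-literal clauses (the three attaching segments meeting at $z_C$), and those formed by the zero-length vertex added at $z_C$ for $2$-literal clauses with its two neighbours. Any other short closed walk must avoid all these configurations.

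Next I would define the \emph{landmarks} of $G$ to be the polygon corners and the clause points $z_C$, and observe that any cycle $\Gamma$ in $G$ decomposes into a cyclic sequence of sub-paths, each of which lies inside a single polygon and joins two landmarks of that polygon (consecutively visited by $\Gamma$). The crucial geometric point is that two landmarks of a polygon $P$ are separated on $P$'s cyclic structure by at least one entire original side of $P$ (unless they coincide at a common corner, which only produces the corner-triangle configuration, or unless they are two landmarks meeting at $z_C$, which only produces a clause-triangle). Since each original side is subdivided by the crenellation into exactly $5$ edges, any such separating sub-path contributes at least $5$ consecutive sides of $P$ to $\Gamma$.

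I would then argue by exhaustion that if $\Gamma$ uses only the ``short'' landmark-to-landmark hops, i.e.\ never traverses a full crenellated side, then $\Gamma$ is forced to be one of the triangles enumerated above, contradicting the assumption that $\Gamma$ is not a triangle. Consequently $\Gamma$ contains at least one sub-path made of $5$ consecutive sides of a same polygon. The $K_{2,2}$-freeness is then an immediate corollary: a $K_{2,2}$ subgraph would yield a $4$-cycle, which is non-triangular but has only $4$ edges, and therefore cannot contain $5$ consecutive sides of any polygon.

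The main obstacle I expect is the bookkeeping of the clause-gadget construction: verifying that when several clause segments extend polygon sides (as in Figure \ref{fig:config-clause}), every such extension together with the crenellation still forces landmark-to-landmark paths to have length at least $5$, and that no unexpected short cycle appears within a crenellation or between a crenellation and a clause extension.
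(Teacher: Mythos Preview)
Your approach is correct and rests on the same core observation as the paper (crenellations force any non-triangle cycle to traverse a full five-segment side), but the paper's argument is considerably more direct and avoids your landmark-decomposition machinery entirely. The paper simply notes that at each clause point $z_C$ only three segments meet, and that segments touching distinct clause points are at distance at least~$4$ from one another (precisely because of the crenellations); hence a cycle of length~$\ge 4$ cannot consist solely of segments containing some $z_C$, so it contains a segment interior to a crenellation, and from there the cycle is forced through all five consecutive segments of that crenellation. Your decomposition into landmark-to-landmark subpaths reaches the same conclusion but requires the extra bookkeeping you yourself flag (handling the clause-extension sides of Figure~\ref{fig:config-clause}), whereas the paper's pigeonhole on $z_C$-incidence sidesteps that case analysis in one line. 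Both routes give the $K_{2,2}$-freeness corollary identically.
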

     \begin{proof}
         Let $Q$ a cycle of size at least $4$. A $z_C$ point is contained in only $3$ segments and the segments containing a $z_C'$ point with $C'\neq C$ are at distance at least $4$ because of the crenellation. This ensures there is in $Q$ a segment $s$ which does not contain a $z_C$ point, so which is a part of a polygon and more precisely of a crenellation. But if a cycle hits a segment from a crenellation that does not contain a $z_C$ point, all the paths that the cycle can follow over the crenellation go from one of its side to another, crossing the $5$ adjacent non-zero length segments of the crenellation. Observe that $K_{2,2}$ is a cycle of size $4$, and we just proved that a cycle that is a subgraph of $G$ is either of size $3$ or at least $5$, so $G$ is $K_{2,2}$ free.\cqed
     \end{proof}
     Observe that the graph $G$ obtained by the above construction satisfies the same properties as the graph constructed in the proof of \autoref{th:lbk3hit}.
     Additionally,
     \begin{enumerate}
         \item $G$ is a \CONTACTDEUXDIR{} (as argued above);
         \item $G$ is $K_{2,2}$-free, as proved in \autoref{cl:consecutive}.
         \item the polygons corresponding to two variables intersect only in a point of the form $z_C$, for $C$ a clause where they both appear.
         \item $G$ have degree at most $6$, where this value could be attained by a segment in contact with $z_C$, see for example \autoref{fig:config-clause}.
     \end{enumerate}
     Now that the graph is defined, let us show how it relates to the cycle-hitting problems.
     As in the previous proof, we denote $k_i$ the number such that the variable gadget for the variable $x_i$ is a $k_i$-polygon and $k=\sum_{i=1}^nk_i$. 
     \begin{claim}
         For a set $S\subseteq V(G)$ of size at most $k$, the three following properties are equivalent:
         \begin{enumerate}
            \item \label{e:fvs}$S$ is a feedback vertex set,
            \item \label{e:oct}$S$ is a odd cycle transversal,
            \item \label{e:th}$S$ is a triangle hitting set.
         \end{enumerate}
    \end{claim}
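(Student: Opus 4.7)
The plan is to first dispatch the two trivial inclusions. Since every odd cycle is a cycle and every triangle is an odd cycle, any feedback vertex set is an odd cycle transversal and any odd cycle transversal is a triangle hitting set, without any use of the size constraint. Thus it suffices to prove the non-trivial direction \ref{e:th} $\Rightarrow$ \ref{e:fvs}, namely that every triangle hitting set $S \subseteq V(G)$ with $|S|\le k$ is a feedback vertex set.

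Given such an $S$, I would begin by re-running the argument of \autoref{cl:goodTHS} in the present setting. The variable polygons are pairwise vertex-disjoint (they only meet the rest of the graph at the points $z_C$, which lie outside the polygons), and by \autoref{lem:polygon} the $k_i$-polygon representing $x_i$ needs at least $k_i$ vertices to hit all its corner triangles. Since $|S| \le k = \sum_i k_i$, this forces $|S \cap V(P_i)| = k_i$ for every $i$ and $|S| = k$. The equality case of \autoref{lem:polygon} then implies that $S \cap V(P_i)$ consists either of all non-zero horizontal segments of $P_i$ or of all non-zero vertical ones; in particular $S$ contains no zero-length segment, and $S$ is ``good'' in the sense already used in the proof of \autoref{th:lbk3hit}.

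To conclude that $S$ is a feedback vertex set, I would invoke \autoref{cl:consecutive}: any cycle of $G$ that is not a triangle contains five consecutive sides of some polygon $P_i$. Since the cycle of $P_i$ alternates non-zero segments (horizontal and vertical in turn) with zero-length corner segments, any five consecutive sides include at least one non-zero horizontal segment and at least one non-zero vertical segment of $P_i$. Whatever orientation was selected for $P_i$ in $S$, the cycle is therefore hit. Triangles are hit by assumption, so every cycle of $G$ meets $S$, and $S$ is a feedback vertex set. Combining with the trivial inclusions yields the equivalence.

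The only genuine verification step is that among five consecutive sides of a polygon cycle both orientations of non-zero segments appear; this is immediate from the alternation structure of polygons, preserved under the crenellation subdivisions, so I do not expect any real obstacle. All of the work is essentially bundled in \autoref{lem:polygon} and \autoref{cl:consecutive}, which together make the argument short.
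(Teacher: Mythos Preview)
Your proposal is correct and follows essentially the same approach as the paper: the trivial implications $(1)\Rightarrow(2)\Rightarrow(3)$, then for $(3)\Rightarrow(1)$ you first use \autoref{lem:polygon} and the vertex-disjointness of the polygons to conclude that $S$ is a good triangle hitting set, and then invoke \autoref{cl:consecutive} to hit every longer cycle. The paper phrases the last step slightly differently (``two adjacent non-zero length segments from a same polygon, and so with different directions'') but your observation that five consecutive sides necessarily contain non-zero segments of both orientations is equivalent and equally valid.
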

    \begin{proof}
        We trivially have \eqref{e:fvs}$\Rightarrow$\eqref{e:oct}$\Rightarrow$\eqref{e:th} so it is enough to prove \eqref{e:th}$\Rightarrow$\eqref{e:fvs}. Let $S$ be a triangle hitting set of $G$ of size at most $k$. Hitting all the triangles of the polygons of $G$ requires already $k$ vertices, so $S$ has size exactly $k$ and so is a good triangle hitting set as defined in the proof of \autoref{th:lbk3hit}.
        \autoref{cl:consecutive} assures that any cycle of length greater than $3$ contains two adjacent non-zero length segments from a same polygon, and so with different directions. 
        But because $S$ is a good triangle hitting set, we know that one of those two segments will be part of $S$. This proves that $S$ not only hits all the triangles, but it is a set hitting every cycle of $G$.\cqed
    \end{proof}

    Because our construction still verifies the properties of the construction done in the \autoref{th:lbk3hit}, the \autoref{cl:goodiffsat} is still verified.
     So an algorithm solving \FVS, \TH or \OCT in time $2^{o(\sqrt n)}$ would also solve \PTHREESAT with the same complexity, which according to \autoref{th:planar3sat} contradicts the \ETH.
\end{proof}

\section{Positive results for \trh{}}
\label{sec:sublin}

In this section, we provide a subexponential parameterized algorithm
for \TH{} in $K_{t,t}$-free \dDIR graphs, $K_{t,t}$-free string
graphs, and \CONTACTSEG graphs. These results follow from a more
general result on graph classes with strongly sublinear separators
described in the first subsection.

\subsection{A subexponential FPT algorithm in classes with sublinear separators}

The goal of this section is to prove the following result (restated
below in a more accurate form as \autoref{th:lineigh-subexp}).
\begin{theorem}
If $\G$ is a hereditary graph class that has strongly sublinear separators, then there is a subexponential time FPT algorithm for \TH in $\G$.
\end{theorem}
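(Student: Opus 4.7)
The plan is to apply the preprocessing of \autoref{cor:bothbranchings}, reduce the problem to an induced subgraph of $G'$ of size $\O_\G(|M|)$ using the neighborhood complexity inherited from sublinear separators, and then run treewidth dynamic programming enabled by the separator hypothesis. Observe first that strongly sublinear separators force $\omega(\G)=\O_\G(1)$ (a clique $K_n$ has no non-trivial small separator), so the maximum clique can be $\O(1)$-approximated in polynomial time by brute force, making \autoref{cor:bothbranchings} applicable. Applying it with $c_\Pi=1$ (since $1$-bundles are exactly triangles) and clique parameter $p=k^\epsilon$ for $\epsilon\in(0,1)$ to be fixed later produces, in time $2^{\O(k^{1-\epsilon}\log k)}n^{\O(1)}$, at most $2^{\O(k^{1-\epsilon}\log k)}$ equivalent instances $(G',M,k')$ with $|M|=\O(k^{1+\epsilon})$, $M$ a triangle hitting set of $G'$, and $N(v)\setminus M$ independent for every $v\in M$ (as the matching in $G'[N(v)\setminus M]$ is forced to have fewer than $c_\Pi=1$ edges).

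The independence of $N(v)\setminus M$ forces every triangle of $G'$ to contain at least two vertices of $M$: a triangle with only one vertex $c\in M$ would place an edge of $G'$ inside $N(c)\setminus M$, a contradiction. The \TH problem on $G'$ is therefore completely determined by the induced subgraph $\tilde G = G'[V^*]$ with $V^*=M\cup\{w\notin M : w \text{ lies in some triangle of }G'\}$. For each $w\in V^*\setminus M$, the ``type'' $N(w)\cap M$ uniquely determines the subset $E(w)\subseteq E(G'[M])$ of edges forming triangles with $w$, and two external vertices of the same type are interchangeable in any hitting set. Since strongly sublinear separators imply bounded expansion, a well-known characterisation of the latter yields that the number of distinct traces $N(w)\cap M$ is $\O_\G(|M|)$ (linear neighbourhood complexity). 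I therefore keep one representative per type, equipped with a multiplicity counting the number of external vertices it represents, and obtain an induced subgraph $\tilde G^\dagger$ of $G'$ with $|V(\tilde G^\dagger)|=\O_\G(|M|)=\O_\G(k^{1+\epsilon})$ on which a multiplicity-weighted variant of \TH is equivalent to \TH on $G'$.

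Since $\tilde G^\dagger$ still lies in $\G$ by heredity, recursive application of balanced separators yields a tree decomposition of $\tilde G^\dagger$ of width $\O_\G(|V(\tilde G^\dagger)|^{1-\delta})=\O_\G(k^{(1+\epsilon)(1-\delta)})$, computable in polynomial time. A standard weighted-\TH DP on this decomposition then runs in $2^{\O_\G(k^{(1+\epsilon)(1-\delta)})}n^{\O(1)}$ time. Adding the preprocessing cost gives a total running time of $2^{\O(k^{1-\epsilon}\log k)+\O_\G(k^{(1+\epsilon)(1-\delta)})}n^{\O(1)}$; balancing the two exponents via $\epsilon=\delta/(2-\delta)$ yields $2^{\O_\G(k^{(2-2\delta)/(2-\delta)}\log k)}n^{\O(1)}$, which is subexponential in $k$. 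I expect the main obstacle to be the careful invocation of linear neighbourhood complexity and the correct adaptation of the \TH DP to vertex multiplicities; the intuition is that bounded expansion is exactly the sparsity notion allowing us to collapse each equivalence class of external vertices to a single weighted vertex while staying inside $\G$. For the prototypical value $\delta=1/2$, this specialises to $2^{\O_t(k^{2/3}\log k)}n^{\O(1)}$, matching the quantitative bound for $K_{t,t}$-free string graphs announced later in the paper.
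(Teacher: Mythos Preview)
Your proof is correct and follows essentially the same route as the paper: apply \autoref{cor:bothbranchings}, use linear neighbourhood complexity (via sublinear separators $\Rightarrow$ polynomial/bounded expansion) to compress $G'-M$ to one weighted representative per trace on $M$, and finish with a treewidth DP using \autoref{th:septw}. Your explicit observation that strongly sublinear separators force $\omega(\G)=\O_\G(1)$, hence that maximum clique is trivially approximable, is a detail the paper leaves implicit; otherwise the two arguments coincide up to the convention on the separator exponent (your $1-\delta$ is the paper's $\delta$), and your final exponent $(2-2\delta)/(2-\delta)$ matches the paper's $2\delta/(1+\delta)$ under this substitution.
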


The proof of this result easily follows by combining known
results. We include because one of its consequences (described below)
is that $K_{t,t}$-free string graphs admit subexponential parameterized algorithm
for \TH{}, which is a positive counterpart to the lower-bound in \autoref{th:lb-fvs}.

\begin{definition}[\cite{dvorak2016strongly}]
We say that a hereditary graph class $\mathcal{G}$ has \emph{strongly sublinear separators} if there is a function $f\colon \NN\to\RR$ with $f(n) = \O(n^\delta)$ for some $\delta\in [0,1)$ such that every $G\in \mathcal{G}$ has a balanced separator of order at most $f(|G|)$.
\end{definition}

\begin{definition}\label{def:neighc}
If for a graph class $\mathcal{G}$ there is a constant $c$ such that for every $G \in \mathcal{G}$ and every $X\subseteq V(G)$,
$
|\{N(v) \cap X : v \in V(G)\}| \leq c |X|,
$
then we say that $\mathcal{G}$ has \emph{linear neighborhood complexity} with \emph{ratio}~$c$.
\end{definition}

Strongly sublinear separators and neighborhood complexity are linked by the following results.
To avoid definitions that are otherwise irrelevant to this paper, we skip the definitions of polynomial and bounded expansion used in the next two results.

\begin{theorem}[\cite{dvorak2016strongly}]
\label{th:polyexp}
Every graph class with strongly sublinear separators has polynomial expansion.
\end{theorem}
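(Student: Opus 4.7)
The plan is to prove that a hereditary class $\mathcal{G}$ with balanced separators of order $O(n^{1-\epsilon})$ (for some $\epsilon > 0$) has polynomial expansion, meaning that every depth-$r$ shallow minor $H$ of some $G \in \mathcal{G}$ satisfies $|E(H)|/|V(H)| \leq p(r)$ for some polynomial $p$. The argument decomposes into two stages: (i) bounding the edge density of the class itself (the case $r = 0$), and (ii) lifting the bound to $r \geq 1$ via an analogous separator property for shallow minors.

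For the base case $r = 0$, I would show that every $G \in \mathcal{G}$ satisfies $|E(G)| = O(|V(G)|)$. The naive separator recursion $E(n) \leq 2 E(2n/3) + O(n \cdot n^{1-\epsilon})$ delivers only $E(n) = O(n^{2-\epsilon})$, which is insufficient. The standard strengthening (implicit in Plotkin--Rao--Smith, explicit in Dvo\v{r}\'ak--Norin) is an iterative argument that repeatedly extracts a densest subgraph, applies the separator theorem to it, and exploits heredity to derive a contradiction whenever the density exceeds a universal constant. This yields uniform bounded degeneracy of the class.

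For $r \geq 1$, I would proceed by induction, proving that the shallow-minor closure $\mathcal{G}_r$ also enjoys strongly sublinear separators, with exponent slightly worse than $1 - \epsilon$ and a polynomial prefactor in $r$. Given $H \in \mathcal{G}_r$ realised in $G \in \mathcal{G}$ via branch sets $\{B_v\}$, I would first reduce (via the polynomial equivalence between shallow-minor and shallow-topological-minor densities) to the topological case, where each edge of $H$ corresponds to a path of length $\leq 2r+1$ in $G$. The model subgraph $G' \subseteq G$ then has $|V(G')| \leq |V(H)| + 2r|E(H)|$ vertices. Applying the separator property to $G'$ and pulling the separator back to $H$ (an $H$-vertex enters the separator if its branch vertex is in the $G'$-separator, or if a path at one of its incident edges crosses the separator) yields a balanced $H$-separator of size $O(r \cdot |V(H)|^{1-\epsilon'})$. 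Combining with the base case applied to $\mathcal{G}_r$ gives $|E(H)| = O_r(|V(H)|)$, and the polynomial dependence on $r$ produces the desired polynomial expansion.

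The main obstacle is the base case: the sublinear-separator-to-linear-density implication is not automatic from the naive recursion and requires the delicate hereditary bootstrap described above. A secondary obstacle is verifying that the topological pullback in the inductive step produces a separator whose size depends only polynomially on $r$, as required for the final expansion polynomial; this needs careful counting to avoid exponential blow-up in $r$ across the layers of the branch-set model.
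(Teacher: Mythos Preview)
The paper does not prove this statement at all: \autoref{th:polyexp} is quoted from \cite{dvorak2016strongly} and used as a black box (together with \autoref{lem:reidl} and \autoref{th:septw}) in the proof of \autoref{th:lineigh-subexp}. There is therefore no ``paper's own proof'' to compare your proposal against.

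Your outline is broadly faithful to Dvo\v{r}\'ak's actual argument in \cite{dvorak2016strongly}: first show that hereditary strongly sublinear separators force linear edge density (the non-trivial bootstrap you mention, not the naive recursion), then show that the property of having strongly sublinear separators is inherited, with a polynomial loss in $r$, by the class of depth-$r$ shallow (topological) minors, and combine. One technical remark: in your inductive step you propose to pull back a separator of the model graph $G'$ to a separator of $H$ by including every $H$-vertex whose incident edge-path is cut. As stated this could inflate the separator size by a factor proportional to the maximum degree of $H$, not just by a factor polynomial in $r$; Dvo\v{r}\'ak avoids this by first reducing to the case where $H$ has bounded degeneracy (via the base case applied at a smaller depth), so that each separator vertex of $G'$ lies on only $O_r(1)$ edge-paths. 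You allude to this concern at the end but do not say how you resolve it; make sure that step is explicit.
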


\begin{lemma}[\cite{reidl2019characterising}]
\label{lem:reidl}
    If a graph class has bounded expansion, then it has linear neighborhood complexity.
\end{lemma}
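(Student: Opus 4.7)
The plan is to use the equivalent characterization of bounded expansion via generalized coloring numbers (Zhu; Nešetřil--Ossona de Mendez): for each integer $r$ there is a constant $c_r$ such that every $G \in \mathcal{G}$ admits a vertex ordering $\sigma$ satisfying $|\mathrm{WReach}_r[\sigma, v]| \leq c_r$ for all $v \in V(G)$, where $\mathrm{WReach}_r[\sigma, v]$ denotes the set of vertices $u$ reachable from $v$ by a path of length at most $r$ whose internal vertices all come no earlier than $v$ in $\sigma$. I would work at depth $r = 2$ and fix such an ordering $\sigma$ with $c := c_2$.

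Given $X \subseteq V(G)$, I would split each trace $N(v) \cap X$ into a \emph{forward part} $\{x \in N(v) \cap X : x \geq_\sigma v\}$, which lies inside $\mathrm{WReach}_1[\sigma, v] \subseteq \mathrm{WReach}_2[\sigma, v]$ and therefore has at most $c$ elements, and a \emph{backward part} $\{x \in N(v) \cap X : x <_\sigma v\}$. The key symmetry to exploit is that every backward neighbor $x$ of $v$ satisfies $v \in \mathrm{WReach}_1[\sigma, x] \subseteq \mathrm{WReach}_2[\sigma, x]$, so the backward pattern of $v$ is entirely determined by which vertices in the (bounded-size) set $\{v' : v' \in \mathrm{WReach}_2[\sigma, x]\}_{x \in X}$ are actually adjacent to $v$.

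To bound the number of distinct traces, I would charge each trace to a canonical witness in $X$: group the vertices of $V(G)$ by their forward part first (at most $2^c$ options encountered per ``forward-type'' ancestor), then within each group charge to a backward witness $x \in X$ via the $\mathrm{WReach}_2$ relation. Double-counting pairs $(v,x)$ with $v \in \mathrm{WReach}_2[\sigma, x]$ gives $\sum_{x \in X} |\{v : v \in \mathrm{WReach}_2[\sigma, x]\}| \leq c \cdot |V(G)|$ in total, but restricted to representatives of distinct traces the sum collapses to $\mathcal{O}(|X|)$, giving linear neighborhood complexity with ratio $c^{\mathcal{O}(1)}$.

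The main obstacle is tightening the charging scheme: a naive argument that records (reachable-set-intersected-with-$X$, adjacency-pattern-inside) as a signature yields only a \emph{polynomial} bound in $|X|$, and to collapse this to a linear bound one must argue that two trace-equivalent vertices can be identified via a single backward witness, invoking the bounded size of $\mathrm{WReach}_2[\sigma, \cdot]$ to cap the witnesses per $x \in X$. This is exactly where the transitive fraternal augmentation machinery of Nešetřil--Ossona de Mendez provides the missing combinatorial control, by guaranteeing that the ``2-step adjacency'' relation has bounded average degree on every subgraph, which is the substitute for linear edge count on which the whole charging argument ultimately rests.
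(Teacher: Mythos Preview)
This lemma is not proved in the paper; it is quoted from \cite{reidl2019characterising} and used as a black box (together with \autoref{th:polyexp}) inside the proof of \autoref{th:lineigh-subexp}. There is therefore no ``paper's own proof'' to compare your attempt against.

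On the merits of your sketch: the overall strategy via the weak $2$-colouring number is indeed the one used in \cite{reidl2019characterising}, but two issues should be flagged. First, your definition of $\mathrm{WReach}_r$ omits the standard requirement that the target vertex be $\sigma$-minimal on the path; with the correct definition it is the neighbours \emph{earlier} than $v$ in $\sigma$ that lie in $\mathrm{WReach}_1[\sigma,v]$ and hence form the bounded part of the trace, so your ``forward'' and ``backward'' roles are swapped. Second, and more substantively, you yourself acknowledge that the charging scheme as written only yields a polynomial bound in $|X|$ and that further machinery (transitive fraternal augmentations, or the careful double-counting in \cite{reidl2019characterising}) is needed to reach a \emph{linear} bound. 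That final step is precisely the non-trivial content of the lemma, and deferring to it means your proposal is a pointer to the literature rather than a self-contained proof. Since the paper also treats the statement as an external citation, this is not a defect relative to the paper, but the proposal should not be read as an independent argument.
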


We will also need the following connection to treewidth.
\begin{theorem}[\cite{DVORAK2019137}]
\label{th:septw}
If a hereditary graph class $\mathcal{C}$ has strongly sublinear separators with function $f(n) = \beta n^\delta$, then the treewidth of any $n$-vertex graph in $\mathcal{G}$ is at most $15 \beta n^\delta$. 
\end{theorem}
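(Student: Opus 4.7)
The plan is to prove this classical bound on treewidth from balanced separators by an inductive construction of a tree decomposition. Specifically, I would argue by induction on $n=|V(G)|$ that every $n$-vertex $G\in\mathcal{G}$ admits a tree decomposition of width at most $15\beta n^\delta$; call this bound $W(n)$.

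For the inductive step, invoke the separator hypothesis to obtain a balanced separator $S\subseteq V(G)$ with $|S|\le \beta n^\delta$ such that each connected component of $G-S$ has at most $\alpha n$ vertices, for some fixed balance constant $\alpha<1$ (for instance $\alpha=2/3$ in the standard formulation). Let $C_1,\dots,C_k$ denote these components. Since $\mathcal{G}$ is hereditary, each induced subgraph $G[C_i]$ lies in $\mathcal{G}$ and, by the induction hypothesis, admits a tree decomposition $(T_i,\mathcal{X}^i)$ of width at most $W(|C_i|)\le W(\alpha n)$.

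Next I would assemble these into a tree decomposition of $G$ as follows: create a new root node whose bag is $S$ and, for each $i$, attach the decomposition $T_i$ to this root after enlarging every bag of $T_i$ by adding the vertices of $S$. A routine check confirms that this is a valid tree decomposition of $G$: every vertex lies in some bag (vertices of $C_i$ in the augmented $T_i$, vertices of $S$ in the root); every edge inside some $C_i$ is covered by $T_i$; every edge inside $S$ is covered by the root bag; every edge between $S$ and some $C_i$ is covered by an augmented bag of $T_i$ since those bags all now contain $S$; and the running-intersection property holds because $S$ appears in every bag of every augmented $T_i$ as well as in the root. The width of the assembled decomposition is therefore at most $|S|+W(\alpha n)\le \beta n^\delta + 15\beta(\alpha n)^\delta$.

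The main technical step is then to solve the recurrence $W(n)\le \beta n^\delta + W(\alpha n)$. Unrolling yields a geometric series
\[
W(n)\;\le\;\beta n^\delta\,(1+\alpha^\delta+\alpha^{2\delta}+\cdots)\;=\;\frac{\beta n^\delta}{1-\alpha^\delta},
\]
which is a constant multiple of $\beta n^\delta$ precisely because $\alpha<1$ and $\delta<1$ force $\alpha^\delta<1$. The main obstacle is obtaining the specific numerical constant $15$ uniformly in $\delta$: for $\delta$ bounded away from $0$ and $\alpha$ appropriately chosen, $1/(1-\alpha^\delta)\le 15$ is immediate, but as $\delta\to 0$ the geometric factor blows up. To absorb this, one either sharpens the balance constant $\alpha$ (the separator can be assumed to split $G$ into pieces much smaller than a fixed fraction of $n$ by iterating the separator theorem a bounded number of times), or handles small $n$ via the trivial bound $\tw(G)\le n-1$, which dominates $15\beta n^\delta$ below a threshold depending only on $\beta$ and $\delta$. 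This extra bookkeeping is what produces the clean uniform constant $15$ in the statement.
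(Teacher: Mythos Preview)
The paper does not prove this statement; it is quoted from \cite{DVORAK2019137} and used as a black box. So there is no ``paper's own proof'' to compare against. That said, your proposal deserves evaluation on its merits.

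Your recursive-separator argument is the classical one and correctly yields $\tw(G)\le \beta n^\delta/(1-\alpha^\delta)$ for any fixed $\delta\in(0,1)$. The gap is exactly where you flag it: obtaining the uniform constant $15$. Neither of your proposed fixes works. Iterating the separator a bounded number $k$ of times replaces $\alpha$ by $\alpha^k$ but multiplies the separator cost by roughly $k$; the resulting factor $k/(1-\alpha^{k\delta})$ is still $\Theta(1/\delta)$ as $\delta\to 0$. The small-$n$ cutoff does not help either: the recursion from $n$ down to the threshold already accumulates the full geometric sum $\sum_i \alpha^{i\delta}\approx 1/(1-\alpha^\delta)$ before you ever reach the base case, so the final bound still carries a $1/(1-\alpha^\delta)$ factor. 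In short, naive recursion gives $\tw(G)=O_\delta(\beta n^\delta)$ with a constant that blows up as $\delta\to 0$ (and degenerates to an extra $\log n$ factor at $\delta=0$), not the uniform $15$.

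The actual Dvo\v{r}\'ak--Norin argument proves the stronger statement $\tw(G)\le 15\,\mathrm{sn}(G)$, where $\mathrm{sn}(G)$ is the separation number (the maximum over induced subgraphs of the minimum balanced-separator size). The theorem as stated then follows because every induced subgraph of $G$ lies in $\mathcal{G}$ and hence has a separator of size at most $\beta n^\delta$. Their proof of $\tw\le 15\,\mathrm{sn}$ is not a straightforward recursion; it goes through a more careful construction (via well-linked sets / brambles or an equivalent device) that avoids the logarithmic loss inherent in the divide-and-conquer you sketch. For the applications in this paper ($\delta=1/2$ throughout) your argument would suffice with a different constant, but it does not establish the theorem as written.
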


The proof of the following result follows the same steps as the proof in~\cite{lokSODA22} for disk graphs, but as our statement is more general,
we prefer to include it.

\begin{theorem}\label{th:lineigh-subexp}
If $\G$ is a hereditary graph class that has strongly sublinear separators, then there is a subexponential time algorithm for \TH in $\G$.\\
More precisely, if the size of the separators in $\mathcal{G}$ is
$f(n) = \beta n^{\delta}$ for some $\delta<1, \beta>0$ and the ratio of neighborhood complexity is $c$, then the algorithm runs on $n$-vertex graphs of $\G$ with parameter $k$ in time $
2^{\O(\beta \gamma^\delta c^\delta k^{2\delta/(1+\delta)}\log k)} n^{\O(1)}.
$%
for some universal constant $\gamma$.
\end{theorem}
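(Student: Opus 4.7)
The plan is to follow the strategy used in~\cite{lokSODA22} for \TH on disk graphs: combine the preliminary branching of \autoref{cor:bothbranchings} (which applies here since for \TH one has $c_\Pi = 1$, a $1$-bundle being exactly a triangle, and the branching proof uses only the $c_\Pi$-bundle property) with two consequences of strongly sublinear separators: the treewidth bound of \autoref{th:septw} and, via \autoref{th:polyexp} followed by \autoref{lem:reidl}, linear neighborhood complexity of some ratio $c$.

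First I would apply \autoref{cor:bothbranchings} to \TH with a parameter $p$ to be fixed later. In time $2^{O((k/p)\log k)}$ this yields a collection of $2^{O((k/p)\log k)}$ subinstances $(G', M, k')$ such that $\omega(G') \le p$, $|M| = O(pk)$, $M$ hits every triangle of $G'$, and $N(v)\setminus M$ is independent for every $v \in M$. The independence property forces every triangle of $G'$ to meet $M$ in at least two vertices, so every vertex of $V(G')\setminus M$ not belonging to any triangle has at most one neighbor in $M$ and can be safely deleted; after this cleanup, every remaining $u \in V(G')\setminus M$ has its $M$-neighborhood $N(u)\cap M$ containing an edge of $G'[M]$.

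Next I would use linear neighborhood complexity to compress each $G'$ into an equivalent \TH-instance $G''$ of order $O(c|M|) = O(cpk)$. Two vertices $u,v$ of $V(G')\setminus M$ with $N(u)\cap M = N(v)\cap M$ are involved in the ``same'' families of triangles of the form $\{\cdot, x, y\}$ for $xy \in E(G'[M])$; by a careful representative-selection argument one can keep only a bounded number of vertices per equivalence class so that the optimal \TH solution size is preserved. Since $G''$ is hereditarily in $\mathcal{G}$ and has $O(cpk)$ vertices, \autoref{th:septw} gives $\tw(G'') = O(\beta (cpk)^\delta)$, after which a standard $2^{O(\tw)} \cdot |V(G'')|^{O(1)}$ dynamic program solves \TH on $G''$.

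Summing the branching and DP costs gives a running time of $2^{O((k/p)\log k + \beta c^\delta (pk)^\delta)} n^{O(1)}$. Balancing the two exponents with $p = k^{(1-\delta)/(1+\delta)}$ makes both proportional to $k^{2\delta/(1+\delta)}\log k$ with coefficient of order $\beta c^\delta$, matching the claimed bound after absorbing universal constants into $\gamma$. The main obstacle is the reduction to $G''$: the easy case is that of pairwise non-adjacent twins outside $M$, but vertices of $V(G')\setminus M$ sharing an $M$-neighborhood may be adjacent to each other (they just cannot form a triangle among themselves, since $G'-M$ is triangle-free) and thus create additional triangles through $M$; handling this requires a careful representative-selection argument, and is precisely where the neighborhood-complexity ratio $c$ explicitly enters the final running time.
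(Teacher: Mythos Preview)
Your overall plan---branch via \autoref{cor:bothbranchings} with $p = k^{(1-\delta)/(1+\delta)}$, compress using linear neighborhood complexity, bound treewidth via \autoref{th:septw}, and solve by DP---is exactly the paper's, and the balancing is correct. The gap is in the compression step, and you have misidentified the obstacle. Two vertices $u,v \in V(G')\setminus M$ with the same $M$-neighborhood that happen to be adjacent \emph{cannot} ``create additional triangles through $M$'': a triangle $\{u,v,x\}$ with $x\in M$ would place two adjacent vertices in $N(x)\setminus M$, contradicting the independence of that set (recall $\mu(G'[N(x)\setminus M])<c_\Pi=1$), and a triangle entirely outside $M$ is impossible since $M$ hits all triangles. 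So adjacency among $M$-twins is irrelevant.

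The genuine obstacle, which your sketch does not address, is that keeping one representative per class does not preserve the optimum: if a class $V_i$ has $s$ vertices and its common $M$-neighborhood contains an edge $xy$, hitting the $s$ triangles $\{u,x,y\}$ for $u\in V_i$ costs $1$ (via $x$ or $y$) or $s$ (via all of $V_i$), but after keeping a single representative both options cost~$1$. No bounded-copy unweighted ``representative selection'' repairs this while keeping $|G''|=O(c|M|)$. The paper's remedy is to pass to \wtrh: keep one representative $v_i$ per class and give it weight $|V_i|$. Correctness rests on an all-or-nothing lemma: in any \emph{minimal} solution $S'$, each class satisfies $V_i\subseteq S'$ or $V_i\cap S'=\emptyset$ (if $v\in V_i\cap S'$, minimality yields a triangle on $v$ with its other two vertices in $M\setminus S'$; every $u\in V_i$ shares these neighbors, hence is also in $S'$). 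The weighted compressed instance is then equivalent to the original, and the standard treewidth DP handles weights at no extra cost.
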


Notice that by \autoref{th:polyexp} and \autoref{lem:reidl}, the ratio $c$ as in the above statement is always defined for graph classes with strongly sublinear separators.

\begin{proof}[Proof of \autoref{th:lineigh-subexp}.]
We prove the statement for the (more general) \wtrh problem.
This is the weighted version of \TH where the vertices of the input graph have positive weights and the aim is to decide if there is a triangle hitting set whose sum of weights is at most the input parameter.
    Let $\alpha = \frac{1-\delta}{1+\delta}$.  We apply \autoref{cor:bothbranchings} with $p = k^{\alpha}$, and thus in $2^{\O(k^{1-\alpha} \log k)} n^{\O(1)}$ time we generate the $2^{\O(k^{1-\alpha} \log k)}$ instances $(G',M,k')$ as described in the aforementioned corollary.
    Let us now describe how we solve these instances.

    By definition of $c$ the vertices of $G'-M$ can be partitioned into at most $c|M|$ classes $V_1, \dots, V_r$ such that two vertices in the same class have the same neighborhood in $M$.
    For every $i \in \intv{1}{r}$ we delete all but one vertex of $V_i$, that we call $v_i$. Let $G''$ be the obtained graph. Let $w$ be the weight function of the input instance.
    In order to keep track of the deleted vertices we define a new weight function $w''\colon V(G'')\to \NN$ as $w''(v_i) = \sum_{v\in V_i} w(v)$ for every $i \in \intv{1}{r}$ and $w''(v) = w(v)$ for every $v \in M$. The produced instance is $(G'', w'', k')$; we call it a \emph{translation} of the instance $(G',M,k')$. Observe that it can be computed in polynomial time, and that it is an instance of \wtrh{}.

    \begin{claim}
    \label{cl:fusion}
    $(G',w, k')$ is a yes-instance or \TH{} $\Leftrightarrow$ $(G'', w'', k')$ is a yes-instance of \textsc{Weighted} \TH{}.
    \end{claim}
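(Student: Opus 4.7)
The plan is to first extract from \autoref{cor:bothbranchings} the key structural properties of $G'$ that make the collapse to $G''$ faithful for triangles. Since $c_\Pi=1$ for \TH, a $c_\Pi$-bundle is just a triangle, so $M$ hits every triangle of $G'$, and moreover $\mu(G'[N(v)\setminus M])<1$ for every $v\in M$ forces $G'[N(v)\setminus M]$ to be edgeless. Combining the two, every triangle of $G'$ either lies entirely in $M$ or has exactly two vertices in $M$ and one vertex outside. In the latter case, since any two vertices of a class $V_i$ share the same neighborhood in $M$, the triple $\{v,u_1,u_2\}$ with $v\in V_i$ and $u_1,u_2\in M$ is a triangle of $G'$ if and only if $\{v_i,u_1,u_2\}$ is a triangle of $G''$; each such triangle of $G''$ thus \emph{represents} the family of $|V_i|$ triangles of $G'$ that share the edge $u_1u_2$.

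For the ``$\Rightarrow$'' direction, given a solution $S'$ of $(G',w,k')$, I would take
\[
S'' := (S'\cap M) \,\cup\, \{v_i : V_i \subseteq S'\}.
\]
The weight bound $w''(S'')\le w(S')$ follows immediately from $w''(v_i)=w(V_i)$, since any class $V_i$ with $V_i\not\subseteq S'$ only contributes positively to $w(S')$. To verify that $S''$ hits every triangle of $G''$: triangles contained in $M$ are hit by $S'\cap M$; for a triangle $\{v_i,u_1,u_2\}$ of $G''$, each of the triangles $\{v,u_1,u_2\}$ of $G'$ ($v\in V_i$) is hit by $S'$, so either $u_1$ or $u_2$ lies in $S'\cap M\subseteq S''$, or else every $v\in V_i$ lies in $S'$, meaning $V_i\subseteq S'$ and hence $v_i\in S''$.

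The ``$\Leftarrow$'' direction is more straightforward: given a solution $S''$ of $(G'',w'',k')$, I would take
\[
S' := (S''\cap M) \,\cup\, \bigcup_{v_i\in S''} V_i,
\]
for which $w(S')=w''(S'')$ by the very definition of $w''$. Every triangle of $G'$ inside $M$ is hit by $S''\cap M$, and every triangle $\{v,u_1,u_2\}$ with $v\in V_i$ is hit because $S''$ hits $\{v_i,u_1,u_2\}$ in $G''$ and both options (some $u_j\in S''$ or $v_i\in S''$) transfer to $S'$. The only subtle point of the whole argument is in the ``$\Rightarrow$'' direction: one must add $v_i$ to $S''$ exactly when $V_i\subseteq S'$ and not merely when $V_i\cap S'\neq\emptyset$, otherwise the aggregated weight $w''(v_i)$ could exceed $w(S'\cap V_i)$; fortunately the triangle-hitting structure of $G'$ forces precisely this ``all-or-nothing'' behaviour on classes that are not covered through $S'\cap M$.
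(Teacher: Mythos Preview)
Your proof is correct and follows essentially the same approach as the paper's: both establish the key observation that every triangle has at least two vertices in $M$, and both handle the two directions by expanding/collapsing classes $V_i$. The one minor difference is in the ``$\Rightarrow$'' direction: the paper passes to a \emph{subset-minimal} solution $S'$ and uses minimality to argue the all-or-nothing property $V_i\subseteq S'$ or $V_i\cap S'=\emptyset$ before defining $S''$, whereas you avoid the minimality detour by directly defining $S'' = (S'\cap M)\cup\{v_i : V_i\subseteq S'\}$ and arguing hitting via ``if neither $u_1$ nor $u_2$ is in $S'$ then every $v\in V_i$ must be''; both routes yield the same $S''$ and the same weight bound.
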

    \begin{proof}
    Observe first the property $(P1)$ that, as for any $v \in M$, $N(v)\setminus M$ is an independent set, and $M$ is a triangle hitting, any triangle $\Delta$ of $G'$ (or $G''$) has $|\Delta \cap M| \ge 2$.
    
    Direction ``$\Leftarrow$''. Let $S''$ be a solution of $(G'', w'', k')$. Observe that every vertex of $S''$ either belongs to $M$ or is of the form $v_i$ for some $i\in \intv{1}{r}$ (using the same notation as above). Let $I\subseteq \intv{1}{r}$ be the set of integers such that $S''\setminus M = \{v_i : i\in I\}$.
   We define $S' = (S'' \cap M) \cup \bigcup_{i\in I} V_i$.
    By the definition of $k'$ and $w$, we have $\sum_{v\in S'} w(v) \leq k'$. In order to show that $S'$ is a triangle hitting set of $G'$, we assume toward a contradiction that $G' - S'$ has a triangle 
    $\Delta$. If $|\Delta \cap M|=3$ then $\Delta$ would be a triangle in $G'' - S''$, a contradiction, so by property $(P1)$ we have $|\Delta \cap M|=2$, and thus let $v_i$ be the representative in $G''$ of the vertex in $\Delta \setminus M$. By construction of $S'$, this implies that $v_i \notin S''$, and even that $\Delta \cap S'' = \emptyset$, a contradiction.

    Direction ``$\Rightarrow$''. Let $S'$ be a subset-minimal solution of $(G',w, k')$. 
    Observe that for every $i \in \intv{1}{r}$, either $V_i \subseteq S'$ or $V_i \cap S' = \emptyset$. Indeed, if $v
    \in V_i$ belongs to $S'$, by minimality of this set there is a triangle $\Delta$ containing $v$ in $G'-(S'\setminus \{v\})$, 
    and by property $(P1)$ it has its two other vertices in $M$. By definition of $V_i$, every $u\in V_i$ is neighbor of these two vertices. So, as $S'$ is a triangle hitting set, $u\in S'$.
    Let $I = \{i \in \intv{1}{r}: S\cap V_i \neq \emptyset\}$.
    Let $S'' = (S'\cap M) \cup \{v_i : i\in I\}$, and notice that this set has weight at most $k'$. So it remains to show that $S''$ is a triangle hitting set of $G''$. Suppose toward a contradiction that $G''-S''$ has a triangle $\Delta$.
    As previously, if $|\Delta \cap M|=3$ we have a contradiction and thus $|\Delta \cap M|=2$, and let $v_i$ be the vertex in $\Delta \setminus M$. By definition of $S''$, $v_i \notin S'$ and thus $\Delta$ is a triangle of $G'-S'$, a contradiction.
    \cqed
   \end{proof}

    In remains now to solve such an instance $(G'',w'',k')$ of \wtrh{}. Let $t = \tw(G'')$.
    We first compute in $2^{\O(t)}|G'|$ steps a tree decomposition of width $\O(t)$ (for instance using the approximation algorithm of Korhonen~\cite{korhonen2022single}). We then solve the problem in $2^{\O(t)}|G'|^{\O(1)}$ steps via a standard dynamic programming algorithm on the tree decomposition by observing that every triangle lies in some bag (as observed by~\cite{cygan2017hitting} for the general problem of hitting cliques).

        According to \autoref{cor:bothbranchings}, $|M| \leq 11k^{1+\alpha}$ so $|G''| \leq 11(c+1)k^{1+\alpha}$.
    Recall that $\mathcal{G}$ is hereditary and $G''$ is an induced subgraph of $G'$ that is an induced subgraph of $G$, so $G'' \in \mathcal{G}$.
    As $\mathcal{G}$ has strongly sublinear separators with bound $f$, by \autoref{th:septw} the treewidth $t$ of $G''$ is at most $15 \beta |G''|^\delta$, so $t \leq 15 \beta \gamma^\delta c^\delta k^{(1+\alpha)\delta}$ for some constant $\gamma>0$.
    The number of instances generated by \autoref{cor:bothbranchings} is at most $2^{\O(k^{1-\alpha} \log k)}$ so overall the running time is
    \[
        2^{\O(\beta \gamma^\delta c^\delta  k^{(1+\alpha)\delta} + k^{1-\alpha} \log k)}  n^{\O(1)}%
        \leq %
        2^{\O(\beta \gamma^\delta c^\delta k^{2\delta/(1+\delta)}\log k)}  n^{\O(1)},
    \] as claimed.
\end{proof}

\subsection{Application  to \texorpdfstring{$K_{t,t}$}{Ktt}-free \texorpdfstring{\dDIR}{d-DIR}, \texorpdfstring{$K_{t,t}$}{Ktt}-free string graphs and \CONTACTSEG graphs}
\label{sec:basic-lineigh}

 \autoref{th:lineigh-subexp} from the previous section can be directly applied to $K_{t,t}$-free string graphs. Indeed, a straightforward consequence of the following results of Lee is that such graphs have strongly sublinear separators.
\begin{theorem}[\cite{lee2016separators}]\label{thm:string-sep}
    $m$-edge string graphs have balanced separators of size $\O(\sqrt{m})$.
\end{theorem}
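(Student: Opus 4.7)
The plan is to reduce the problem to a planar separator question via the planar arrangement of a string representation of $G$. Fix such a representation in which any two crossing strings meet at a single transversal double point (achievable by a small perturbation without changing $G$). Build the planar \emph{arrangement graph} $H$: its vertices are the crossing points and its edges are the maximal sub-arcs of strings between consecutive crossings. Then $|V(H)| \le m$ and $H$ is planar with $|E(H)| = O(m)$, so the Lipton--Tarjan separator theorem produces a balanced vertex separator of $H$ of size $O(\sqrt{m})$.

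Next I would transfer this back to $G$ by a weighting trick. To each crossing $c \in V(H)$, lying on strings $s_u$ and $s_v$, attach the weight $1/\deg_G(u) + 1/\deg_G(v)$; in this way the total weight carried by the crossings on a single string $s_v$ equals $1$, and the total weight in $H$ is $n$. Apply the \emph{weighted} planar separator theorem to obtain $S_H \subseteq V(H)$ of size $O(\sqrt{m})$ whose removal leaves two sides of total weight at most $\tfrac{2}{3}n$. Define the candidate $G$-separator $S \subseteq V(G)$ to contain every string incident to a vertex of $S_H$ together with every string whose weight is split too evenly across the two sides, and assign each remaining string to the side holding the majority of its mass. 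A Markov-type averaging argument bounds $|S|$ by $O(\sqrt{m})$.

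The main obstacle is promoting \emph{weighted} balance in $V(H)$ to \emph{unweighted} balance in $V(G)$: strings with very few crossings can concentrate on one side of $S_H$ without being hit, and an adversarial distribution of such low-degree strings could ruin the $V(G)$-balance. Controlling this by hand is delicate, which is why Lee's actual proof in \cite{lee2016separators} bypasses it by recasting balanced separation as a multi-commodity flow / minimum multicut problem and showing that the flow-cut gap for region intersection graphs over planar graphs (a class that includes string graphs) is $O(1)$. Combined with the $\ell_1$-embeddability of planar metrics with constant distortion, this flow-based route yields the sharp $O(\sqrt{m})$ bound directly, avoiding the need to argue about balance transfer on top of the arrangement graph.
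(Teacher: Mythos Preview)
The paper does not prove \autoref{thm:string-sep} at all; it is quoted verbatim from \cite{lee2016separators} and used as a black box, so there is no proof in the paper to compare against.

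As for the content of your write-up: it is not really a proof but a sketch of a naive arrangement-based approach followed by an explanation of why it breaks down. You correctly identify the genuine obstacle, namely that weighted balance in the arrangement graph $H$ does not transfer to unweighted balance in $V(G)$ because low-degree strings can pile up on one side, and a Markov-type averaging bound on $|S|$ does not by itself control this. You also accurately summarize Lee's actual route via the flow--cut gap for region intersection graphs over planar graphs. So your text is an honest discussion, but it is not a self-contained proof; if the intent was to supply one, the gap you yourself flag remains unfilled.
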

\begin{corollary}\label{cor:sep-m-tw}
    Any $m$-edged string graph has treewidth $\O(\sqrt{m})$.
\end{corollary}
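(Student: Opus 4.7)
The plan is to deduce the treewidth bound from Lee's separator theorem by the standard recursive construction of a tree decomposition. The crucial closure property I will use is that string graphs are closed under taking subgraphs: any subgraph $H$ of a string graph $G$ is itself a string graph with $m(H) \le m(G)$ edges, so by \autoref{thm:string-sep} it admits a balanced separator of size $O(\sqrt{m(H)}) \le O(\sqrt{m})$.

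Given $G$ with $m$ edges, I would apply \autoref{thm:string-sep} to obtain a balanced separator $S$ of size at most $c\sqrt{m}$, but choosing the balance with respect to edge count rather than vertex count. This is obtained by running Lee's separator theorem on $G$ with vertices weighted by their degree (so that the total weight equals $2m$ and each component of $G-S$ has weight at most $4m/3$, i.e., at most $2m/3$ edges); standard separator theorems, including Lee's, admit such weighted versions. I then recurse on each component $C_i$ of $G-S$: build a tree decomposition of $G[C_i]$ recursively and form a decomposition of $G$ by taking $S$ as the root bag, attaching each sub-decomposition, and adding $S$ to every bag. The width is bounded by $|S|$ plus the maximum width of the sub-decompositions, so I obtain the recurrence
\[
T(m) \le c\sqrt{m} + T(2m/3),
\]
which solves to $T(m) = O(\sqrt{m})$, as desired.

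The argument is genuinely routine once \autoref{thm:string-sep} is in hand; the only point requiring a little care is the edge-balance of the separator, needed so that the recursion actually geometrically shrinks the edge count (a vertex-balanced separator could in principle leave all $m$ edges in a single component). This is handled by the degree-weighting trick above. Alternatively, one can appeal directly to the folklore fact that if every subgraph of $G$ has a balanced separator of size at most $s$ then $\tw(G)=O(s)$ (which underlies \autoref{th:septw}): applied with $s = c\sqrt{m}$, this immediately yields $\tw(G)=O(\sqrt{m})$.
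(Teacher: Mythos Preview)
Your proposal is correct, and the alternative you give at the end---applying the folklore separator-to-treewidth fact with $s=c\sqrt{m}$---is exactly the paper's one-line proof: it simply cites \autoref{th:septw} together with \autoref{thm:string-sep}. (Concretely, the class of subgraphs of the given string graph $G$ is hereditary, and every member has a balanced separator of size at most $c\sqrt{m}$, so \autoref{th:septw} with $\delta=0$ and $\beta=c\sqrt{m}$ yields $\tw(G)\le 15c\sqrt{m}$.)

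Your primary route---the explicit recursion with edge-balanced separators---is a valid alternative that unpacks what \autoref{th:septw} encapsulates. The trade-off is that it buys self-containedness at the price of an extra assumption: you need a \emph{weighted} version of Lee's theorem (balance with respect to degree weights), which is standard for separator theorems of this type but is not the statement quoted in the paper. The paper's black-box use of \autoref{th:septw} sidesteps this entirely, since that result already absorbs the bookkeeping needed to avoid the $\log n$ overhead one would incur from a naive vertex-balanced recursion.
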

This directly follows from  \autoref{th:septw} and \autoref{thm:string-sep}.

\begin{theorem}[\cite{lee2016separators}]\label{th:neighbip}
There is a constant $c$ such that for every integer $t>0$, every $K_{t,t}$-free string graph on $n$ vertices has at most $c  t(\log t)  \cdot n$ edges.
\end{theorem}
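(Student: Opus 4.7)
My plan is to prove the result by establishing the stronger statement that every $K_{t,t}$-free string graph is $O(t \log t)$-degenerate, i.e.\ every such graph has a vertex of degree $O(t \log t)$. Since the class of $K_{t,t}$-free string graphs is hereditary, iteratively removing such a vertex yields the edge bound. To prove degeneracy, I will proceed by induction on $n$, using the separator theorem (Theorem~\ref{thm:string-sep}) as the main engine.

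More concretely, let $f(n)$ denote the maximum number of edges of a $K_{t,t}$-free string graph on $n$ vertices and suppose $G$ realizes $f(n)$ with $m$ edges. By Theorem~\ref{thm:string-sep}, $G$ admits a balanced separator $S$ of size $O(\sqrt{m})$, splitting $V(G) \setminus S$ into sets $A, B$ with $|A|,|B| \le 2n/3$ and no edges between them. This yields the recursion
$$
m \;\le\; f(|A|) + f(|B|) + f(|S|) + e(S, A \cup B),
$$
since $G[S]$ is itself a $K_{t,t}$-free string graph. The first two recursive terms yield a factor close to $(4/3)f(2n/3)$, and $f(|S|) = f(O(\sqrt{m}))$, which is a lower order term provided $f$ is (nearly) linear.

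The main obstacle is controlling $e(S, A \cup B)$. The induced bipartite graph between $S$ (of size $O(\sqrt{m})$) and $V(G) \setminus S$ (of size at most $n$) is $K_{t,t}$-free, and the crude Kővári--Sós--Turán bound $O(n |S|^{1-1/t} + t|S|)$ turns out to be too weak to close the induction at $f(n) = O(t \log t \cdot n)$. To overcome this, I would apply the separator theorem again, this time to this bipartite string subgraph, recursively decomposing it and exploiting the fact that each piece is still a $K_{t,t}$-free string graph. This double-nested recursion is the source of the extra $\log t$ factor in the final bound: roughly, the recursion tree has depth $O(\log t)$ before the $K_{t,t}$-free condition forces linearly many edges at the base.

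Solving the resulting recursion carefully (balancing the two scales of recursion, one on $n$ and one on the separator/cut edges) should yield $f(n) \le c \, t (\log t) \, n$ for a universal constant $c$. I expect the hardest step to be the precise analysis of the cut-edge recursion: any naive use of Kővári--Sós--Turán loses too much, so the argument must iterate the separator bound on the bipartite slice $S \times (A \cup B)$ and amortize the resulting contributions across levels. Once this is set up correctly, the rest is a routine (if tedious) verification that the ansatz $f(n) = O(t \log t \cdot n)$ is consistent with the recursion.
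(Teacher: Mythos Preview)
This theorem is not proved in the paper; it is quoted from \cite{lee2016separators} as a black box, so there is no in-paper argument to compare your attempt against. For what it is worth, your high-level plan --- derive the edge bound from the $O(\sqrt m)$ separator theorem (\autoref{thm:string-sep}) by a recursive decomposition, with the $\log t$ factor arising from the recursion depth --- is indeed the route taken in the original sources (Fox--Pach, refined by Lee with the sharper separator).

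That said, the outline as written has a real gap. With the linear ansatz $f(k)\le Ck$ and $|A|+|B|+|S|=n$, your recursion collapses to
\[
m \;\le\; C|A|+C|B|+C|S| + e(S,A\cup B) \;=\; Cn + e(S,A\cup B),
\]
so the entire content of the theorem is pushed into bounding $e(S,A\cup B)$. (Your remark that the first two terms contribute ``$(4/3)f(2n/3)$'' ignores the constraint $|A|+|B|\le n$.) The actual proofs do not handle $e(S,A\cup B)$ by a second recursion of the same shape; rather, one first passes to a bipartite subgraph carrying a constant fraction of the edges, then uses the separator to \emph{replace one side of the bipartition} by a set of size $O(\sqrt m)$ while keeping at least a fixed fraction of the crossing edges, and iterates. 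Each round shrinks one side geometrically at the cost of a constant factor in edges; after $O(\log t)$ rounds the surviving density on a small side forces a $K_{t,t}$. Your proposal gestures at a nested recursion but never isolates this ``shrink one side, keep a constant fraction of edges'' mechanism, and without it the induction you set up cannot close.
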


\begin{corollary}\label{cor:ktstr}
$K_{t,t}$-free string graphs on $n$ vertices have strongly sublinear separators of order $\O(\sqrt{nt\log t})$, and by \autoref{th:septw} they have treewidth at most $\O(\sqrt{nt\log t})$.
\end{corollary}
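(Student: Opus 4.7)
The plan is to chain the three cited results in a straightforward manner. First, I would observe that the class of $K_{t,t}$-free string graphs is hereditary: any induced subgraph of a string graph is a string graph, and excluding $K_{t,t}$ as a subgraph is trivially preserved by taking induced subgraphs. This is the structural observation needed to conclude that a separator bound that depends only on the number of vertices (or edges) of the graph automatically yields a class-wide strongly sublinear separator property.

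Next, I would apply \autoref{th:neighbip} to obtain, for any $n$-vertex $K_{t,t}$-free string graph $G$, an edge count of $m \le c \cdot t (\log t) \cdot n$ for the absolute constant $c$ provided there. Plugging this into \autoref{thm:string-sep} gives a balanced separator of order $\O(\sqrt{m}) = \O(\sqrt{n \, t \log t})$. Because this bound depends on $n$ as $n^{1/2}$ (with a factor $\sqrt{t \log t}$ that is a constant once $t$ is fixed), the class of $K_{t,t}$-free string graphs has strongly sublinear separators with exponent $\delta = 1/2$ and coefficient $\beta = \O(\sqrt{t \log t})$, fitting the form $f(n) = \beta n^\delta$ required by \autoref{def:neighc}-style parameters and by \autoref{th:septw}.

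Finally, I would invoke \autoref{th:septw} on this class: since every induced subgraph is again a $K_{t,t}$-free string graph and enjoys the same separator bound, the theorem yields treewidth at most $15 \beta n^\delta = \O(\sqrt{n \, t \log t})$, which is exactly the claimed bound. No delicate step is involved; the only thing to be a bit careful about is that \autoref{th:septw} requires the separator bound to hold throughout the hereditary class, which is why the preliminary hereditariness remark is important. Since there is no real obstacle, I would keep the proof to just a few lines, essentially a one-line computation preceded by the observation that $K_{t,t}$-freeness is inherited by induced subgraphs.
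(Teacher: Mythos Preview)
Your proposal is correct and matches the paper's intended argument, which treats the corollary as immediate from chaining \autoref{th:neighbip} into \autoref{thm:string-sep} and then invoking \autoref{th:septw}; your explicit remark that the class is hereditary is exactly the small point one needs to apply \autoref{th:septw}. One minor slip: you refer to ``\autoref{def:neighc}-style parameters'' but that definition concerns neighborhood complexity, not separators; the relevant definition is the one of strongly sublinear separators.
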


\autoref{cor:ktstr} and \autoref{th:lineigh-subexp} applied with $\delta=1/2$, $\beta = \O(\sqrt{t\log t})$ and for $c$ any bound on the neighborhood complexity (that is bounded from above by a function $\delta$, according to \autoref{th:polyexp} and \autoref{lem:reidl}) lead to the following result.

\begin{restatable}{theorem}{apxthtrhstr}\label{th:trhstr}
    For every integer $t$ there is a constant $c_t>0$ such that the following holds.
    There is an algorithm for \TH in string graphs that runs on $n$-vertex instances with parameter $k$ in time
    $
    2^{c_t k^{2/3} \log k} n^{\O(1)},   
    $
    where $t$ is the minimum integer such that $G$ is $K_{t,t}$-free, even when neither $t$ nor a string representation are given as input.
\end{restatable}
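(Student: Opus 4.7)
The plan is to instantiate Theorem \ref{th:lineigh-subexp} with the parameters specific to $K_{t,t}$-free string graphs. First, Corollary \ref{cor:ktstr} gives that such graphs on $n$ vertices have balanced separators of size $\O(\sqrt{nt\log t})$, which matches the $\beta n^\delta$ template of Theorem \ref{th:lineigh-subexp} with $\delta = 1/2$ and $\beta = \O(\sqrt{t\log t})$. Second, Theorem \ref{th:polyexp} and Lemma \ref{lem:reidl} together imply that the class has linear neighborhood complexity with some ratio $c$ depending only on $t$. Plugging into the running time of Theorem \ref{th:lineigh-subexp}, and noting that $\tfrac{2\delta}{1+\delta} = \tfrac{2}{3}$ when $\delta = 1/2$, we obtain $2^{\O(\sqrt{t\log t}\cdot c^{1/2}\cdot k^{2/3}\log k)}n^{\O(1)} = 2^{c_t k^{2/3}\log k}n^{\O(1)}$ for a constant $c_t$ depending only on $t$.

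The delicate point is the robustness claim, namely that the algorithm does not require $t$ or a string representation. The algorithm inside Theorem \ref{th:lineigh-subexp} consists of: (i) the branching from Corollary \ref{cor:bothbranchings}; (ii) a neighborhood-class translation on $V(G')\setminus M$; (iii) Korhonen's treewidth approximation; and (iv) a DP on the tree decomposition. Steps (ii)--(iv) are oblivious to both $t$ and the representation, so only step (i) needs attention, since it nominally invokes a constant-factor approximation of the maximum clique. My plan to bypass this is to guess $t$ incrementally by trying $t = 2, 3, \ldots$ and testing $K_{t,t}$-freeness in time $n^{\O(t)}$; for the correct $t$, the bound $\omega(G)\le 2t-1$ (which follows since any $K_{2t}$ contains $K_{t,t}$ as a subgraph) makes the branching implementable by enumeration over subsets of size at most $2t-1$, and smaller wrong guesses are cheaply refuted by the $K_{t,t}$ witness that enumeration finds.

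The main obstacle I foresee is precisely controlling the guess-and-check overhead so that it stays absorbed in $2^{c_t k^{2/3}\log k}n^{\O(1)}$, and reconciling it with the hereditary-class max-clique-approximability hypothesis of Corollary \ref{cor:bothbranchings}. Beyond this, the proof is a direct substitution into Theorem \ref{th:lineigh-subexp} using the separator and neighborhood-complexity bounds for $K_{t,t}$-free string graphs, with no deeper combinatorial ingredient required.
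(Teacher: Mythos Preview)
Your main argument—instantiating Theorem~\ref{th:lineigh-subexp} with $\delta=1/2$, $\beta=\O(\sqrt{t\log t})$ from Corollary~\ref{cor:ktstr}, and a neighborhood-complexity ratio $c$ from Theorem~\ref{th:polyexp} plus Lemma~\ref{lem:reidl}—is exactly the paper's derivation (stated in one sentence just before Theorem~\ref{th:trhstr}), and your arithmetic $2\delta/(1+\delta)=2/3$ is correct.

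On robustness, the paper does not spell out an argument, so you are filling a gap the paper leaves implicit. Your guess-and-check scheme works, but as you yourself flag, it costs $n^{\O(t)}$ for the $K_{t,t}$-freeness test and for clique enumeration, which clashes with the uniform $n^{\O(1)}$ reading of the statement (and of the summary table). There is a cleaner fix that avoids this: the clique-approximation hypothesis in Corollary~\ref{cor:bothbranchings} is used only through Lemma~\ref{lem:branchingClique}, whose sole output exploited downstream is the bound $\omega(G')\le p$. But the proof of Theorem~\ref{th:lineigh-subexp} never uses $\omega(G')\le p$; it uses only that $M$ is a triangle hitting set of size $\O(pk)$ with $N(v)\setminus M$ independent for every $v\in M$, and these come entirely from Lemma~\ref{lem:branching} applied to a $3$-approximate triangle hitting set $X$ (greedy packing, available on any graph). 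So you may drop Lemma~\ref{lem:branchingClique} altogether, keeping the pipeline oblivious to $t$ and to any representation, with a genuinely uniform $n^{\O(1)}$ factor. This is both simpler than guessing $t$ and resolves the overhead concern you identified.
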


In \autoref{ssec:neighcplx} we provide the following explicit bounds on $c_t$ in $K_{t,t}$-free \dDIR and \CONTACTSEG graphs.

\begin{restatable}{lemma}{firstingredient}
\label{lem:first-ingredient}
For every integer $t>0$, the class of 
$K_{t,t}$-free
\dDIR{} graphs has linear neighborhood complexity with ratio $\O(d t^3 \log t)$.
\end{restatable}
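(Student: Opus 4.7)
The aim is to show, for any $X \subseteq V(G)$ in a $K_{t,t}$-free \dDIR{} graph $G$, that
\[
|\{N(v) \cap X : v \in V(G)\}| = O(dt^3 \log t) \cdot |X|.
\]

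First, I would partition $X$ by slope into $X_1, \ldots, X_d$ and, correspondingly, partition $V(G)$ into $V_1, \ldots, V_d$ where $V_j$ consists of the vertices whose segment has slope $j$. The full trace decomposes disjointly as $N(v) \cap X = \bigsqcup_i (N(v) \cap X_i)$, so it suffices (losing at most a factor $d$) to bound, for each slope $j$, the number of distinct traces in $\{N(v) \cap X : v \in V_j\}$ by $O(t^3 \log t)\,|X|$.

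Second, fix a slope $j$. Each $v \in V_j$ is determined by a line $\ell_v$ of slope $j$ and an interval $s_v \subseteq \ell_v$. For $i = j$, $s_v$ meets $u \in X_j$ iff $u$ lies on $\ell_v$ and overlaps $s_v$ (a collinear interval problem); for $i \neq j$, each line of slope $i$ meets $\ell_v$ in a single point, and $s_v$ meets $u \in X_i$ iff this point lies inside both $u$ and $s_v$. In both regimes, as $\ell_v$ is slid across the plane and the endpoints of $s_v$ are varied, the partial trace $N(v) \cap X_i$ changes only at discrete \emph{critical events} where an endpoint of some segment of $X_i$ sweeps past $\ell_v$ or past an endpoint of $s_v$.

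Third, I would exploit $K_{t,t}$-freeness twice to bound the number of distinct traces that can arise. Once, via \autoref{th:neighbip} (Lee's edge bound), applied to the $K_{t,t}$-free string graph induced by any subset of segments, to argue that the bipartite incidence structure between slope-$j$ lines (parameterising $\ell_v$) and endpoints of $X_i$-segments has only $O(t \log t)\,|X_i|$ "relevant" events on average. And once structurally, to control the multiplicity at each event: if more than $c t^2$ distinct traces within $V_j$ all agreed outside a fixed critical feature, then a pigeonhole on the resulting pattern would extract a forbidden $K_{t,t}$. Charging each new distinct trace on $V_j$ to its unique innovation on some $X_i$ yields a total of $O(t^3 \log t) \cdot \sum_i |X_i| = O(t^3 \log t)\,|X|$ traces for this slope $j$, and summing over the $d$ choices of $j$ gives the claimed ratio.

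The main obstacle is the final charging: naively combining partial traces across slopes multiplicatively gives an $O(|X|^d)$ bound, so one has to argue that each new \emph{full} trace is witnessed by a single innovation on one $X_i$ rather than a product of innovations across all $i$. This is precisely where the two uses of $K_{t,t}$-freeness (edge count plus structural collapse at events) must be synchronised carefully, and getting the right polynomial dependence $t^3 \log t$ (rather than a Sauer--Shelah-type $|X|^t$) hinges on the \dDIR{} structure keeping the local event arrangement effectively one-dimensional.
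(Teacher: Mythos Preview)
Your high-level framing is sound and matches the paper: fix the slope of $v$, bound the number of distinct traces $\{N(v)\cap X : v\in V_j\}$, then pay a factor $d$ by summing over~$j$. The difficulty is exactly where you locate it, in the recombination step, and your proposal does not resolve it.

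The problem is the decomposition $N(v)\cap X=\bigsqcup_i (N(v)\cap X_i)$ by the slope of the $X$-segments. The parameter space for $v\in V_j$ is two- or three-dimensional (position of $\ell_v$ plus two endpoints), so ``critical events'' cut it into an arrangement whose cell count is polynomial in $|X|$, not linear. Your charging claim that ``each new full trace is witnessed by a single innovation on one $X_i$'' is precisely the statement that this arrangement has linearly many cells, which is what needs proving; and the proposed use of \autoref{th:neighbip} on an incidence structure between lines $\ell_v$ and endpoints of $X_i$ is not a string-graph edge count, so Lee's theorem does not apply there.

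The paper avoids this entirely by \emph{not} decomposing $X$ by slope. Instead it splits $V_j$ into two sets according to whether $v$ has a neighbour in $X$ of the \emph{same} slope $j$:
\begin{itemize}
  \item If yes, one marks the endpoints of same-slope $X$-segments together with their intersection points with the other $X$-segments (there are $O(t\log t)\,|X|$ such points by Lee's edge bound applied to $G[X]$), and observes that each such $v$ either lies strictly inside one sub-interval (at most one $v$ per sub-interval, since the trace is then determined) or contains a marked point (at most $2t$ segments per point, else a $K_{2t}$). This gives $O(t^2\log t)\,|X|$.
  \item If no, then $N(v)\cap X$ consists only of segments of other slopes. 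A horizontal sweep by the line $\ell_v$ partitions the plane into zones between consecutive events (endpoints and mutual crossings of the non-$j$-slope $X$-segments, again $O(t\log t)\,|X|$ of them). Within a zone the non-$j$ segments have a fixed left-to-right order, so the trace of $v$ is an \emph{interval} in that order. One then shows that every $v$ assigned to a zone must touch one of at most two specific ``anchor'' segments determined by the event bounding the zone, and that at most $O(t^2)$ intervals through a fixed anchor can coexist without a $K_{t,t}$. This gives $O(t^3\log t)\,|X|$.
\end{itemize}
The key insight you are missing is the one-dimensional interval structure in the second case: once $v$ has no same-slope neighbour, the full trace (over all other slopes simultaneously) becomes an interval in a single linear order, so there is no product over slopes to control. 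Lee's bound enters only to count intersection points inside $G[X]$, not as an incidence bound between candidate $v$-lines and $X$.
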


\begin{restatable}{lemma}{firstingredientcontact}
\label{lem:first-ingredient-contact}
The class of \CONTACTSEG graphs has linear neighborhood complexity.
\end{restatable}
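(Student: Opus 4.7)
\textbf{Proof plan for \autoref{lem:first-ingredient-contact}.}
The plan is to fix a contact-segment representation $\mathcal S$ of $G$ and a set $X \subseteq V(G)$ with $k = |X|$, and to bound the number of distinct traces $N(v)\cap X$, $v\in V(G)$, by $O(k)$. The central idea is to reduce the trace to a combinatorial profile in an auxiliary planar arrangement and then to use the contact constraint to collapse the naive count.

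First I would set up the arrangement $\mathcal A_X$ formed by the $X$-segments in $\mathcal S$, whose vertices are the set $Y$ of endpoints of $X$-segments (so $|Y|\le 2k$) and whose edges are maximal sub-arcs of $X$-segments between consecutive $Y$-points. The key structural observation, which uses the contact hypothesis, is that no $Y$-point can lie in the interior of two distinct $X$-segments: otherwise those two interiors would meet, violating the contact property. This bounds the total number of edges of $\mathcal A_X$ by $O(k)$, so Euler's formula gives $O(k)$ faces as well.

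Next I would characterise $N(v)\cap X$ from the interaction of $s_v$ with $\mathcal A_X$. By the contact property, any touching of $s_v$ with an $X$-segment $s_u$ takes place either at an endpoint of $s_v$ (landing on a face, the interior of an arc, or a vertex of $\mathcal A_X$), or at a $Y$-point in the interior of $s_v$; moreover, in the latter case no $X$-segment can have that $Y$-point in its interior, so the neighbours contributed are exactly the clique $X_y$ of $X$-segments ending at $y$. Hence $N(v)\cap X$ is fully determined by the \emph{profile} of $v$, namely the combinatorial position of each of the two endpoints of $s_v$ w.r.t.\ $\mathcal A_X$, together with the set $I(v) = Y\cap \mathrm{int}(s_v)$.

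The final, main step is to show that only $O(k)$ distinct profiles can occur. The endpoint contributions alone range over $O(k)$ \emph{elementary} sets ($\emptyset$, a single $X_y$, or a single $\{u\}$), but combining two endpoints naively allows $O(k^2)$ pairs, so one must save a factor of~$k$. The critical point is that the segments $s_v$ for $v\in V(G)$ themselves form a contact family: on any line $\ell$ carrying $Y$-points, the segments $s_v\subseteq \ell$ are pairwise interior-disjoint, and so their endpoints lie among $Y\cap \ell$ plus at most $O(1)$ per segment from interior contacts with $X$-segments not on $\ell$. Charging each profile on $\ell$ to an edge of the arrangement $\mathcal A_X$ restricted to $\ell$ and summing $\sum_\ell |Y\cap\ell| = O(k)$ gives the bound for profiles with $I(v)\neq\emptyset$; the profiles with $I(v)=\emptyset$ are bounded similarly by charging each endpoint to a feature of $\mathcal A_X$ and again invoking the pairwise interior-disjointness of $V(G)$-segments to prevent over-counting pairs. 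The hardest part is precisely this last charging: the naive $O(k^2)$ bound for pairs of endpoint types only collapses to $O(k)$ once the global planar structure imposed by the contact family on $V(G)$ — not just on $X$ — is exploited.
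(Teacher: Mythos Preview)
Your setup and case split are essentially the paper's: the arrangement $\mathcal A_X$ with $O(k)$ vertices and edges corresponds to the paper's graph on the endpoints of $M$-segments, and your dichotomy $I(v)\neq\emptyset$ versus $I(v)=\emptyset$ is exactly the paper's partition of $A=V(G')\setminus M$ into $A_1$ (segments whose interior contains an endpoint of an $M$-segment) and $A_2$. For $I(v)\neq\emptyset$ you are overcomplicating things with the per-line argument: by the contact property, each $y\in Y$ lies in the interior of \emph{at most one} segment of the whole representation, so $|A_1|\le |Y|\le 2k$ directly --- no summation over lines is needed.

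The gap is in the $I(v)=\emptyset$ case. You correctly identify this as the crux and note that the naive count of pairs of endpoint-features is $O(k^2)$, but your proposed fix --- ``pairwise interior-disjointness of $V(G)$-segments'' plus a per-line charging --- does not close the argument as written: the two endpoints of a segment $s_v$ with $I(v)=\emptyset$ typically land on \emph{different} $X$-segments (on different lines), so a line-by-line accounting does not control the number of such pairs. What is missing is the global planar step the paper uses: after subdividing each edge of $\mathcal A_X$ once (obtaining a planar graph $H$ on $O(k)$ vertices), every $v\in A_2$ can be drawn as a curve between the two features touched by its endpoints, and because the whole family $\{s_v\}_{v\in A_2}$ together with the $X$-segments is a contact configuration, these curves are pairwise non-crossing. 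Hence the $A_2$-segments are (a subset of the) edges of a planar graph on $O(k)$ vertices, and Euler's formula gives $|A_2|=O(k)$. This is precisely the step that turns your ``interior-disjointness'' observation into an $O(k)$ bound, and it is what your plan should make explicit instead of the per-line charging.
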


\begin{lemma}\label{lem:kt-vs-ktt}
    A \CONTACTSEG graph with clique number $\omega$ does not contain a $K_{12\omega,12\omega}$ as a subgraph.
\end{lemma}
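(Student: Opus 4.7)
The plan is to establish the stronger statement that a \CONTACTSEG graph $G$ with clique number $\omega$ contains no $K_{4\omega+1,4\omega+1}$ subgraph, which clearly implies the lemma. Suppose for contradiction that $G$ contains a $K_{a,a}$ subgraph with parts $A$ and $B$ of size $a$, and fix a \CONTACTSEG representation of $G$. Every edge between $A$ and $B$ corresponds to an intersection point of a segment of $A$ with a segment of $B$; let $P$ be the set of such points and, for $p\in P$, let $a_p$ (resp.\ $b_p$) denote the number of segments of $A$ (resp.\ $B$) passing through $p$.

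The argument rests on two local observations. First, all $a_p+b_p$ segments meeting at $p$ pairwise intersect at $p$ and therefore induce a clique in $G$, giving $a_p+b_p\le \omega$ (this is in the same spirit as \autoref{lm:cliquePoint}). Second, because we are dealing with a contact representation, at most one segment can have $p$ in its interior, for otherwise two segments would properly cross at $p$; hence at least $a_p+b_p-1$ of the segments through $p$ have $p$ as an endpoint.

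I would then finish by a double counting. Each $K_{a,a}$-edge is witnessed by a unique point of $P$, so $\sum_{p\in P} a_p b_p = a^2$. Summing the second observation over $P$ and comparing with the $2(|A|+|B|)=4a$ endpoints of segments in $A\cup B$ yields $\sum_{p\in P}(a_p+b_p-1)\le 4a$; combining this with $|P|\le \tfrac12\sum_{p\in P}(a_p+b_p)$ (valid since $a_p+b_p\ge 2$) gives $\sum_{p\in P}(a_p+b_p)\le 8a$. Finally, the clique bound implies $a_pb_p\le \omega\cdot \min(a_p,b_p)\le \omega(a_p+b_p)/2$, so
\[
a^2 \;=\; \sum_{p\in P} a_p b_p \;\le\; \frac{\omega}{2}\sum_{p\in P}(a_p+b_p) \;\le\; 4\omega a,
\]
whence $a\le 4\omega$, contradicting the existence of a $K_{4\omega+1,4\omega+1}$ subgraph.

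The key geometric input, and the only point requiring care, is the second observation: isolating the constraint that a \CONTACTSEG representation allows at most one segment to cross any given point in its interior. Once this bound on ``interior-incident'' segments is in hand, combining it with the clique bound at $p$ via two elementary inequalities suffices; no further geometric machinery is needed.
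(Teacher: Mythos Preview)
Your proof is correct and in fact yields the stronger bound $a\le 4\omega$, so no $K_{4\omega+1,4\omega+1}$ subgraph exists. The route is genuinely different from the paper's. The paper argues structurally: it invokes the (cited) fact that triangle-free \CONTACTSEG graphs are planar to forbid an induced $K_{3,3}$, deduces that one side of the biclique has independence number at most~$2$, then picks two non-adjacent segments $b_1,b_2$ on the other side, isolates a large subset $A'\subseteq A$ that touches $b_1$ and $b_2$ from fixed sides, and uses the linear order of $A'$ along $b_1$ together with the clique bound to find three pairwise non-adjacent segments in $A'$, a contradiction. Your argument avoids all of this geometry and the external planarity lemma: it uses only the two local facts that (i) all segments through a point form a clique (so $a_p+b_p\le\omega$) and (ii) at most one segment passes through a point in its interior, and then finishes by an elementary double count on endpoints. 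The payoff is a shorter, self-contained proof with a better constant; the paper's approach, on the other hand, makes the geometric obstruction more visible and could be adapted to settings where an ordering along a host segment is available but a clean endpoint count is not.
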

\begin{proof}
    Towards a contradiction, assume there is a \CONTACTSEG graph $H$ with clique number $\omega$ containing a $K_{12\omega,12\omega}$-subgraph, with vertex sets $A=\{a_1,\ldots,a_{12\omega}\}$ and $B=\{b_1,\ldots,b_{12\omega}\}$. We denote by $\alpha_A$ and $\alpha_B$ the sizes of a maximum independent set in $H[A]$ and $H[B]$, respectively. As these graphs are not complete, we have $\alpha_A\ge 2$ and $\alpha_B\ge 2$.
    
    Every triangle-free \CONTACTSEG graph (as it only contains trivial contact points) is planar (see Lemma 2 in~\cite{deniz2018contact}),   
    so in particular $K_{3,3}$ is not a \CONTACTSEG graph.
    Therefore there is no induced $K_{3,3}$ in $H$, which implies that $\alpha_A\le 2$ or $\alpha_B\le 2$.
    In the following we assume that $\alpha_A = 2$ and $\alpha_B\ge 2$. Let $b_1,b_2\in B$ be non-adjacent.

    We consider a maximal subset $A'\subseteq A$ whose segments touch $b_1$ on the same side, and touch $b_2$ on the same side. This set has size at least $2\omega$, as there are at most $4\omega$ segments in $A$ touching $b_1$ or $b_2$ on their endpoints, and as the other segments can be partitioned into four sets according to the sides of $b_1$ and $b_2$ they reach these segments. 

    As $H$ is a contact-segment graph, the segments in $A'$ can be ordered $a'_1,\ldots,a'_{2\omega},\ldots$ so that $a'_i$ and $a'_j$, with $i<j$ do not intersect if $j-i\ge \omega-1$ (otherwise, all the segments in-between and $b_1$ would intersect and form a $(\omega+1)$-clique). Thus $a'_1$, $a'_{\omega}$, and $a'_{2\omega}$ form an independent set, a contradiction to $\alpha_A=2$.
    \end{proof}

This leads to the following result.

\begin{restatable}{corollary}{apxcorth}
\label{cor:th}
    There is an algorithm that solves \TH in time
    \begin{itemize}
        \item $2^{\O(k^{2/3}(\log k) \sqrt d t^2 \log t )} n^{\O(1)}$ in $K_{t,t}$-free \dDIR{} graphs,
       \item $2^{\O(k^{3/4} \log k)} n^{\O(1)}$ in \CONTACTSEG graphs,
    \end{itemize} 
    even when no representation is given.    
\end{restatable}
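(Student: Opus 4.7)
The plan is to derive the first bound by a direct invocation of \autoref{th:lineigh-subexp}, and the second one by re-running its proof while inserting \autoref{lem:kt-vs-ktt} at the right moment to manufacture a $K_{t,t}$-free condition after reducing the clique number.

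For the first bullet, $K_{t,t}$-free \dDIR{} graphs form a hereditary subclass of $K_{t,t}$-free string graphs. \autoref{cor:ktstr} therefore supplies strongly sublinear separators with $\delta = 1/2$ and $\beta = \O(\sqrt{t\log t})$, while \autoref{lem:first-ingredient} supplies linear neighborhood complexity with ratio $c=\O(d t^{3}\log t)$. Plugging these values into the exponent $\O(\beta\, c^{\delta}\, k^{2\delta/(1+\delta)}\log k)$ from \autoref{th:lineigh-subexp} yields
\[
\O\!\left(\sqrt{t\log t}\cdot\sqrt{d\, t^{3}\log t}\cdot k^{2/3}\log k\right)
\;=\;\O\!\left(\sqrt{d}\, t^{2}\log t\cdot k^{2/3}\log k\right),
\]
which is exactly the first claimed running time.

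For \CONTACTSEG graphs, the class has unbounded clique number and hence does not admit strongly sublinear separators, so \autoref{th:lineigh-subexp} cannot be invoked as a black box. Instead, I would re-run its two-step strategy. First, apply \autoref{cor:bothbranchings} with $p=k^{1/4}$ (the polynomial-time maximum clique algorithm of \autoref{lm:segmentCliques} supplies the required approximation) to produce $2^{\O(k^{3/4}\log k)}$ instances $(G',M,k')$ with $\omega(G')\leq k^{1/4}$ and $|M|=\O(k^{5/4})$. Then, using the constant neighborhood complexity ratio of \CONTACTSEG (\autoref{lem:first-ingredient-contact}), translate each instance as in the proof of \autoref{th:lineigh-subexp} to an equivalent weighted instance $(G'',w'',k')$ of \wtrh{} with $|G''| = \O(|M|)=\O(k^{5/4})$, using \autoref{cl:fusion} for the correctness of this reduction.

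The key observation is that $G''$, being an induced subgraph of $G'$, still satisfies $\omega(G'')\leq k^{1/4}$, and hence by \autoref{lem:kt-vs-ktt} contains no $K_{12 k^{1/4},12 k^{1/4}}$ as a subgraph. Consequently \autoref{cor:ktstr} gives
\[
\tw(G'') \;=\; \O\!\left(\sqrt{|G''|\cdot k^{1/4}\log k^{1/4}}\right) \;=\; \O\!\left(k^{3/4}\sqrt{\log k}\right).
\]
Computing a tree decomposition of this width via Korhonen's single-exponential approximation and running the standard treewidth-based dynamic program for \wtrh{} (as in \autoref{th:lineigh-subexp}) solves each branch in time $2^{\O(k^{3/4}\log k)}n^{\O(1)}$, for a total running time of $2^{\O(k^{3/4}\log k)}n^{\O(1)}$ after summing over the $2^{\O(k^{3/4}\log k)}$ branches. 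The main subtlety is the balance $p=k^{1/4}$, which is chosen so that the branching cost $2^{\O((k/p)\log k)}$ and the treewidth-based cost $2^{\O(p\sqrt{k\log p})}$ both equal $2^{\O(k^{3/4}\log k)}$.
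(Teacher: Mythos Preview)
Your proposal is correct and follows essentially the same approach as the paper's own proof: the first bullet is a direct plug-in to \autoref{th:lineigh-subexp} with the same parameters, and the second bullet re-runs that proof with $p=k^{1/4}$, invoking \autoref{lem:kt-vs-ktt} after the clique reduction to obtain the $K_{t,t}$-freeness needed for \autoref{cor:ktstr}. The only differences are cosmetic (you spell out the use of \autoref{lm:segmentCliques} and \autoref{cl:fusion} explicitly, and fix $\alpha=1/4$ from the start rather than optimizing at the end).
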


\begin{proof}
    For the first result, \autoref{cor:ktstr}, together with \autoref{th:lineigh-subexp} with $\delta=1/2$, $\beta = \sqrt{t\log t}$, $c=\O(d t^3\log t)$ (\autoref{lem:first-ingredient}), implies the claimed running time.

    For the second result, given an instance $(G,k)$ of \TH in \CONTACTSEG graphs, we consider $p=k^{\alpha}$ (with $\alpha$ to be chosen), 
    and proceed as in the proof of \autoref{th:lineigh-subexp} by first
    generating the $2^{\O(k^{1-\alpha}log(k))}$
    instances $(G'', k, w)$ of \wtrh.
    As \CONTACTSEG have linear neighborhood complexity, we have $|G''|=O(|M|)$.
    Moreover, as $G'$ is the result of applying \autoref{cor:bothbranchings}, we have $\omega(G') \le p$, and \autoref{lem:kt-vs-ktt} implies that $G'$ is $K_{t,t}$-free for $t=\O(p)$. By \autoref{cor:ktstr}, this implies that $\tw(G'')=\O(\sqrt{|M|t\log(t)})=\O(p\sqrt{k\log(k))}$.
    By choosing $\alpha=1/4$, and solving the instances in the same way than in \autoref{th:lineigh-subexp}, we obtain the wanted result.
\end{proof}

\subsection{Neighborhood complexity}
\label{ssec:neighcplx}

In this section we prove the aforementioned result about the
neighborhood complexity of $K_{t,t}$-free \dDIR{} graphs.

\firstingredient*

\begin{proof}
	Let $G$ be a $K_{t,t}$-free \dDIR{} graph, and fix $M\subseteq V(G)$. Observe that 
 $|\{N(v)\cap M: v \in V(G)\}| \le |M|+|\{N(v)\cap M: v \in V(G) \setminus M\}|$, and thus we only have to prove that
 $|\{N(v)\cap M: v \in V(G) \setminus M\}|=\O(|M| \cdot d t^3 \log t)$.
 
     Whenever two vertices of $G-M$ have the same neighborhood in $M$, we delete one of them. For simplicity we also delete all vertices with no neighbors in $M$ and call $G'$ the obtained graph. We now focus on bounding the order of $G'$; observe that this would straightforwardly translate to a  bound on the number of neighborhoods in $M$ of vertices of $G-M$.
	
	For a \dDIR{} representation of $G'$, we consider that one of its direction is the horizontal. So in the following we say that a vertex of $G'$ is \emph{horizontal} if its segment in the representation is. For convenience the vertices with $0$-length segments are considered as horizontal vertices.

    Let $M_H$ denote the horizontal vertices of $M$ and $M_C=M\setminus M_H$. Moreover we set $A=V(G')\setminus M$ and define in the same way $A_H$. As the horizontal direction was chosen arbitrarily, a bound of the size of $A_H$ multiplied by $d$ would be a bound to the size of $A$.
    Towards this goal we further partition $A_H$ into two subsets: $A_{H_1}$ and $A_{H_2}$ with $A_{H_1}=\{v\in A_H : N(v)\cap M_H\neq \emptyset\}$ and $A_{H_2}=A_H\setminus A_{H_1}$.

        \begin{claim}\label{cl:ah1}
          $|A_{H_1}| = \O(|M| \cdot t^2  \log t)$.
        \end{claim}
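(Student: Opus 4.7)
The plan is to bound $|A_{H_1}|$ by charging each of its vertices to a horizontal line of the plane. Concretely, for every $v\in A_{H_1}$ the horizontal line $\ell_v$ supporting the segment of $v$ contains, by definition of $A_{H_1}$, at least one segment of $M_H$. Hence $\ell_v$ belongs to the set $\mathcal L$ of horizontal lines supporting at least one segment of $M_H$, and $|\mathcal L|\le |M_H|\le |M|$. Writing $A^\ell=\{v\in A_{H_1}: \ell_v=\ell\}$, it is enough to establish $\sum_{\ell\in\mathcal L}|A^\ell|=\O(|M|\, t^2\log t)$.

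First I would analyze a fixed line $\ell\in\mathcal L$. On $\ell$ the segments of $M$ produce a collection of \emph{features}: the intervals that are segments of $M_H$ lying on $\ell$, together with the points where segments of $M_C$ cross $\ell$. For a horizontal segment $v\in A^\ell$, given as an interval $[a_v,b_v]$ on $\ell$, the neighborhood $N(v)\cap M$ is entirely determined by which features are met by $[a_v,b_v]$. Since the vertices of $A^\ell$ have pairwise distinct neighborhoods in $M$ (by the earlier pruning used to define $G'$), the problem reduces to counting the number of distinct overlap patterns that an interval on $\ell$ can have with the features on $\ell$.

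Next I would use the $K_{t,t}$-freeness of $G'$ through a Kővári--Sós--Turán-style argument to show that the number of such distinct patterns is $\O(f(\ell)\cdot t\log t)$, where $f(\ell)$ denotes the number of features on $\ell$. The rough idea is that if too many segments of $A^\ell$ realized pairwise distinct overlap patterns, then a suitable grouping would produce $t$ segments of $A^\ell$ all simultaneously intersecting $t$ common features in the same way, yielding a $K_{t,t}$ subgraph of $G'$. The main obstacle will be to make this encoding precise, since the two kinds of features (intervals and points) behave asymmetrically and an interval feature may be only partially overlapped.

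Finally I would bound $\sum_{\ell\in\mathcal L}f(\ell)$. The horizontal contribution is easy, as each segment of $M_H$ is a feature on exactly one line and thus contributes $\O(|M_H|)$ in total. The contribution of crossings of segments of $M_C$ with lines of $\mathcal L$ is more delicate, since one segment of $M_C$ may cross many lines of $\mathcal L$. Here I would invoke $K_{t,t}$-freeness a second time to argue that the crossings that actually separate distinct neighborhood patterns contribute at most $\O(t\log t)$ per segment of $M_C$. Combining, $\sum_{\ell\in\mathcal L}f(\ell)=\O(|M|\, t\log t)$, and substituting into the per-line bound yields $|A_{H_1}|=\O(|M|\, t^2\log t)$, as claimed.
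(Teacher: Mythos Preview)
Your plan has a real obstacle in the third step, and the paper avoids it by a cleaner charging scheme.

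The problematic point is bounding $\sum_{\ell\in\mathcal L} f(\ell)$. As you note, a single segment of $M_C$ may cross arbitrarily many lines of $\mathcal L$: nothing in the $K_{t,t}$-freeness of $G$ prevents one oblique segment from meeting $\Theta(|M_H|)$ horizontal lines supporting segments of $M_H$, so $\sum_\ell f(\ell)$ can be $\Theta(|M|^2)$ in the worst case. Your proposed fix---counting only crossings that ``actually separate distinct neighbourhood patterns''---does not help, because \emph{every} such crossing separates the pattern ``contains this $M_C$-vertex'' from ``does not''; you have not indicated what $K_{t,t}$ you would exhibit from too many of them, and I do not see one. Your second step (a K\H{o}v\'ari--S\'os--Tur\'an argument for a per-line bound $|A^\ell|=\O(f(\ell)\,t\log t)$) is likewise left as a gesture and would need substantial work, since the features mix intervals and points.

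The paper's proof sidesteps both issues by not working line by line. It fixes a finite set $P_H$ of special \emph{points}: the endpoints of segments of $M_H$, together with the intersection points of $M_H$-segments with $M_C$-segments. The latter are edges of $G'[M]$, hence their number is $\O(|M|\,t\log t)$ by Lee's edge bound for $K_{t,t}$-free string graphs; this is where the factor $t\log t$ enters. Every $v\in A_{H_1}$ overlaps some $M_H$-segment by definition, so either $v$ lies strictly inside one of the subsections of $M_H$-segments cut out by $P_H$ (at most one such $v$ per subsection, by distinctness of neighbourhoods in $M$), or $v$ contains a point of $P_H$. In the latter case a pigeonhole argument gives at most $2t$ vertices per special point, since $2t$ horizontal segments through a common point form a $K_{2t}\supseteq K_{t,t}$. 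This yields $|A_{H_1}|\le (2t+1)|P_H|=\O(|M|\,t^2\log t)$. The key difference from your plan is that only crossings lying \emph{on an $M_H$-segment} are ever counted, and these are precisely edges of $G'[M]$; the uncontrolled line-crossings never appear.
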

	\begin{proof}[Proof of \autoref{cl:ah1}]
		
	We will bound $|A_{H_1}|$ by using special points of the plane contained in the horizontal segments of $M$. Let $P_H$ be the set of the endpoints of the $M_H$ segments and the points of intersection of a $M_H$ segment with a $M_C$ segment.


 We denote by $m_M$ the number of edges of the graph $G'[M]$. The size of $P_H$ is at most $2|M_H| + m_M$ as intersections between segments correspond to edges in $G'[M]$, whose number is bounded by $\O(|M| \cdot t \log t)$ in \autoref{th:neighbip}.


		 There are at most $|P_H|$ segments of $A_{H_1}$ which does not intersect any point of $P_H$: those are the ones strictly included in a subsection of the horizontal segments of $M$ split at the $P_H$ points. For the remaining points, which each intersect at least one point of $P_H$, there cannot be more than $2t|P_H|$ of them as otherwise by the pigeonhole principle there would be a point of $P_{H}$ contained in at least $2t$ segments. But then $K_{2t}$ would be a subgraph of $G$, and so $K_{t,t}$, contradicting our hypothesis. So overall we get $|A_{H_1}| \leq (2t+1)|P_H|\leq (2t+1)(2|M_H| + m_M)=\O( t^2  (\log t) |M|)$.\cqed
	\end{proof}
	We now want to bound the size of $A_{H_2}$, the horizontal vertices of $G-M$ which do not intersect a horizontal segment of $M$.

     \begin{claim}\label{cl:ah2}
         $|A_{H_2}| = \O(t^3 (\log t) |M|)$.
    \end{claim}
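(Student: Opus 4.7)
\begin{sketchproof}
The plan is to adapt the strategy from the previous claim on $|A_{H_1}|$, but using special points related to the non-horizontal segments of $M$ instead of the horizontal ones. First I would define $P_C$ to be the set of endpoints of segments in $M_C$ together with all intersection points between pairs of segments in $M_C$. Using \autoref{th:neighbip} applied to the $K_{t,t}$-free graph $G[M]$, we have $m_M = \O(|M|\cdot t\log t)$, hence $|P_C| \le 2|M_C| + m_M = \O(|M|\cdot t\log t)$.

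Next, for every $h \in A_{H_2}$, I would associate a canonical point $p_h$, namely the intersection of $h$ with the segment of $M_C$ whose crossing with $h$ has smallest $x$-coordinate. Such a segment exists because $h \in A_{H_2}$ has at least one neighbor in $M$ and, by definition of $A_{H_2}$, all its $M$-neighbors lie in $M_C$. I would then split $A_{H_2}$ into two groups, according to whether $p_h \in P_C$ or $p_h$ lies in the relative interior of some $s \in M_C$, away from the crossings with other segments of $M_C$.

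The first group is handled exactly as in the proof for $A_{H_1}$: since $G$ is $K_{t,t}$-free, at most $2t-1$ horizontal vertices of $G$ can simultaneously pass through a given point of the plane (otherwise their pairwise intersections would produce a $K_{2t}$, hence a $K_{t,t}$, in $G$). Thus this group contributes $\O(t\cdot |P_C|) = \O(|M|\cdot t^2\log t)$.

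The main obstacle is the second group, where $p_h$ is interior to a segment $s \in M_C$ and does not meet another segment of $M_C$ there. The plan is to further partition these $h$ according to the identity of $s = \mathrm{left}(h)$, and along each such $s$ consider the \emph{slots} of $s$, that is, the open subsegments of $s$ delimited by points of $P_C$. Inside a single slot of $s$, the vertical order of the crossings with other $M_C$-segments along a nearby horizontal line does not change; consequently, the possible neighborhoods $N(h)\cap M_C$ realized by horizontal segments $h$ whose leftmost crossing lies in that slot form a bipartite subgraph of the $K_{t,t}$-free graph $G$ with a well-controlled interval structure. Using this, I would argue that only $\O(t)$ distinct neighborhoods per slot can be realized without producing a forbidden $K_{t,t}$. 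Summing over the $\O(|P_C|)=\O(|M|\cdot t\log t)$ slots then yields the extra factor $t$ compared to the first group and gives the announced bound $|A_{H_2}| = \O(|M|\cdot t^3\log t)$.
\end{sketchproof}
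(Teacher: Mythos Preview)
Your setup with the point set $P_C$ and the treatment of the first group are fine and match the paper. The genuine gap is in the second group. Your slots are subsegments of a fixed non-horizontal segment $s\in M_C$ delimited by the points of $P_C$ \emph{lying on $s$}. But points of $P_C$ that do not lie on $s$ --- endpoints of other segments of $M_C$, or crossings between two other segments of $M_C$ --- can have their $y$-coordinate inside the $y$-range of such a slot. Hence, as $p_h$ varies within one slot (and with it the $y$-coordinate of the horizontal segment $h$), the set of $M_C$-segments hit by the corresponding horizontal line, and their left-to-right order, can change. So your claim that ``the vertical order of the crossings with other $M_C$-segments along a nearby horizontal line does not change'' is false in general, and the interval structure you rely on is not available. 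Without it, the assertion of $\O(t)$ distinct neighborhoods per slot is unsupported (and, incidentally, your final arithmetic does not match: $\O(t)$ per slot times $\O(|P_C|)$ slots gives $\O(|M|t^2\log t)$, not the ``extra factor $t$'' you announce).

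The paper repairs exactly this by decomposing according to \emph{horizontal strips} rather than slots along a single $s$: after a perturbation ensuring distinct ordinates for all points of $P_C$, one sorts $P_C$ by $y$-coordinate and places each $h\in A_{H_2}$ in the highest strip between two consecutive horizontal lines $L_i,L_{i+1}$ in which its $M$-neighborhood is still realizable. In such a strip no segment of $M_C$ starts, ends, or crosses another, so the left-to-right order of $M_C$-segments is genuinely constant and neighborhoods are intervals in this order. A short case analysis on the nature of the point $p_i$ then shows that all segments placed in the strip must hit one of at most two fixed segments $s_j$; combining the interval structure with $K_{t,t}$-freeness yields $\O(t^2)$ segments per strip, and summing over the $|P_C|=\O(|M|t\log t)$ strips gives the stated bound. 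If you want to keep your ``leftmost crossing'' viewpoint, you would need to refine the slots by \emph{all} $y$-coordinates in $P_C$, which is exactly the paper's decomposition.
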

	\begin{proof}[Proof of \autoref{cl:ah2}.]

	Let us first remark that we can safely ignore the horizontal vertices of $M$ as they have no neighbors in $A_{H_2}$. To simplify the proof, we add a perturbation to the segments of the representation of $G'$, keeping the property that this is a \dDIR representation of $G'$ but such that an intersection involves at most $2$ different directions, and such that all the endpoints and the intersections between segments of $M_C$ have distinct ordinate. 
 For achieving such perturbation, we extend the length of every segment of the representation of $G'$ by a small amount. 
 Moreover for each line that supports at least one segment, we move slightly the whole line to a near parallel line (ensuring the associated intersection graph remains the same). Once this perturbation done, we define a set of special points in the plane, $P_C$, which are the endpoints of the $M_C$ segments and theirs intersections points with each other. The perturbation ensure that the $P_C$ points have distinct ordinates.
 Let denote $p_1,\dots,p_k$ the points of $P_C$ ordered from top to bottom, and for $1\leq i \leq k$ we denote $L_i$ the horizontal line containing the point $p_i$. We now partition $A_{H_2}$ according to their position relatively to the $(L_i)_i$ lines: for $1\leq i \leq k$ we denote by $Z_i$ the set of segments of $A_{H_2}$ whose neighborhood in $M$ can be achieved by an horizontal segment above $L_{i+1}$ but not by a segment on or above the line $L_i$. (see \autoref{fig:dDIRneigh})
 \begin{figure}
        \centering
       	\includegraphics[]{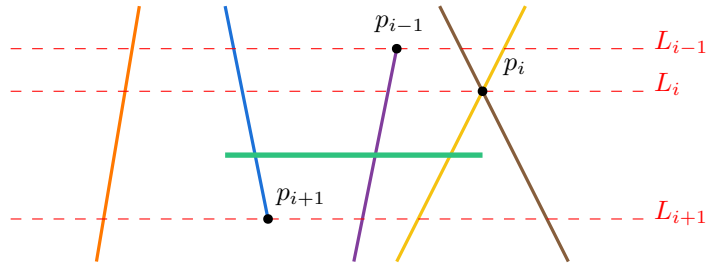}
   	\caption{Example for the construction of the family $(Z_i)_i$ in the proof of \autoref{cl:ah2}. The green segment would be part of the set $Z_i$.}
    \label{fig:dDIRneigh}
   \end{figure}

    Observe that:
        \begin{itemize}
            \item $(Z_i)_i$ is a partition of $A_{H_2}$, as a segment that can be represented on one line $L_i$ will have a direction, either up or down, where it can be moved by a small amount without changing its neighborhood.
            \item In the zone between lines $L_i$ and $L_{i+1}$, there would be no endpoints of segments of $M_C$, and segments of $M_C$ do not cross each other, which means they will have a constant left to right order. We will denote $S_i=\{s_1,\ldots, s_{l}\}$ those segments enumerated from left to right .
            \item For a segment of $Z_i$, its neighborhood is an interval in the sequence $(s_i)_i$.
        \end{itemize}
        
         We now want to prove that there exist two segments in $S_i$ such that every segment in $Z_i$ intersects at least one of them. We differentiate three cases depending of the point $p_i$:
        \begin{itemize}
            \item if $p_i$ is an intersection of segments of different direction, the perturbation ensure that at most two directions are involved. So there is two segments in $S_i$, each containing the point $p_i$ and one from each involved direction, such that a segment in $Z_i$ cross exactly one of those two segments, as otherwise a representation of this segment with this neighborhood in $M$ could be raised above the line $L_i$.
            \item if $p_i$ is a top endpoint of a segment $s$, then each segment $z$ of $Z_i$ would intersect this segment $s$ as otherwise the representation of $z$ below $L_i$ could be raised up above $L_i$.
            \item if $p_i$ is a bottom endpoint of a segment $s$, if this endpoint was at the left of each segment in $S$ then all the segment that could be done below $L_i$ could still be represented with the same neighborhood above $L_i$ and so $Z_i=\emptyset$, the same is true if the endpoint is on the right of each segment of $S_i$. So we can assume that the endpoint is between two segments of $S$. Then a segment of $Z_i$ will necessarily intersect these two segments as otherwise the representation of $z$ below $L_i$ could be raised up above $L_i$.
        \end{itemize}


    Now than we have this property, let $s_j$ be the segment of $S_i$ that is crossed by at least half of $Z_i$.
    There are at most $t^2$ segments that do not cross the segment $s_{j-t}$, nor $s_{j+t}$ as this is the number of intervals in $[j-t,\ldots,j,\ldots,j+t]$ that contain $j$.
    There are at most $t-1$ segments intersecting $s_{j-t}$ (if it exists), as otherwise, there would be a $K_{t,t}$. Similarly there are at most $t-1$ segments intersecting $s_{j+t}$. So $|Z_i|\leq 2(t^2+2t-2)=\O(t^2)$.
    
    As the $(Z_i)_i$ was a partition of $A_{H_2}$, and the number of special points $|P_C|$ is at most $2|M_C|+m_M=\O(|M|t \log t)$, we obtain
    $|A_{H_2}|=\O(|M| t^3  \log t)$.\cqed
    \end{proof}

To conclude, 
\[|\{N(v)\cap M : v \in V(G) \setminus M\}|\le 1 + d(|A_{H_1}| + |A_{H_2}|) = \O(|M| \cdot d  t^3\log t). \]
\end{proof}

\firstingredientcontact*

\begin{proof}
    Let $G$ be a \CONTACTSEG graph, and fix $M\subseteq V(G)$. As in the previous lemma, let us only prove that $|\{N(v)\cap M : v \in V(G) \setminus M\}|=\O(|M|)$.
     Whenever two vertices of $G-M$ have the same neighborhood in $M$, we delete one of them. 
     We also delete all vertices whose neighborhood in $M$ can be obtained with a zero-length segment: this kind of neighborhood either contains at most one vertex of $M$ (there are $|M|+1$ such neighborhoods) or is a set of segments in contact on the same point. This point has to be an endpoint of at least one segment of $M$, so there are at most $2|M|$ such neighborhoods. 
     We call $G'$ the graph obtained after deleting those vertices.
     We split the set $A=V(G')\setminus M$ in two parts such that $A_1$ is the set of segments of $A$ whose interior is in contact with an endpoint of a segment of $M$, and $A_2=A\setminus A_1$. Observe that an endpoint of a segment can be in contact with only one interior of segments without creating a crossing, so we have $|A_1|\leq 2|M|$. Now it remains to bound $A_2$.
     \begin{claim}
        \label{cl:contactA2}
         $|A_2|\leq 24|M|$.
     \end{claim}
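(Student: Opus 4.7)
The plan is to first extract the structural constraint imposed by the \CONTACTSEG property on each $v\in A_2$, then classify the two endpoints of $v$ into a constant number of types, and finally bound each case separately via Euler's formula on a suitable planar auxiliary graph.

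First I would show that the interior of $v$ is disjoint from every $M$-segment: by definition of $A_2$ the interior of $v$ does not meet any endpoint of an $M$-segment, and any other intersection with an $M$-segment $s$ would be a point in the interior of both $v$ and $s$, contradicting \CONTACTSEG. Consequently every $M$-neighbor of $v$ meets $v$ at one of its two endpoints $p_1,p_2$, so $N(v)\cap M$ is the union of the sets of $M$-segments through $p_1$ and through $p_2$. Moreover, since $N(v)\cap M$ is not realizable by a zero-length segment, it cannot be contained in the set of $M$-segments through a single point, which forces each of $p_1,p_2$ to be incident to at least one $M$-segment.

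Next I classify each endpoint $p$ of $v$ as \emph{type V} when $p$ coincides with an endpoint of some $M$-segment, and \emph{type I} otherwise (so $p$ lies strictly in the interior of some $M$-segment and is not an endpoint of any). This gives three cases $(V,V)$, $(V,I)$, $(I,I)$, and for each case I build an auxiliary graph in which every relevant $v$ contributes exactly one edge: a graph $H_{II}$ on vertex set $M$ with an edge between the two distinct $M$-segments containing the endpoints of $v$; a graph $H_{VV}$ on the set $P$ of endpoints of $M$-segments (so $|P|\le 2|M|$) with an edge between the two endpoints of $v$; and a bipartite graph $H_{VI}$ on $P\cup M$ with an edge $(p_1,s_2)$ per $v$, where $s_2$ is the $M$-segment containing the $I$-endpoint of $v$. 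Simplicity of these graphs follows from the distinct-neighborhood property combined with the exclusion of zero-length realizable neighborhoods, and planarity follows because the $A_2$-segments are pairwise non-crossing and can be routed along the underlying $M$-segments. Applying the bounds $|E|\le 3|V|-6$ for simple planar graphs and $|E|\le 2|V|-4$ in the bipartite case yields at most $3|M|$, $6|M|$, and $6|M|$ vertices in the three cases, summing to $|A_2|\le 15|M|\le 24|M|$.

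The main technical obstacle is the planarity of $H_{VI}$: the natural routing of each edge $(p_1,s_2)$ follows the segment $v$ until its touchpoint on $s_2$ and must then be continued to a chosen representative of $s_2$, which can produce overlaps along $s_2$ when several edges terminate on the same $M$-segment. I would resolve this by viewing $H_{VI}$ as obtained from the planar arrangement formed by the $M$-segments together with the Case $(V,I)$ segments of $A_2$ by contracting each $M$-segment to a single vertex; contracting edges of a planar graph preserves planarity, and the distinct-neighborhood property together with the zero-length exclusion ensures that no multi-edges or loops appear in the result, delivering the required simple planar bipartite graph.
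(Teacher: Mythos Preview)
Your decomposition into endpoint types is reasonable and the $(V,V)$ and $(I,I)$ cases can be made to work, but the $(V,I)$ case has a genuine gap. You define $H_{VI}$ as bipartite on $P\cup M$ and propose to obtain it by contracting each $M$-segment of the arrangement to a single vertex. The problem is that every point of $P$ lies on some $M$-segment, so contracting the $M$-segments absorbs the $P$-side entirely; what results is a graph on the connected components of $\bigcup_{s\in M} s$, not on $P\cup M$. In that contracted graph multi-edges do occur: take two $(V,I)$-segments $v,v'$ whose $V$-endpoints are the two distinct endpoints of one $s_1\in M$ and whose $I$-endpoints both lie on the same $s_2$, with an extra $M$-segment through one endpoint of $s_1$ but not the other. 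Then $N(v)\cap M\neq N(v')\cap M$, yet after contraction both become parallel edges between the component of $s_1$ and that of $s_2$. So the distinct-neighbourhood argument does not prevent multi-edges here, and neither the bipartite nor the simple Euler bound applies to the graph your contraction actually produces.

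The paper avoids this by refining the $M$-side of the auxiliary graph: it places a vertex at every $M$-endpoint, draws a path along each $M$-segment through the $M$-endpoints lying on it (planar on $\le 2|M|$ vertices, hence $\le 6|M|$ edges), and then subdivides each such edge once, so that every maximal section of an $M$-segment between consecutive $M$-endpoints gets its own vertex. A type-$I$ endpoint of some $v\in A_2$ then determines a unique section-vertex, and since sections carry no $M$-endpoints in their interior the obvious drawing of the edge $e(v)$ is planar and distinct $v$'s give distinct edges. This yields a single simple planar graph on $\le 8|M|$ vertices with one edge per element of $A_2$, whence $|A_2|\le 24|M|$. Replacing whole $M$-segments by these sections on the second side of your $H_{VI}$ would close the gap.
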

     \begin{proof}
        Let us define an embedding in the plane of a planar graph $H$ such that $|H|\leq 8|M|$ and $|E(H)|\geq |A_2|$. The embedding is defined from a \CONTACTSEG representation of $G' - A_2$, as depicted in \autoref{fig:contact-neigh}.
        \begin{figure}
            \centering
            \includegraphics[scale=.7]{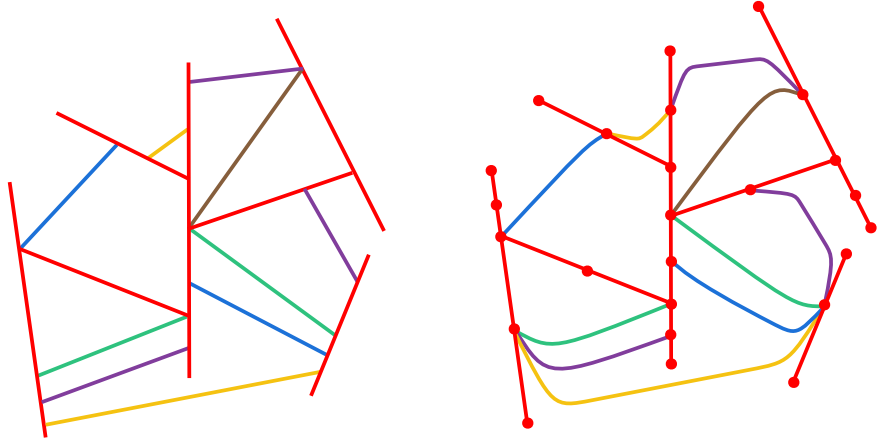}
            \caption{Example of the construction used in the proof of \autoref{cl:contactA2}. The \CONTACTSEG representation of $G'$ is on the left, with the segments of $M$ in red. On the right is the the constructed planar graph $H$, with the edges not in red being the representation of the segments of $A_2$.}
            \label{fig:contact-neigh}
        \end{figure}
        Firstly we add one vertex on every position of the endpoints of the segments of $M$, and denote $V_1$ this set of vertices, where $|V_1| \le 2|M|$.
        Then, we add edges between vertices of $V_1$ as follows: for each segment $v$ of $M$, we add a path starting at an endpoint of $v$ and then (following their order along $v$) all the endpoints of segments of $M$ on segment $v$, until reaching the other endpoint of $v$. We draw the edges of this path following the segment representation of $v$.
        We denote this set of edges by $E_1$.
        Observe that at this step the graph is planar, implying that $|E_1|\le 6|M|$ by the Euler's formula. 
        
        The next step of the construction is to subdivide every edge $e \in E_1$ by adding a vertex $v_e$ on the center $e$. We denote $V_2$ this set of vertices. Informally, we associated a vertex of $H$ to every section of segment between endpoints of $M$. Observe that $|V_1|+|V_2| \le 8|M|$.
        
        Let us now associate to each $v \in A_2$ a new edge $e(v)$ in $H$, and also explain how we can draw these new edges without crossings.
        Let $v \in A_2$. 
        Observe that as $v$ is not in contact with $M$ in its interior, its neighborhood in $M$ is entirely decided by its two endpoints, and both of them are part of a segment of $M$ as otherwise the neighborhood in $M$ of $v$ could be achieved with a zero-length segment, and so $v\notin V(G')$.
        Let $s$ be the segment representing $v$ and $\{p_1(s),p_2(s)\}$ be the endpoints of $s$.  Let slightly shortening $s$ to obtain $\tilde{s}$, and for $i \in \{1,2\}$ let $p_i(\tilde{s})$ be the endpoint of $\tilde{s}$ corresponding to $p_i(s)$.
        We now associate a vertex $v_i \in V(H)$ to each $p_i(s)$ and we will define $e(v)= \{v_1,v_2\}$. 
        Moreover, to draw $e(v)$, we will also define three segments $s_1(v),s_2(v)$ and $s_3(v)$, and draw $e(v)$ as $s_1(v) \cup s_2(v) \cup s_3(v)$. Firstly, we define $s_3(v)=\tilde{s}$. Then, let us distinguish two cases:
\begin{enumerate}
            \item If $p_i(s)$ is also an endpoint $p$ of a segment of $M$, then $v_i$ is the vertex of $H$ corresponding to the endpoint of this segment in $M$, and $s_i(v)=\{p,p_i(\tilde{s})\}$ 
            \item Otherwise, $p_i(s)$ is in the interior of exactly one segment $s$ of $M$, and more precisely inside an edge $e \in E_1$.
            Then, we define $v_i=v_e$. Let $p(v_e)$ be the point associated to $v_e$. We define $s_i(v)=\{p(v_e),p_i(\tilde{s})\}$.
\end{enumerate}
This concludes the definition of $H$. 
As required, we get that $H$ is planar, $|H|\leq 8|M|$, $|E(H)|\geq |A_2|$. This last property gives $|E(H)|\leq 3|H|$ by the Euler's formula, and so $|A_2|\leq 24|M|$, which is the wanted result.\cqed
     \end{proof}
     In total,  $|\{N(v)\cap M: v \in V(G) \setminus M\}| \le 1+|M|+|A_1|+|A_2|=\O(|M|)$.
\end{proof}

\section{Discussion}
In this paper we gave subexponential FPT algorithms for cycle-hitting problems in intersection graphs. A general goal is to characterize the geometric graph classes that admit subexponential FPT algorithms for the problems we considered.  
In particular, an interesting open problem is whether \FVS admits a subexponential parameterized algorithm in \DEUXDIR{} graphs.

\bibliography{biblio}

\begin{thebibliography}{10}

\bibitem{Faster2023Shinwoo}
Shinwoo An, Kyungjin Cho, and Eunjin Oh.
\newblock Faster algorithms for cycle hitting problems on disk graphs.
\newblock In {\em Algorithms and Data Structures: 18th International Symposium,
  WADS 2023, Montreal, QC, Canada, July 31 – August 2, 2023, Proceedings},
  page 29–42, Berlin, Heidelberg, 2023. Springer-Verlag.
\newblock \href {https://doi.org/10.1007/978-3-031-38906-1_3}
  {\path{doi:10.1007/978-3-031-38906-1_3}}.

\bibitem{an2021feedback}
Shinwoo An and Eunjin Oh.
\newblock Feedback vertex set on geometric intersection graphs.
\newblock In {\em 32nd International Symposium on Algorithms and Computation
  (ISAAC 2021)}. Schloss Dagstuhl-Leibniz-Zentrum f{\"u}r Informatik, 2021.

\bibitem{baste2019hitting}
Julien Baste, Ignasi Sau, and Dimitrios~M Thilikos.
\newblock Hitting minors on bounded treewidth graphs. iv. an optimal algorithm.
\newblock {\em arXiv preprint arXiv:1907.04442}, 2019.

\bibitem{baste2022contraction}
Julien Baste and Dimitrios~M Thilikos.
\newblock Contraction bidimensionality of geometric intersection graphs.
\newblock {\em Algorithmica}, 84(2):510--531, 2022.

\bibitem{gbgr2023fvspseudo}
Gaétan Berthe, Marin Bougeret, Daniel Gonçalves, and Jean-Florent Raymond.
\newblock Feedback vertex set for pseudo-disk graphs in subexponential
  fpt-time, 2024.
\newblock To appear on Arxiv.

\bibitem{bonamy2018eptas}
Marthe Bonamy, Edouard Bonnet, Nicolas Bousquet, Pierre Charbit, and
  St{\'e}phan Thomass{\'e}.
\newblock Eptas for max clique on disks and unit balls.
\newblock In {\em 2018 IEEE 59th Annual Symposium on Foundations of Computer
  Science (FOCS)}, pages 568--579. IEEE, 2018.

\bibitem{BonnetGM20CliqueDiskLike}
{\'{E}}douard Bonnet, Nicolas Grelier, and Tillmann Miltzow.
\newblock Maximum clique in disk-like intersection graphs.
\newblock In Nitin Saxena and Sunil Simon, editors, {\em 40th {IARCS} Annual
  Conference on Foundations of Software Technology and Theoretical Computer
  Science, {FSTTCS} 2020, December 14-18, 2020, {BITS} Pilani, {K} {K} Birla
  Goa Campus, Goa, India (Virtual Conference)}, volume 182 of {\em LIPIcs},
  pages 17:1--17:18. Schloss Dagstuhl - Leibniz-Zentrum f{\"{u}}r Informatik,
  2020.
\newblock URL: \url{https://doi.org/10.4230/LIPIcs.FSTTCS.2020.17}, \href
  {https://doi.org/10.4230/LIPICS.FSTTCS.2020.17}
  {\path{doi:10.4230/LIPICS.FSTTCS.2020.17}}.

\bibitem{bonnet2019optimality}
{\'E}douard Bonnet and Pawe{\l} Rz{\k{a}}{\.z}ewski.
\newblock Optimality program in segment and string graphs.
\newblock {\em Algorithmica}, 81:3047--3073, 2019.

\bibitem{cai2003eth}
Liming Cai and David Juedes.
\newblock On the existence of subexponential parameterized algorithms.
\newblock {\em Journal of Computer and System Sciences}, 67(4):789--807, 2003.

\bibitem{chuzhoy2021towards}
Julia Chuzhoy and Zihan Tan.
\newblock Towards tight (er) bounds for the excluded grid theorem.
\newblock {\em Journal of Combinatorial Theory, Series B}, 146:219--265, 2021.

\bibitem{Cygan2015Book}
Marek Cygan, Fedor~V. Fomin, Lukasz Kowalik, Daniel Lokshtanov, Daniel Marx,
  Marcin Pilipczuk, Michal Pilipczuk, and Saket Saurabh.
\newblock {\em Parameterized Algorithms}.
\newblock Springer Publishing Company, Incorporated, 1st edition, 2015.

\bibitem{cygan2017hitting}
Marek Cygan, D{\'a}niel Marx, Marcin Pilipczuk, and Micha{\l} Pilipczuk.
\newblock Hitting forbidden subgraphs in graphs of bounded treewidth.
\newblock {\em Information and Computation}, 256:62--82, 2017.

\bibitem{CYGAN201762}
Marek Cygan, Dániel Marx, Marcin Pilipczuk, and Michał Pilipczuk.
\newblock Hitting forbidden subgraphs in graphs of bounded treewidth.
\newblock {\em Information and Computation}, 256:62--82, 2017.
\newblock URL:
  \url{https://www.sciencedirect.com/science/article/pii/S0890540117300597},
  \href {https://doi.org/https://doi.org/10.1016/j.ic.2017.04.009}
  {\path{doi:https://doi.org/10.1016/j.ic.2017.04.009}}.

\bibitem{CzyzowiczKU98BipartiteContact}
Jurek Czyzowicz, Evangelos Kranakis, and Jorge Urrutia.
\newblock A simple proof of the representation of bipartite planar graphs as
  the contact graphs of orthogonal straight line segments.
\newblock {\em Inf. Process. Lett.}, 66(3):125--126, 1998.
\newblock \href {https://doi.org/10.1016/S0020-0190(98)00046-5}
  {\path{doi:10.1016/S0020-0190(98)00046-5}}.

\bibitem{demaine2005subexponential}
Erik~D Demaine, Fedor~V Fomin, Mohammadtaghi Hajiaghayi, and Dimitrios~M
  Thilikos.
\newblock Subexponential parameterized algorithms on bounded-genus graphs and
  h-minor-free graphs.
\newblock {\em Journal of the ACM (JACM)}, 52(6):866--893, 2005.

\bibitem{demaine2008linearity}
Erik~D Demaine and MohammadTaghi Hajiaghayi.
\newblock Linearity of grid minors in treewidth with applications through
  bidimensionality.
\newblock {\em Combinatorica}, 28(1):19--36, 2008.

\bibitem{deniz2018contact}
Zakir Deniz, Esther Galby, Andrea Munaro, and Bernard Ries.
\newblock On contact graphs of paths on a grid.
\newblock In {\em Graph Drawing and Network Visualization: 26th International
  Symposium, GD 2018, Barcelona, Spain, September 26-28, 2018, Proceedings 26},
  pages 317--330. Springer, 2018.

\bibitem{diestel2005graph}
Reinhard Diestel.
\newblock Graph theory 3rd ed.
\newblock {\em Graduate texts in mathematics}, 173(33):12, 2005.

\bibitem{dvorak2016strongly}
Zdenek Dvor{\'a}k and Sergey Norin.
\newblock Strongly sublinear separators and polynomial expansion.
\newblock {\em SIAM Journal on Discrete Mathematics}, 30(2):1095--1101, 2016.

\bibitem{DVORAK2019137}
Zdeněk Dvořák and Sergey Norin.
\newblock Treewidth of graphs with balanced separations.
\newblock {\em Journal of Combinatorial Theory, Series B}, 137:137--144, 2019.
\newblock \href {https://doi.org/https://doi.org/10.1016/j.jctb.2018.12.007}
  {\path{doi:https://doi.org/10.1016/j.jctb.2018.12.007}}.

\bibitem{fomin2019finding}
Fedor~V Fomin, Daniel Lokshtanov, Fahad Panolan, Saket Saurabh, and Meirav
  Zehavi.
\newblock Finding, hitting and packing cycles in subexponential time on unit
  disk graphs.
\newblock {\em Discrete \& Computational Geometry}, 62:879--911, 2019.

\bibitem{Fomin12}
Fedor~V. Fomin, Daniel Lokshtanov, and Saket Saurabh.
\newblock Bidimensionality and geometric graphs.
\newblock In {\em Proceedings of the Twenty-Third Annual ACM-SIAM Symposium on
  Discrete Algorithms}, SODA '12, page 1563–1575, USA, 2012. Society for
  Industrial and Applied Mathematics.

\bibitem{fomin2018excluded}
Fedor~V Fomin, Daniel Lokshtanov, and Saket Saurabh.
\newblock Excluded grid minors and efficient polynomial-time approximation
  schemes.
\newblock {\em Journal of the ACM (JACM)}, 65(2):1--44, 2018.

\bibitem{grigoriev2014bidimensionality}
Alexander Grigoriev, Athanassios Koutsonas, and Dimitrios~M Thilikos.
\newblock Bidimensionality of geometric intersection graphs.
\newblock In {\em SOFSEM 2014: Theory and Practice of Computer Science: 40th
  International Conference on Current Trends in Theory and Practice of Computer
  Science, Nov{\`y} Smokovec, Slovakia, January 26-29, 2014, Proceedings 40},
  pages 293--305. Springer, 2014.

\bibitem{korhonen2022single}
Tuukka Korhonen.
\newblock A single-exponential time 2-approximation algorithm for treewidth.
\newblock In {\em 2021 IEEE 62nd Annual Symposium on Foundations of Computer
  Science (FOCS)}, pages 184--192. IEEE, 2022.

\bibitem{korhonen2023induced}
Tuukka Korhonen and Daniel Lokshtanov.
\newblock Induced-minor-free graphs: Separator theorem, subexponential
  algorithms, and improved hardness of recognition.
\newblock In {\em Proceedings of the 2022 Annual ACM-SIAM Symposium on Discrete
  Algorithms (SODA)}. SIAM, 2024.

\bibitem{lee2016separators}
James~R. Lee.
\newblock {Separators in Region Intersection Graphs}.
\newblock In Christos~H. Papadimitriou, editor, {\em 8th Innovations in
  Theoretical Computer Science Conference (ITCS 2017)}, volume~67 of {\em
  Leibniz International Proceedings in Informatics (LIPIcs)}, pages 1:1--1:8,
  Dagstuhl, Germany, 2017. Schloss Dagstuhl--Leibniz-Zentrum fuer Informatik.
\newblock \href {https://doi.org/10.4230/LIPIcs.ITCS.2017.1}
  {\path{doi:10.4230/LIPIcs.ITCS.2017.1}}.

\bibitem{lichtenstein1982planar}
David Lichtenstein.
\newblock Planar formulae and their uses.
\newblock {\em SIAM journal on computing}, 11(2):329--343, 1982.

\bibitem{lok2021complete}
Daniel Lokshtanov, Fahad Panolan, Saket Saurabh, Jie Xue, and Meirav Zehavi.
\newblock Subexponential parameterized algorithms on disk graphs.
\newblock 2021.
\newblock URL:
  \url{https://sites.cs.ucsb.edu/~daniello/papers/subexpDiskOCTandFriends.pdf}.

\bibitem{lokSODA22}
Daniel Lokshtanov, Fahad Panolan, Saket Saurabh, Jie Xue, and Meirav Zehavi.
\newblock Subexponential parameterized algorithms on disk graphs (extended
  abstract).
\newblock In {\em Proceedings of the 2022 Annual ACM-SIAM Symposium on Discrete
  Algorithms (SODA)}, pages 2005--2031. SIAM, 2022.

\bibitem{Lokshtanov23Approx}
Daniel Lokshtanov, Fahad Panolan, Saket Saurabh, Jie Xue, and Meirav Zehavi.
\newblock A framework for approximation schemes on disk graphs.
\newblock In Nikhil Bansal and Viswanath Nagarajan, editors, {\em Proceedings
  of the 2023 {ACM-SIAM} Symposium on Discrete Algorithms, {SODA} 2023,
  Florence, Italy, January 22-25, 2023}, pages 2228--2241. {SIAM}, 2023.
\newblock URL: \url{https://doi.org/10.1137/1.9781611977554.ch84}, \href
  {https://doi.org/10.1137/1.9781611977554.CH84}
  {\path{doi:10.1137/1.9781611977554.CH84}}.

\bibitem{enumerateMaxCliques}
Kazuhisa Makino and Takeaki Uno.
\newblock New algorithms for enumerating all maximal cliques.
\newblock In Torben Hagerup and Jyrki Katajainen, editors, {\em Algorithm
  Theory - {SWAT} 2004, 9th Scandinavian Workshop on Algorithm Theory,
  Humlebaek, Denmark, July 8-10, 2004, Proceedings}, volume 3111 of {\em
  Lecture Notes in Computer Science}, pages 260--272. Springer, 2004.
\newblock \href {https://doi.org/10.1007/978-3-540-27810-8\_23}
  {\path{doi:10.1007/978-3-540-27810-8\_23}}.

\bibitem{novotna2021subexponential}
Jana Novotn{\'a}, Karolina Okrasa, Micha{\l} Pilipczuk, Pawe{\l}
  Rz{\k{a}}{\.z}ewski, Erik~Jan van Leeuwen, and Bartosz Walczak.
\newblock Subexponential-time algorithms for finding large induced sparse
  subgraphs.
\newblock {\em Algorithmica}, 83:2634--2650, 2021.

\bibitem{okrasa2020subexponential}
Karolina Okrasa and Pawe{\l} Rz{\k{a}}{\.z}ewski.
\newblock Subexponential algorithms for variants of the homomorphism problem in
  string graphs.
\newblock {\em Journal of Computer and System Sciences}, 109:126--144, 2020.

\bibitem{reidl2019characterising}
Felix Reidl, Fernando~S{\'a}nchez Villaamil, and Konstantinos Stavropoulos.
\newblock Characterising bounded expansion by neighbourhood complexity.
\newblock {\em European Journal of Combinatorics}, 75:152--168, 2019.

\end{thebibliography}

\end{document}